\definecolor{darkblue}{rgb}{0,0,0.38}
\definecolor{darkred}{rgb}{0.6,0,0}
\definecolor{darkgreen}{rgb}{0.1,0.35,0}
\newcommand{\aR}{\mathbb{R}}
\newcommand{\nnR}{\mathbb{R}_{\geq 0}}
\newcommand{\dcup}{ \mathbin{\dot{\cup}} }
\newcommand{\sdiff}{ \mathbin{\triangle} }
\DeclareMathOperator{\conv}{conv}
\DeclareMathOperator{\cone}{cone}
\DeclareMathOperator{\E}{\mathbb{E}}
\renewcommand{\epsilon}{\varepsilon}
\newcommand{\OPT}{\mathsf{OPT}}
\newtheoremstyle{inproof}%
  {}%
  {}%
  {\itshape}%
  {}%
  {\texttt}%
  {.}%
  { }%
  {}%
\newtheorem{thm}{Theorem}
\newtheorem{lem}[thm]{Lemma}
\newtheorem{cor}[thm]{Corollary}
\newtheorem{defn}[thm]{Definition}
\theoremstyle{inproof}
\newtheorem*{claimInProof*}{Claim}
\newcommand{\algoname}[1]{\textnormal{\textsf{#1}}}
\title{Submodular Maximization through the Lens of Linear Programming}
\author{
Simon Bruggmann\thanks{
Department of Mathematics, ETH Zurich, Switzerland.
Email: \href{mailto:simon.bruggmann@ifor.math.ethz.ch}%
{simon.bruggmann@ifor.math.ethz.ch}.
Supported by the Swiss National Science Foundation grant
200021\_165866.}
\and
Rico Zenklusen\thanks{
Department of Mathematics, ETH Zurich, Switzerland.
Email: \href{mailto:ricoz@math.ethz.ch}%
{ricoz@math.ethz.ch}.
Supported by the Swiss National Science Foundation grant
200021\_165866.
}
}
\date{}
\begin{document}

\maketitle%

\begin{abstract}
	The simplex algorithm for linear programming is based on the fact that any local optimum with respect to the polyhedral neighborhood is also a global optimum. We show that a similar result carries over to submodular maximization. In particular, every local optimum of a constrained monotone submodular maximization problem yields a $1/2$-approximation, and we also present an appropriate extension to the non-monotone setting. 
However, reaching a local optimum quickly is a non-trivial task.

Moreover, we describe a fast and very general local search procedure that applies to a wide range of constraint families, and unifies as well as extends previous methods. In our framework, we match known approximation guarantees while disentangling and simplifying previous approaches. Moreover, despite its generality, we are able to show that our local search procedure is slightly faster than previous specialized methods. 

Furthermore, we resolve an open question on the relation between linear optimization and submodular maximization; namely, whether a linear optimization oracle may be enough to obtain strong approximation algorithms for submodular maximization. We show that this is not the case by providing an example of a constraint family on a ground set of size $n$ for which, if only given a linear optimization oracle, any algorithm for submodular maximization with a polynomial number of calls to the linear optimization oracle will have an approximation ratio of only $O ( \frac{1}{\sqrt{n}} \cdot \frac{\log  n}{\log\log n} )$.
\end{abstract}

\section{Introduction}
In this work, we consider the problem of maximizing a non-negative submodular function $f$ on a ground set $E$, subject to a membership constraint with respect to a given family $\mathcal{F} \subseteq 2^E$, i.e.,
\begin{equation*}
\max_{F\in \mathcal{F}} f(F)\enspace.\tag{$\mathsf{CSFM}$}
\end{equation*}
This is the most general form of the (non-negative) constrained submodular function maximization problem, which we abbreviate by $\mathsf{CSFM}$, as highlighted above.

A function $f \colon 2^E \to \aR$ is  \emph{submodular} if for all $S,T \subseteq E$ we have  $f(S\cup T) + f(S\cap T) \leq f(S) + f(T)$. 
Such a function $f$ is called \emph{monotone} if $f(S) \leq f(T)$ for all $S \subseteq T \subseteq E$. %
Throughout, we only consider non-negative submodular functions $f$ and we assume that they are given by a value oracle that for any set $S \subseteq E$ returns the value $f(S)$.

Submodular functions arise naturally in a wide variety of settings, examples being cut functions in weighted directed or undirected graphs, 
 coverage functions, 
 and rank functions of matroids (see, e.g.,~\cite{schrijver_2003_combinatorial}).
Moreover, the above definition of submodularity is equivalent to requiring the function $f$ to satisfy $f( T \cup \{x\}) - f(T) \leq f( S \cup \{x\}) - f(S)$ for all $S \subseteq T \subseteq E$ and $x \in E \setminus T$.
Submodular functions thus capture the property of diminishing marginal returns; a phenomenon that is observed in numerous relevant maximization problems, for example, in economics, game theory, or machine learning~\cite{LehmannLehmannNisan2006,BalcanBlumMansour2008,HartlineMirrokniSundararajan2008,CalinescuChekuriPalVondrak2011,MirzasoleimanKarbasiSarkarKrause2013,WeiIyerBilmes2014}.
Due to these connections, submodular function maximization problems have attracted considerable interest during recent years.

\subsection{Background}

The   problem of unconstrained submodular function maximization ($\mathsf{USFM}$) %
 generalizes many $\mathsf{NP}$-hard problems.
For example, the famous $\mathsf{APX}$-hard Maximum Cut problem, which is also one of Karp's classical $\mathsf{NP}$-hard problems~\cite{Karp1972}, is a special case of $\mathsf{USFM}$.
This implies $\mathsf{APX}$-hardness of $\mathsf{USFM}$, and thus also of $\mathsf{CSFM}$.
Research in this area has therefore primarily focused on approximation algorithms. 

In 2012, Buchbinder, Feldman, Naor, and Schwartz~\cite{BuchbinderFeldmanNaorSchwartz2012,buchbinder_2015_tight}  presented an efficient randomized algorithm with an approximation guarantee of $1/2$ for $\mathsf{USFM}$. %
This is optimal in the value oracle model since  Feige, Mirrokni, and Vondr\'ak~\cite{FeigeMirrokniVondrak2007,feige_2011_maximizing} showed that there is no algorithm with an  approximation factor better than $1/2$ that uses a subexponential number of value queries (even if restricted to symmetric submodular functions).
This hardness is information-theoretic and thus it does not rely on complexity-theoretic assumptions (such as $\mathsf{P} \neq \mathsf{NP}$).
In 2016, Buchbinder and Feldman~\cite{BuchbinderFeldman2016} showed that the algorithm in~\cite{buchbinder_2015_tight} can be derandomized, thus obtaining a deterministic $1/2$-approximation for $\mathsf{USFM}$.

The study of $\mathsf{CSFM}$ started in the 70's with the work of Fisher, Nemhauser, and Wolsey~\cite{NemhauserWolseyFisher1978,FisherNemhauserWolsey1978,NemhauserWolsey1978}. Among other results, they proved that for monotone $\mathsf{CSFM}$ under a single cardinality constraint, a greedy algorithm yields a $(1 - 1/e)$-approximation. %
For the same problem, they also showed that, for any $\epsilon > 0$, an exponential number of oracle calls is needed to obtain a $(1 - 1/e + \epsilon)$-approximation. 
Again, this hardness is purely information-theoretic. %
Later, algorithms with an optimal $(1-1/e)$-approximation guarantee have been obtained for monotone $\mathsf{CSFM}$ under a single knapsack constraint by Sviridenko~\cite{Sviridenko2004} and a single matroid constraint by Vondr\'ak~\cite{Vondrak2008} (see also~\cite{CalinescuChekuriPalVondrak2011}); moreover, a nearly optimal $(1-1/e-\epsilon)$-approximation (for any constant $\epsilon >0$) was first developed for the intersection of constantly many knapsack constraints by Kulik, Shachnai, and Tamir~\cite{kulik_2013_approximations}, and subsequently for the intersection of a single matroid constraint with constantly many knapsack constraints by Chekuri, Vondr\'ak, and Zenklusen~\cite{chekuri_2009_dependent,ChekuriVondrakZenklusen2010}.

Most current algorithms for $\mathsf{CSFM}$ are either greedy algorithms~\cite{NemhauserWolseyFisher1978,FisherNemhauserWolsey1978,NemhauserWolsey1978, ConfortiCornuejols1984, Sviridenko2004, GuptaRothSchoenebeckTalwar2010, FeldmanHarshawKarbasi2017}, 
local search methods~\cite{LeeMirrokniNagarajanSviridenko2009,lee_2010_maximizing,LeeSviridenkoVondrak2010,feige_2011_maximizing,FeldmanNaorSchwartzWard2011,Ward2012}, 
or relaxation-and-rounding procedures~\cite{ AgeevSviridenko2004, ChekuriVondrakZenklusen2010, FeldmanNaorSchwartz2011a, FeldmanNaorSchwartz2011b, CalinescuChekuriPalVondrak2011, ChekuriVondrakZenklusen2011, chekuri_2014_submodular}. 
Especially the last category comprises algorithms that can be used for a wide range of different constraint types and combinations thereof, instead of being highly tailored to a specific constraint family.
Almost all relaxation-and-rounding approaches in $\mathsf{CSFM}$ make use of the multilinear extension
 $F\colon [0,1]^E \rightarrow \mathbb{R}_{\geq 0}$ of the given submodular function $f$.
 For a point $x\in [0,1]^E$, it is defined by the expected value $F(x) \coloneqq \E[f(R(x))]$, where $R(x)$ is a random subset of $E$ containing each element $e\in E$ independently with probability $x(e)$. %
One main reason for using the multilinear extension $F$ of $f$ is that it can be approximately maximized (within a constant factor) over any down-closed solvable polytope $P$, i.e., a polytope over which one can efficiently solve linear programs. 
 This was shown for monotone $f$ by Vondr\'ak~\cite{Vondrak2008} and C{\u{a}}linescu, Chekuri, P\'al, and Vondr\'ak~\cite{CalinescuChekuriPalVondrak2011}, and for non-monotone $f$ (with a weaker approximation guarantee) first by Chekuri, Vondr\'ak, and Zenklusen~\cite{chekuri_2014_submodular}.
 Later, stronger constant factors were obtained through elegant techniques by Feldman, Naor, and Schwarz~\cite{FeldmanNaorSchwartz2011b}, Ene and Nguy$\tilde{\hat{\mathrm{e}}}$n~\cite{ene_2016_constrained}, 
  and Buchbinder and Feldman~\cite{buchbinder_2016_constrained}. In these results, the requirement of $P$ being solvable is only used in a black box fashion (by assuming that a linear optimization oracle is given for $P$).
This created hope that there may be a deeper connection between the existence of a linear optimization oracle and the possibility to obtain strong approximations for submodular maximization problems.
However, it is important to note that a good approximation for maximizing the multilinear extension $F$ is, in general, not enough to get a good approximation for maximizing the submodular function $f$.
This is because the former will usually yield a fractional point that has to be rounded to an integral point afterwards.
While this can be done without too much loss for some types of constraints,  e.g., by using contention resolution schemes (see~\cite{chekuri_2014_submodular}), the multilinear extension has unbounded integrality gap for other types of constraints.

Another strong connection between linear and submodular maximization surfaces in the matroid secretary problem. 
More precisely, Feldman and Zenklusen~\cite{FeldmanZenklusen2015} showed that any constant-competitive algorithm for the linear matroid secretary problem, even restricted to a particular matroid class, can be used in a black box manner to get a constant-competitive algorithm for the submodular matroid secretary problem over the same matroid class.

\subsection{Main results and techniques}
\label{subsec:mainResults}

The primary goal of this work is twofold. We want to obtain a better understanding of the relation between linear and submodular maximization, and present a unifying view on local search procedures, which not only encompasses known results but also allows for extensions.
In particular, we investigate whether an analogue to the simplex algorithm may also work to obtain good approximations for submodular maximization. Moreover, we show that access to a linear optimization oracle alone, for a given constraint set, is not sufficient to obtain strong approximation guarantees for constrained submodular maximization.

The main part of this work is about local search in the context of submodular maximization.
Motivated by the simplex algorithm, we consider local search from a geometric point of view.
This enables us to generalize previously proposed local search procedures and to simplify their analyses.
In fact, we show that local search always yields good approximations for submodular maximization  as long as the notion of neighborhood used satisfies certain geometric properties, which are fulfilled when, like in the simplex algorithm, we use the polyhedral adjacency structure to define neighborhoods.
Using this general viewpoint, we are able to match currently best approximation ratios for $\mathsf{CSFM}$ over $k$-intersection systems~\cite{lee_2010_maximizing,LeeSviridenkoVondrak2010}, which are the intersection of $k$ matroids on the same ground set, and $k$-exchange systems~\cite{FeldmanNaorSchwartzWard2011,Feldman2013} (see Appendix~\ref{append:kExchange} for a formal definition of $k$-exchange systems) in a unifying framework, and with even a slightly lower running time bound. Moreover, our framework extends beyond these settings.

Our local search results are based on the following key geometric statement, which we will refine later.
It provides a sufficient condition under which one can derive good approximation guarantees for a feasible solution $S$. We will reach conditions of this type with our local search techniques. To build up intuition, one can think of $A_1, \ldots , A_k$ as neighbors of $S$ with respect to some neighborhood structure, like the polyhedral neighborhood %
of some feasibility family $\mathcal{F}$,%
\footnote{The polyhedral neighborhood of $\mathcal{F} \subseteq 2^E$ is defined through its \emph{combinatorial polytope}
$P_{\mathcal{F}} \coloneqq \conv\big(\{\chi^F \mid F\in \mathcal{F}\}\big)$, where $\chi^A\in \{0,1\}^E$ for $A\subseteq E$ is the \emph{characteristic vector} of $A$. A set $F\in \mathcal{F}$ is a \emph{(polyhedral) neighbor} of $S\in \mathcal{F}$ if $\chi^F$ and $\chi^S$ are two adjacent vertices of $P_{\mathcal{F}}$ or if $F = S$.}
and of $T$ as an optimal solution $\OPT$ to the problem we consider.
\begin{thm}\label{thm:intro-main}
	Let $f \colon 2^E \to \nnR$ be a non-negative submodular function on some ground set $E$.
	If $S$, $T$, and $A_1, \ldots, A_k$ are subsets of $E$ such that
	\begin{align*}
	\chi^T \in
	  \Big( \chi^S + \cone\big(\{ \chi^{A_i}-\chi^S \mid i \in [k]  \}\big)     \Big)  \enspace,
	\end{align*}
	then there exist coefficients $\lambda_i \geq 0$ for $i \in [k]$ such that at most $|E|$ many of them are non-zero and  such that
	\begin{align*}
	 2 \cdot f(S) + \sum_{i=1}^k \lambda_i \big(f(A_i) - f(S)\big) \geq f(S \cup T) + f(S \cap T)   \enspace.
	\end{align*}
\end{thm}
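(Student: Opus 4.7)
The plan is to prove the inequality using the Lovász extension $\hat f \colon \nnR^E \to \aR$ of $f$, which, since $f$ is submodular, is positively homogeneous and subadditive, and which satisfies $\hat f(\chi^A) = f(A)$ for every $A \subseteq E$. The coefficients $\lambda_i$ will simply be (a Carath\'eodory-reduced version of) any nonnegative coefficients $\mu_i$ witnessing the hypothesis $\chi^T - \chi^S = \sum_i \mu_i (\chi^{A_i} - \chi^S)$.

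First I would invoke Carath\'eodory's theorem for cones in $\aR^E$ to reduce to the case where at most $|E|$ of the $\mu_i$ are nonzero, and set $\lambda_i := \mu_i$ and $M := \sum_i \mu_i$. Expanding the hypothesis yields the pointwise identity
\begin{align*}
\sum_i \mu_i \chi^{A_i} = \chi^T + (M-1)\chi^S.
\end{align*}
Positive homogeneity and subadditivity of $\hat f$ give $\sum_i \mu_i f(A_i) \geq \hat f\bigl(\chi^T + (M-1)\chi^S\bigr)$, so it suffices to prove
\begin{align*}
\hat f\bigl(\chi^T + (M-1)\chi^S\bigr) \geq f(S \cup T) + f(S \cap T) + (M-2) f(S),
\end{align*}
because then subtracting $M f(S) = \sum_i \mu_i f(S)$ from both sides produces the stated bound.

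To prove this last inequality I would use the pointwise identity $\chi^T + \chi^S = \chi^{S \cup T} + \chi^{S \cap T}$ to rewrite $\chi^T + (M-1)\chi^S = \chi^{S \cup T} + \chi^{S \cap T} + (M-2)\chi^S$. If $M \geq 2$, this is a nonnegative combination along the chain $S \cap T \subseteq S \subseteq S \cup T$, so the Lov\'asz extension evaluates exactly to $f(S \cup T) + f(S \cap T) + (M-2) f(S)$. If $M \leq 2$, the coefficient on $\chi^S$ is nonpositive, so I would instead apply subadditivity in the reversed direction: with $x := \chi^T + (M-1)\chi^S$, subadditivity and positive homogeneity give $\hat f(x) + (2-M) f(S) \geq \hat f\bigl(x + (2-M)\chi^S\bigr) = \hat f(\chi^{S \cup T} + \chi^{S \cap T}) = f(S \cup T) + f(S \cap T)$, using the chain $S \cap T \subseteq S \cup T$ for the last equality; rearranging yields the required inequality.

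The only genuine obstacle I anticipate is the regime $M < 2$: the natural chain decomposition of $\chi^T + (M-1)\chi^S$ breaks down because its coefficient on $\chi^S$ becomes negative, so direct evaluation of $\hat f$ is unavailable and one must restore nonnegativity by adding $(2-M)\chi^S$ and invoking subadditivity in the opposite direction. Everything else is standard: Carath\'eodory is routine, and the Lov\'asz extension packages submodularity in precisely the form needed to convert the conic identity on characteristic vectors into the desired inequality on $f$-values.
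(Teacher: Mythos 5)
Your strategy coincides with the paper's proof of the more general Theorem~\ref{thm:cone-result} (of which this statement is the case $\alpha=1$): apply Carath\'eodory, rewrite the conic hypothesis as $\sum_i \mu_i \chi^{A_i} = \chi^T + (M-1)\chi^S$, push the Lov\'asz extension through this identity via convexity/subadditivity, and evaluate it along the chain $S\cap T \subseteq S \subseteq S\cup T$ with a case split on $M$. Your two-case split, which handles $M\le 2$ by adding $(2-M)\chi^S$ and applying subadditivity in the reverse direction, cleanly merges the paper's cases (ii) and (iii); it is sound, since $x=\chi^T+(M-1)\chi^S=\sum_i\mu_i\chi^{A_i}\ge 0$ even when $M<1$, so $\hat f(x)$ is well defined on $\nnR^E$.

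There is one genuine, though easily repaired, flaw in the setup. A positively homogeneous $\hat f$ satisfies $\hat f(0)=0$, so it cannot satisfy $\hat f(\chi^A)=f(A)$ for $A=\emptyset$ unless $f(\emptyset)=0$, and the theorem permits $f(\emptyset)>0$. Concretely, when $S\cap T=\emptyset$ your exact evaluation $\hat f\big(\chi^{S\cup T}+\chi^{S\cap T}+(M-2)\chi^S\big)=f(S\cup T)+f(S\cap T)+(M-2)f(S)$ (and likewise $\hat f(\chi^{S\cup T}+\chi^{S\cap T})=f(S\cup T)+f(S\cap T)$ in the $M\le2$ branch) is off: the homogeneous extension returns $f(S\cup T)+(M-2)f(S)$, which falls short of what you need by $f(\emptyset)$. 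The fix is one sentence: the claimed inequality is invariant under replacing $f$ by $f-f(\emptyset)$ (both sides decrease by $2f(\emptyset)$), so you may assume $f(\emptyset)=0$, after which your $\hat f$ exists, no step uses non-negativity, and the argument goes through verbatim. The paper sidesteps this differently, by normalizing by $M$ so as to land in $[0,1]^E$ and using the expectation form of the Lov\'asz extension, which charges the leftover probability mass to $f(\emptyset)$.
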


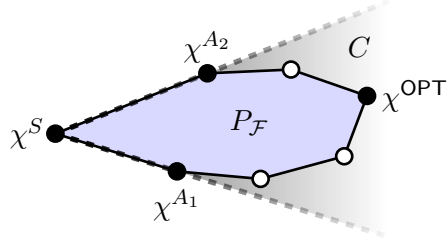
\begin{figure}
	\centering
\begin{tikzpicture}
	[every path/.append style={line width=1}]

	\pgfdeclarelayer{fg}
	\pgfdeclarelayer{bg}
	\pgfsetlayers{bg,main,fg}

	\coordinate (S) at (0,0);
	\coordinate (A1) at (1.6,-0.5);
	\coordinate (A2) at (2,0.8);
	\coordinate (B1) at (2.7,-0.6);
	\coordinate (B2) at (3.8,-0.3);
	\coordinate (B3) at (3.1,0.85);
	\coordinate (OPT) at (4.1, 0.5);	
	
	\coordinate (AA1) at ($ ($(S)$) + ($1.25*2.2*($ (A1)-(S) $)$)$);
	\coordinate (AA2) at ($ ($(S)$) + ($2.2*($ (A2)-(S) $)$)$);

\begin{pgfonlayer}{fg}
	\tikzstyle{coneline}=[dashed, path fading=east, black, line width=2pt]
	\draw[coneline] (S) to (AA1);
	\draw[coneline] (S) to (AA2);
\end{pgfonlayer}

	\draw[draw=none, top color=black!60, bottom color=white , shade, shading angle=90] ($(AA1)$) -- ($(S)$) -- ($(AA2)$);
	
	\fill[blue!15] (S) -- (A1) -- (B1) -- (B2) -- (OPT) -- (B3) -- (A2) -- (S);
	\draw (S) -- (A1) -- (B1) -- (B2) -- (OPT) -- (B3) -- (A2) -- (S);
	
	\node[left] at (S) {$\chi^S$};
	\node[below=0.1] at (A1) {$\chi^{A_1}$};	
	\node[above] at (A2) {$\chi^{A_2}$};	
	\node at (B1) {};
	\node at (B2) {};	
	\node at (B3) {};	
	\node[right=0.05] at (OPT) {$\chi^{\OPT}$};	
	
	\node at (2.55,0.15) {$P_\mathcal{F}$};	
	\node at (4,1.15) {$C$};	
	
	\begin{scope}[every node/.style={draw,shape=circle,minimum size=6,inner sep=0, line width=1}]

	\tikzstyle{full}=[fill=black]
	\tikzstyle{hollow}=[fill=white]

	\node[full] at (S) {};
	\node[full] at (A1) {};	
	\node[full] at (A2) {};
	\node[hollow] at (B1) {};
	\node[hollow] at (B2) {};	
	\node[hollow] at (B3) {};	
	\node[full] at (OPT) {};		
	
	\end{scope}
	
\end{tikzpicture}	
 	\caption{The vertex $\chi^S$ has the polyhedral neighbors $\chi^{A_1}$, $\chi^{A_2}$, and itself. The shifted cone $C = \chi^S + \cone(\{ \chi^{A_1} - \chi^S, \chi^{A_2} - \chi^S, \chi^S - \chi^S \})$ thus contains the whole polytope $P_\mathcal{F}$, and therefore also $\chi^{\OPT}$. 
	This basic property clearly holds for any polytope, even though we only need it for $\{0,1\}$-polytopes. For better illustration, the above graphic exemplifies the property on a polytope that is not $\{0,1\}$.}
	\label{fig:intro-cone}
\end{figure}

To illustrate how this result can be used, consider a \emph{monotone} non-negative submodular function $f \colon 2^E \rightarrow \mathbb{R}_{\geq 0}$ that we want to maximize over some feasibility family $\mathcal{F} \subseteq 2^E$, and let $P_{\mathcal{F}}=\conv(\{\chi^F \mid F\in \mathcal{F} \})$ be the convex hull of all characteristic vectors of $\mathcal{F}$, i.e., the combinatorial polytope corresponding to $\mathcal{F}$.
Assume that $S \in \mathcal{F}$ is a local optimum with respect to the polyhedral neighborhood of $\mathcal{F}$. We claim that the above theorem implies that $S$ is a $1/2$-approximation to the problem $\max_{F\in \mathcal{F}} f(F)$.
To see this, let $A_1,\ldots, A_k\in \mathcal{F}$ be the neighboring sets of $S$ and let $T=\OPT \in \mathcal{F}$ be an optimal solution. Now, the shifted cone
\begin{equation*}
C \coloneqq \chi^S + \cone\big(\{\chi^{A_i} - \chi^S \mid i\in [k]\}\big)
\end{equation*}
is the polyhedron defined by all the constraints in a facet-description of $P_{\mathcal{F}}$ that are tight at $\chi^S$. Hence, it holds that $P_{\mathcal{F}}\subseteq C$ (see Figure~\ref{fig:intro-cone} for an illustration).
In particular, this implies that $\chi^{\OPT}\in P_{\mathcal{F}}\subseteq C$, and thus, by Theorem~\ref{thm:intro-main}, there exist coefficients $\lambda _i \geq 0$ for $i \in [k]$ such that
\begin{equation}\label{eq:introExampleMonotone}
2 \cdot f(S) + \sum_{i=1}^k \lambda_i (f(A_i) - f(S))
  \geq f(S\cup \OPT) + f(S\cap \OPT) \geq f(\OPT)\enspace .
\end{equation}
Above, the second inequality follows by non-negativity and monotonicity of $f$, which allows for dropping the term $f(S\cap \OPT)\geq 0$ and using $f(S\cup \OPT) \geq f(\OPT)$, respectively.
Finally, the assumption that $S$ is a local optimum with respect to the polyhedral neighborhood of $\mathcal{F}$ ensures that $f(A_i) \leq f(S)$ for all $i\in [k]$, thus implying by~\eqref{eq:introExampleMonotone} that $S$ is a $1/2$-approximation.
Hence, whereas in linear programming any locally optimal set with respect to the polyhedral neighborhood is globally optimal, we obtain the following counterpart for monotone submodular maximization.
\begin{cor}\label{cor:localOptPolyhedron}
For the problem of constrained monotone  submodular maximization, any locally optimal set with respect to the polyhedral neighborhood is a $1/2$-approximation.
\end{cor}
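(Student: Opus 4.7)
My plan is to derive the corollary as a direct instantiation of Theorem~\ref{thm:intro-main}, following the argument already outlined in the paragraph preceding the corollary. Let $f$ be monotone non-negative submodular, let $S \in \mathcal{F}$ be a local optimum with respect to the polyhedral neighborhood, and let $A_1, \ldots, A_k \in \mathcal{F}$ be its polyhedral neighbors (so by definition $f(A_i) \leq f(S)$ for all $i$). Let $\OPT \in \mathcal{F}$ attain $\max_{F \in \mathcal{F}} f(F)$. The goal is to verify the hypothesis of Theorem~\ref{thm:intro-main} with $T = \OPT$.

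The key geometric step is to establish that $\chi^{\OPT}$ lies in the shifted cone
\[
C \coloneqq \chi^S + \cone\bigl(\{\chi^{A_i} - \chi^S \mid i \in [k]\}\bigr).
\]
I would argue this by taking a facet description of $P_{\mathcal{F}}$ and considering the subsystem of inequalities tight at $\chi^S$. The cone generated by $\{\chi^{F} - \chi^S \mid F \in \mathcal{F}\}$ (or equivalently, by the edge directions at $\chi^S$, which correspond to the polyhedral neighbors $A_1,\ldots,A_k$) is exactly the recession cone at the vertex $\chi^S$, which is a superset of $P_{\mathcal{F}} - \chi^S$. Consequently $P_{\mathcal{F}} \subseteq C$, so in particular $\chi^{\OPT} \in C$.

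With the hypothesis of Theorem~\ref{thm:intro-main} verified, the theorem supplies coefficients $\lambda_i \geq 0$ such that
\[
2f(S) + \sum_{i=1}^{k} \lambda_i \bigl(f(A_i) - f(S)\bigr) \;\geq\; f(S \cup \OPT) + f(S \cap \OPT).
\]
The remaining steps are routine: by monotonicity $f(S \cup \OPT) \geq f(\OPT)$, by non-negativity $f(S \cap \OPT) \geq 0$, and by local optimality each term $\lambda_i(f(A_i) - f(S))$ is non-positive. Dropping these non-positive terms on the left yields $2f(S) \geq f(\OPT)$, which is exactly the claimed $1/2$-approximation guarantee.

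The only real obstacle is the polyhedral fact that the edge directions at a vertex of a polytope generate a cone containing the polytope (relative to that vertex). This is a standard result, but it needs to be invoked carefully: in particular, one uses that $\chi^S$ is a vertex of the $\{0,1\}$-polytope $P_{\mathcal{F}}$ and that polyhedral neighbors of $S$ correspond to edges of $P_{\mathcal{F}}$ incident to $\chi^S$. Everything else is a mechanical application of Theorem~\ref{thm:intro-main} combined with monotonicity, non-negativity, and the definition of local optimality.
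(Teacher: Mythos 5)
Your proposal is correct and matches the paper's own derivation, which appears in the paragraph preceding the corollary: both verify the hypothesis of Theorem~\ref{thm:intro-main} with $T=\OPT$ by noting that the shifted cone at the vertex $\chi^S$ generated by the edge directions (equivalently, the polyhedron cut out by the facets tight at $\chi^S$) contains all of $P_{\mathcal{F}}$, and then drop the non-positive terms $\lambda_i(f(A_i)-f(S))$ using local optimality together with monotonicity and non-negativity of $f$. No gaps.
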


In particular, Corollary~\ref{cor:localOptPolyhedron} implies one of the well-known results of Fisher, Nemhauser, and Wolsey~\cite{FisherNemhauserWolsey1978} which says that for monotone $\mathsf{CSFM}$ over a matroid constraint, any local optimum with respect to the polyhedral neighborhood is a $1/2$-approximation.
While the proof in~\cite{FisherNemhauserWolsey1978} makes use of exchange properties of matroids, we infer from the above that there is in fact no need to rely on such combinatorial structures. Indeed, Corollary~\ref{cor:localOptPolyhedron} shows that the statement is true for arbitrary feasibility families $\mathcal{F}$.

In addition, a slight generalization of Theorem~\ref{thm:intro-main} enables us to provide efficient local search procedures for the monotone \emph{and non-monotone} case for a large class of settings. This also allows for replicating existing results in~\cite{lee_2010_maximizing,LeeSviridenkoVondrak2010, FeldmanNaorSchwartzWard2011,Feldman2013} in a unified framework.

While the above results further increase the hope to obtain a strong link between linear and submodular optimization, we show that being able to optimize linear functions over some constraint family does not suffice to get good approximations for submodular maximization over the same constraints.
\begin{thm}\label{thm:intro-hardness}
	There exists no algorithm for constrained monotone submodular maximization that uses only a polynomial number of linear optimization oracle calls and achieves an approximation ratio of $\omega ( \frac{1}{\sqrt{n}} \cdot \frac{\log  n}{\log\log n} )$, where $n$ is the size of the ground set. This result also holds for randomized algorithms, where, as usual, the approximation ratio only has to be achieved in expectation.
\end{thm}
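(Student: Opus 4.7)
The plan is to prove the lower bound via Yao's minimax principle, by constructing a distribution over hard instances on which every deterministic algorithm using only $\mathrm{poly}(n)$ LP oracle calls must achieve an expected approximation ratio matching the claimed bound. I would sample a ``planted'' object $T^*$ from a large codebook (e.g., a random subset of $E$ of size $m\approx \sqrt{n}\log n /\log\log n$), and build the instance $(f,\mathcal{F})$ so that $T^*$ is the unique near-optimum of $f$ over $\mathcal{F}$, yet essentially invisible to the LP oracle.

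The key geometric property required is that, for every linear objective $c$ and every choice of $T^*$, there exists a ``decoy'' feasible set $F\in\mathcal{F}\setminus\{T^*\}$ with $c(F)\ge c(T^*)$, so an adversarial (or canonical) tie-breaking rule in the LP oracle may consistently return decoys and never expose $T^*$. I would realize this by taking $\mathcal{F}$ as the union of $\{T^*\}$ with a rich, $T^*$-oblivious decoy family $\mathcal{F}_0$, chosen so that, with high probability over the planting, every linear query is dominated by some decoy. The monotone submodular function $f$ would be designed to concentrate its ``valuable'' mass on $T^*$---for instance via a capped coverage function whose underlying set system is tailored to $T^*$---giving $f(T^*)=\Omega(m)$ while keeping $f(F)=O(1)$ on all decoys because of their limited overlap with the planted structure.

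Given this construction, the information-theoretic closure is standard: all LP responses lie in the $T^*$-independent decoy family, so a $\mathrm{poly}(n)$-length transcript carries only $\mathrm{poly}(n)$ bits about $T^*$, whereas the codebook has entropy $\omega(\mathrm{poly}(n))$. Hence the algorithm's output is with overwhelming probability a decoy, contributing $\E[f(S)]=O(1)$, while $\OPT=\Omega(\sqrt{n}\log n/\log\log n)$, yielding the claimed approximation ratio.

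The main obstacle is the joint design of $\mathcal{F}_0$ and $f$: we need a decoy family rich enough that \emph{every} linear query---even an adaptively chosen one---is dominated by some decoy, yet $f$-wise so unrelated to the planted $T^*$ that every decoy has $f$-value $O(1)$. Since $0/1$-polytope vertices cannot be hidden by mere convex-hull containment (any $0/1$ point in a $0/1$ polytope is automatically a vertex), this domination must be engineered direction-by-direction through tie-breaking, and the precise $\log n/\log\log n$ factor will emerge from carefully balancing the size of $\mathcal{F}_0$, the codebook of candidate plantings, and the concentration of the coverage function defining $f$.
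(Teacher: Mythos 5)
Your high-level plan---plant a hidden optimum $T$, surround it with a $T$-oblivious decoy family, fix a tie-breaking rule so the oracle returns decoys unless $T$ strictly dominates them, and union-bound over queries---is exactly the strategy of the paper, and your remarks about tie-breaking and about $0/1$-vertices not being hideable by mere convex-hull containment are on point. However, the proposal stops precisely where the proof begins: the ``main obstacle'' you name, namely the joint design of $\mathcal{F}_0$ and $f$ together with a proof that \emph{every} nonnegative linear query is dominated by a decoy with probability $1-n^{-\Omega(1)}$ over the planting, is the entire content of the argument, and nothing in the proposal supplies it. The paper's construction partitions $E$ into $\sqrt{n}$ blocks of size $\sqrt{n}$, takes $T$ to be a uniformly random transversal (one random element per block), lets $f$ count the number of blocks hit, and takes as decoys all sets hitting at most $\beta=\Theta(\log n/\log\log n)$ blocks. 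The hiding lemma is then proved by normalizing $w$ within each block and applying a Chernoff bound to $w(T)=\sum_i w(t_i)$, a sum of $\sqrt{n}$ independent $[0,1]$-valued variables with total mean at most $1$, yielding $\Pr[w(T)>\beta]\le (e/\beta)^{\beta}\le n^{-d-1}$.

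This points to two concrete quantitative problems with your sketch. First, your parameter regime ($f(T^*)=\Omega(\sqrt{n}\log n/\log\log n)$ with decoys of value $O(1)$) would prove a strictly \emph{stronger} bound than the theorem claims, and the Chernoff trade-off above indicates why it is not attainable along these lines: to drive the per-query failure probability below $n^{-d-1}$ one must allow the decoys to reach $f$-value $\Omega(\log n/\log\log n)$. The $\log n/\log\log n$ factor in the theorem comes from the value cap on the decoys, not from the size of the planted set (which is $\sqrt{n}$). Second, the ``information-theoretic closure'' via entropy counting does not work as stated: a codebook of transversals (or of $m$-subsets of $E$) has only $O(\sqrt{n}\,\mathrm{polylog}(n))$ bits of entropy, while a $\mathrm{poly}(n)$-query transcript of subsets of $E$ can carry $\mathrm{poly}(n)\cdot n$ bits, so the inequality you invoke goes the wrong way. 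The correct closure is not an entropy count but a union bound: each query's answer is a $T$-independent function of the query unless the rare domination event occurs, and $O(n^{d})\cdot n^{-d-1}=O(1/n)$ bounds the probability that any query ever reveals a special feasible set, after which the expected value of the output is at most $\beta+O(1/n)\cdot\sqrt{n}$ against an optimum of $\sqrt{n}$.
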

Theorem~\ref{thm:intro-hardness}  provides a strong separation between the difficulty of optimizing a linear function  and approximately maximizing a monotone submodular function over a given set of constraints.
In the example we construct to prove the above theorem, the family of feasible sets is down-closed and the submodular function $f \colon 2^E\rightarrow \mathbb{R}_{\geq 0}$ is even given explicitly.
The difficulty in our example therefore stems primarily from the family of feasible sets and the fact that we can only learn about feasible sets through the linear optimization oracle.

\subsection{Organization of the paper}

We start with a brief section on preliminaries, where we describe the so-called Lov\'asz extension of a submodular function. This extension plays a key role in our derivations.
In Section~\ref{sec:LS}, we show a generalization of Theorem~\ref{thm:intro-main}, and describe our local search procedures and implications thereof. 
Finally, Theorem~\ref{thm:intro-hardness} is proved in Section~\ref{sec:Hardness}.

\section{Preliminaries}

We first provide some basics on the Lov\'asz extension of a submodular function, which is a crucial tool in our approach. 
We introduce the extension for any set function without assuming submodularity. Let $f\colon2^E \rightarrow \mathbb{R}$  be a set function on a finite ground set $E$  of cardinality $n \coloneqq |E|$. Then, the Lov\'asz extension $f_{\mathsf{L}}$ of $f$ is the function $f_{\mathsf{L}}\colon [0,1]^E \rightarrow \mathbb{R}$ given by the expectation
\begin{equation*}
f_{\mathsf{L}}(x) \coloneqq \E[f(\{e\in E \mid x(e) \geq Z\})]\enspace,
\end{equation*}
where $x \in [0,1]^E$ and $Z$ is a uniform random variable within $[0,1]$. Equivalently, if we define $\{x_1, \ldots, x_n\} = \{x(e) \mid e\in E\}$ to be the different entries of $x$ ordered increasingly, i.e., $0\leq x_1 \leq x_2 \leq \ldots \leq x_n \leq 1$, then
\begin{equation}\label{eq:lovaszExtSlices}
f_{\mathsf{L}}(x) \coloneqq \sum_{i=1}^{n+1} (x_i - x_{i-1})\cdot f(\{e\in E \mid x(e) \geq x_i\})\enspace,
\end{equation}
where we set $x_0=0$ and $x_{n+1}=1$.
Clearly,~\eqref{eq:lovaszExtSlices} shows that the Lov\'asz extension of a function can be efficiently evaluated.

Moreover, its significance in submodular optimization stems from the fact that it captures the convex closure of a submodular function. More precisely, for any set function $f\colon 2^E\rightarrow \mathbb{R}$, its convex closure $f^-$ is defined as 
\begin{equation}\label{eq:convexClosure}
f^-(x) \coloneqq \min \Bigg\{\sum_{F\subseteq E} \alpha_F\cdot f(F)
 \;\Bigg\vert\;
 \alpha_F \geq 0 \text{ for } F\subseteq E,
  \sum_{F\subseteq E}\alpha_F = 1,
  \sum_{F\subseteq E} \alpha_F \cdot \chi^F = x
  \Bigg\}\enspace,
\end{equation}
 where $x\in [0,1]^E$. 
The convex closure is the point-wise largest convex extension of $f$. 
A crucial property which is heavily exploited in submodular minimization is that the Lov\'asz extension $f_{\mathsf{L}}$ is the same as the convex closure $f^-$ if and only if $f$ is submodular. This implies that for submodular functions, the convex closure can be evaluated efficiently. Moreover, it also shows that for submodular $f$, the Lov\'asz extension $f_{\mathsf{L}}$ is a convex function. This property is at the heart of many efficient submodular function minimization algorithms that use convex minimization techniques to minimize $f_{\mathsf{L}}$.

While this connection was mostly (though not exclusively) employed for submodular minimization problems, we too will regularly use the Lov\'asz extension and the above-highlighted two different ways to look at it, given by~\eqref{eq:lovaszExtSlices} and~\eqref{eq:convexClosure}, to derive inequalities that are crucial in our work. 
Moreover, the way we employ the Lov\'asz extension also allows for providing simpler and more elegant proofs for known inequalities that have been used in the context of submodular maximization.

\section{Generalized local search}\label{sec:LS}

In this section, we start by discussing geometric results underlying our local search approaches, and in particular prove (a generalization of) Theorem~\ref{thm:intro-main}. We then show how to derive local search procedures out of it and expand on applications of the suggested procedures. Compared to previously described local search algorithms, the analysis of our method is arguably simpler. It separates the geometric properties from submodularity, and therefore makes the underlying structure more visible.
In addition, our analysis also leads to a slightly better running time.

\subsection{Geometric results and good neighborhoods}

Throughout this section, $E$ is a finite ground set of cardinality $n\coloneqq |E|$, $\mathcal{F}\subseteq 2^E$ is a non-empty family of what we call \emph{feasible} sets, and $f\colon 2^E\rightarrow \mathbb{R}_{\geq 0}$ is a non-negative submodular function. We are interested in the $\mathsf{CSFM}$ problem $\max_{F\in \mathcal{F}}f(F)$. As usual, $f$ is assumed to be given through a value oracle, which, for any set $F\subseteq E$, returns the value $f(F)$.
There are different ways how the family of feasible sets $\mathcal{F}$ can be given.
Usually, we assume that we get a membership oracle for $\mathcal{F}$ which tells us for a set $F \subseteq E$ whether it is a member of $\mathcal{F}$ or not;
we will be explicit whenever we use different assumptions.

We start by proving the following generalization of Theorem~\ref{thm:intro-main}, which lies at the heart of our local search results.
\begin{thm}\label{thm:cone-result}%
\footnote{%
One can actually prove a slightly stronger version of this theorem, parameterized by two parameters $\alpha \geq 1$ and $0\leq \beta \leq \alpha-1$, where the main condition is replaced by
$\frac{1}{\alpha} \cdot (\chi^T + \beta \cdot \chi^{S\cap T}) \in \big(\chi^S + \cone(\{\chi^{A_i}-\chi^S \mid i\in [k]\})\big)$, and the implications~\eqref{eq:cone-result-i} and~\eqref{eq:cone-result-ii} become
$\chi^T + \beta \cdot \chi^{S\cap T} = \alpha \cdot \chi^S + \sum_{i=1}^k \lambda_i (\chi^{A_i} - \chi^S)$ and $(\alpha+1) \cdot f(S) + \sum_{i=1}^k \lambda_i (f(A_i)-f(S)) \geq f(S\cup T) + (\beta+1) \cdot f(S\cap T)$, respectively. Clearly, Theorem~\ref{thm:cone-result} corresponds to $\beta=\alpha-1$. However, since we do not need this more general statement, we restrict ourselves to the case $\beta=\alpha -1$.%
}
	Let $f \colon 2^E \to \nnR$ be a submodular function on some ground set $E$, and let $\alpha \geq 1$.
	If $S$, $T$, and $A_1, \ldots, A_k$ are subsets of $E$ such that
	\begin{align*}
		\frac{1}{\alpha} \cdot \big(  \chi^T + (\alpha -1) \cdot \chi^{S \cap T} \big)  \in
		\Big( \chi^S + \cone\big(\{ \chi^{A_i}-\chi^S \mid i \in [k]  \}\big)     \Big)  \enspace,
	\end{align*}
	 then there exist coefficients $\lambda_i \geq 0$ for $i \in [k]$ such that at most $|E|$ many of them are non-zero and such that
	 \begin{gather}
	 \chi^T + (\alpha -1) \cdot \chi^{S \cap T}
     = \alpha \cdot \chi^S + \sum_{i=1}^k \lambda_i  (\chi^{A_i} - \chi^S)\label{eq:cone-result-i}\enspace, \text{ and }\\
	(\alpha+1) \cdot f(S) + \sum_{i=1}^k \lambda_i \big(f(A_i) - f(S)\big)
     \geq f(S \cup T) + \alpha \cdot f(S \cap T) \label{eq:cone-result-ii}  \enspace.
	\end{gather}
\end{thm}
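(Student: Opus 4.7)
The plan is to separate the conclusion into a purely geometric part (existence of the $\lambda_i$ with few supports) and a submodularity-driven inequality, the latter proved through the Lov\'asz extension $f_{\mathsf{L}}$ of $f$. For \eqref{eq:cone-result-i}, the hypothesis places the vector $\frac{1}{\alpha}(\chi^T+(\alpha-1)\chi^{S\cap T})-\chi^S$ inside $\cone\big(\{\chi^{A_i}-\chi^S:i\in[k]\}\big)\subseteq\mathbb{R}^E$; Carath\'eodory's theorem for cones, applied in ambient dimension $|E|$, produces a conic representation using at most $|E|$ of the generators, and scaling these coefficients by $\alpha$ gives the required $\lambda_i$.

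For \eqref{eq:cone-result-ii}, I would introduce the auxiliary vector $w:=\chi^S+\chi^T+(\alpha-1)\chi^{S\cap T}$, whose entries equal $\alpha+1$ on $S\cap T$, $1$ on $S\sdiff T$, and $0$ outside $S\cup T$. Applying the level-set formula \eqref{eq:lovaszExtSlices} to $w/(\alpha+1)\in[0,1]^E$ and invoking positive homogeneity of $f_{\mathsf{L}}$ gives the clean identity $f_{\mathsf{L}}(w)=f(S\cup T)+\alpha f(S\cap T)$. On the other hand, adding $\chi^S$ to both sides of \eqref{eq:cone-result-i} and writing $L:=\sum_i\lambda_i$ re-expresses $w$ as $(\alpha+1-L)\chi^S+\sum_i\lambda_i\chi^{A_i}$. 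Thus \eqref{eq:cone-result-ii} reduces to the single inequality $f_{\mathsf{L}}(w)\le(\alpha+1-L)f(S)+\sum_i\lambda_if(A_i)$, which should follow from the identification $f_{\mathsf{L}}=f^-$ that characterizes submodularity.

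The remaining argument splits according to the sign of $\alpha+1-L$. If $L\le\alpha+1$, then $w/(\alpha+1)$ is an honest convex combination of $\chi^S$ and the $\chi^{A_i}$, so the convex-closure bound paired with positive homogeneity of $f_{\mathsf{L}}$ closes the argument directly. The main obstacle is the opposite case $L>\alpha+1$, where the coefficient $\alpha+1-L$ is negative and the convex-closure bound cannot be applied to $w$ as-is. The fix I would use is to shift by $(L-\alpha-1)\chi^S$ and work with $v:=w+(L-\alpha-1)\chi^S=\sum_i\lambda_i\chi^{A_i}$, which is now a non-negative combination of the $\chi^{A_i}$. Applying the convex-closure bound to $v/L$ gives $f_{\mathsf{L}}(v)\le\sum_i\lambda_if(A_i)$, while a second application of the level-set formula to $v$ (whose distinct non-zero entries $1<L-\alpha<L$ have level sets $S\cup T$, $S$, and $S\cap T$ respectively) yields $f_{\mathsf{L}}(v)=f(S\cup T)+(L-\alpha-1)f(S)+\alpha f(S\cap T)$; combining these two relations and rearranging recovers \eqref{eq:cone-result-ii}.
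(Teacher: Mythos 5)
Your proposal is correct, and it follows the same high-level strategy as the paper (Carath\'eodory for cones to get \eqref{eq:cone-result-i} with at most $|E|$ non-zero coefficients, then the identification $f_{\mathsf{L}}=f^-$ to upper-bound $f_{\mathsf{L}}$ at a suitable point by a combination of the $f(A_i)$, then the level-set formula \eqref{eq:lovaszExtSlices} to evaluate $f_{\mathsf{L}}$ there exactly), but you organize the computation differently. The paper always normalizes by $\lambda=\sum_i\lambda_i$ and works with $x=v/\lambda$ (your $v$), which forces a three-way case distinction on $\lambda-\alpha$: the case $0\le\lambda-\alpha<1$ needs an extra submodularity inequality $f(T)\ge f(S\cup T)+f(S\cap T)-f(S)$, and the case $\lambda-\alpha<0$ needs the structural observation that $S\subseteq T$. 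You instead split on $L\lessgtr\alpha+1$ and, when $L\le\alpha+1$, keep $\chi^S$ inside the convex combination by normalizing $w=\chi^S+\chi^T+(\alpha-1)\chi^{S\cap T}$ by $\alpha+1$; the level sets of $w/(\alpha+1)$ are exactly $S\cup T$ and $S\cap T$ with weights $\tfrac{1}{\alpha+1}$ and $\tfrac{\alpha}{\alpha+1}$, so both subcases of the paper collapse into one clean identity with no further use of submodularity. Your remaining case $L>\alpha+1$ coincides with the paper's case $\lambda-\alpha\ge 1$. One small caveat: the Lov\'asz extension is positively homogeneous only when $f(\emptyset)=0$ (in general $f_{\mathsf{L}}(cx)=c\,f_{\mathsf{L}}(x)+(1-c)f(\emptyset)$ for $c\in[0,1]$), and the paper does not assume $f(\emptyset)=0$; this is harmless in your argument because in each case you compare the level-set value and the convex-closure bound at the \emph{same} normalized point ($w/(\alpha+1)$, resp.\ $v/L$) and multiply both by the same scalar, but a careful write-up should phrase everything in terms of these normalized points rather than invoking homogeneity of $f_{\mathsf{L}}$ on vectors outside $[0,1]^E$.
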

\begin{proof}
	By the assumption and Carath\'eodory's theorem for cones, there is a subset $J \subseteq [k]$ of size $|J| \leq |E|$ such that 
	$\frac{1}{\alpha} \cdot (  \chi^T + (\alpha - 1) \cdot \chi^{S \cap T} )  \in
	( \chi^S + \cone(\{ \chi^{A_i}-\chi^S \mid i \in J \})   )$.  
	Without loss of generality we therefore assume %
	 $k \leq |E|$ (otherwise, simply set $\lambda_i = 0$ for $i \in [k]\setminus J$).
	Hence, there exist coefficients $\lambda_i \geq 0$ for $i \in [k]$ such that 
	\begin{align*}
		\chi^T + (\alpha - 1) \cdot \chi^{S \cap T} = \alpha \cdot \chi^S + \sum_{i=1}^k \lambda_i  (\chi^{A_i} - \chi^S)   \enspace. 
	\end{align*}
	Letting $\lambda = \sum_{i=1}^k \lambda_i$, we can rewrite the above as
	\begin{align}\label{eq:coneAiST}
	 \sum_{i=1}^k \lambda_i  \chi^{A_i} 	= \chi^T + (\alpha - 1) \cdot \chi^{S \cap T} + (\lambda - \alpha) \cdot \chi^S  \enspace. 
	\end{align}
	 Note that if $\lambda = 0$, then $S=T$ and the theorem is trivially true.  Thus, we assume $\lambda >0$. 
   We now divide both sides of~\eqref{eq:coneAiST} by $\lambda$ and apply the Lov\'asz extension $f_{\mathsf{L}}$ of $f$ to both sides. We thus obtain
	 \begin{align}\label{eq:sumAi-fL}
		   \sum_{i=1}^k \frac{\lambda_i}{\lambda}  f(A_i) \geq
     f_{\mathsf{L}}\bigg(\sum_{i=1}^k \frac{\lambda_i}{\lambda} \chi^{A_i}\bigg)
		 = f_{\mathsf{L}}\Big(    \frac{1}{\lambda} \cdot (  \chi^T + (\alpha - 1) \cdot \chi^{S \cap T} + (\lambda - \alpha) \cdot \chi^S  )  \Big)  \enspace,
	 \end{align}
where the first inequality follows from the fact that the Lov\'asz extension is equal to the convex closure $f^-$ for submodular $f$ and from the definition of the convex closure (see~\eqref{eq:convexClosure}).

We now focus on developing the right-hand side of~\eqref{eq:sumAi-fL}. For brevity, let $x \coloneqq  \frac{1}{\lambda} \cdot (  \chi^T + (\alpha - 1) \cdot \chi^{S \cap T} + (\lambda - \alpha ) \cdot \chi^S  ) $. We claim that %
	\begin{align}\label{eq:fL-ST}
		f_{\mathsf{L}}(x) \geq \frac{1}{\lambda} \cdot \Big( f(S \cup T)   +  (\lambda - \alpha -1) \cdot f(S)  + \alpha \cdot f(S \cap T)   \Big) \enspace.
	\end{align}
	Notice that this indeed implies the desired result, because by putting \eqref{eq:sumAi-fL} and \eqref{eq:fL-ST} together, we get 
	\begin{align*}
		(\alpha +1) \cdot f(S) + \sum_{i=1}^k \lambda_i  \big(f(A_i) - f(S)\big) \geq f(S \cup T) + \alpha \cdot f(S \cap T)   \enspace.
	\end{align*}
	Hence, it remains to prove~\eqref{eq:fL-ST}. For this, we use the other viewpoint on the Lov\'asz extension given by~\eqref{eq:lovaszExtSlices}. Depending on the precise values of $\alpha$ and $\lambda$, the coordinates of $x$ have to be numbered differently to obtain an increasing sequence $0 \leq x_1 \leq \ldots  \leq x_n \leq 1$, as used in~\eqref{eq:lovaszExtSlices}. More precisely, three different orderings may be necessary, depending on the value of $\lambda - \alpha$. 
	We treat these orderings through the following case distinction:
	\begin{enumerate}[(i)]
		\item  $\lambda - \alpha \geq 1$:
		 In this case, it holds that 
		\begin{align*}
		f_{\mathsf{L}}(x) &= \frac{1}{\lambda} \cdot \Big( f(S \cup T)   + (\lambda - \alpha -1) \cdot f(S) + \alpha \cdot f(S \cap T) 
	   \Big) 
		 \enspace.
		\end{align*}
		
		\item $0 \leq \lambda - \alpha < 1$:
		In this case, we have that
		\begin{align*}
		f_{\mathsf{L}}(x) &= \frac{1}{\lambda} \cdot \Big( 
		(\lambda - \alpha ) \cdot f(S \cup T)   + (1-\lambda + \alpha ) \cdot f(T) + (\lambda - 1)\cdot f(S \cap T)     \Big) \\  
		&\geq 	\frac{1}{\lambda} \cdot \Big( 
		 f(S \cup T)   + (\lambda - \alpha -1 ) \cdot f(S) + \alpha \cdot f(S \cap T)   \Big)
		\enspace,
		\end{align*}
		where we used that $1 - \lambda + \alpha \geq 0$ and $f(T) \geq f(S \cup T) + f(S \cap T) -f(S)$ by submodularity of $f$.
		
		\item $\lambda - \alpha  <0$:
		In this case, it must hold that $S \subseteq T$; for otherwise, the vector defined on the right-hand side of~\eqref{eq:coneAiST} would have coordinates with strictly negative entries (more precisely, this would be the case for all coordinates corresponding to elements in $S\setminus T$), which is not possible since the left-hand side of~\eqref{eq:coneAiST} is clearly a non-negative vector.
		Hence, we can rewrite  \eqref{eq:coneAiST} as 
		\begin{align}\label{eq:coneAiSTsubset}
			\sum_{i=1}^k \lambda_i  \chi^{A_i} 	= \chi^T + (\lambda - 1) \cdot \chi^S
			= \chi^{T\setminus S} + \lambda  \cdot \chi^S   \enspace .
		\end{align}
		Moreover, if $S = T$, we have that 
		\begin{align*}
		f_{\mathsf{L}}(x) = \frac{1}{\lambda} \cdot \Big(
		\lambda  \cdot f( S)   
		    \Big)  = \frac{1}{\lambda} \cdot \Big( f(S \cup T)   + (\lambda - \alpha -1) \cdot f(S) + \alpha \cdot f(S \cap T) 
		\Big)
		\enspace.
		\end{align*}
		What remains is the case $S \subsetneq T$. Since the latter implies $\lambda \geq 1$, we obtain
		\begin{align*}
		f_{\mathsf{L}}(x) &= \frac{1}{\lambda} \cdot \Big(
		f( T)   + ( \lambda - 1) \cdot f(S) 
		   \Big) \\
		& = \frac{1}{\lambda} \cdot \Big( f(S \cup T)   + (\lambda - \alpha -1) \cdot f(S) + \alpha \cdot f(S \cap T) 
		\Big)
		\enspace.
		\end{align*}
	\end{enumerate}
Together, the three cases above prove \eqref{eq:fL-ST}, and thus complete the proof.
\end{proof}

To expand on how we exploit Theorem~\ref{thm:cone-result} to design local search procedures, we first formalize the notion of neighborhoods.
We then introduce the notion of an $\alpha$-conic neighborhood for $\mathcal{F}$, which describes what we need to apply Theorem~\ref{thm:cone-result}.
We highlight that our notion of neighborhood need not be symmetric.

\begin{defn}	
	A \emph{neighborhood function} $N$ for $\mathcal{F}$ is a function $N \colon \mathcal{F} \to 2^\mathcal{F}$ such that $S \in N(S)$ for every $S \in \mathcal{F}$. For $S \in \mathcal{F}$, the subfamily $N(S) \subseteq \mathcal{F}$ is called the \emph{neighborhood} of $S$, 
	and any set $T \in N(S)$ is said to be a \emph{neighbor} of $S$. 
Moreover, for $\alpha \geq 1$, $N$ is said to be \emph{$\alpha$-conic} for $\mathcal{F}$ if
	\begin{align*}
	\frac{1}{\alpha} \cdot \big(  \chi^T + (\alpha -1) \cdot \chi^{S \cap T} \big)  \in
	\Big( \chi^S + \cone\big(\{ \chi^A-\chi^S \mid A \in N(S)  \}\big)     \Big)  \qquad \forall S,T\in \mathcal{F} \enspace.
	\end{align*}
\end{defn}

A natural example of a neighborhood function, which originates from the combinatorial polytope $P_{\mathcal{F}} = \conv\big(\{ \chi^F \mid F \in \mathcal{F} \}\big)$ that is associated with $\mathcal{F}$, is the \emph{polyhedral neighborhood function} $N_{\mathcal{F}}$. It is defined as 
\begin{equation*}
N_\mathcal{F}(S) \coloneqq \{S\} \cup \{ T \in \mathcal{F} \mid \chi^S \text{ and } \chi^T \text{ are adjacent vertices of } P_\mathcal{F} \}\enspace \quad \forall S\in \mathcal{F}\enspace.
\end{equation*}
As already discussed in Section~\ref{subsec:mainResults}, it is well-known that the polyhedral neighborhood function is $1$-conic for any feasibility family $\mathcal{F}\subseteq 2^E$.

The following observation generalizes the motivating example discussed in the introduction, i.e., that for monotone $\mathsf{CSFM}$, local optima with respect to the polyhedral neighborhood are $1/2$-approximations.

\begin{thm}\label{thm:localOptimaAreGood}
For monotone $\mathsf{CSFM}$, any local optimum  with respect to an $\alpha$-conic neighborhood is a $1/(\alpha+1)$-approximation.
\end{thm}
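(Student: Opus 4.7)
The plan is to directly apply Theorem~\ref{thm:cone-result} with the given $\alpha$-conic neighborhood, taking $T$ to be an optimal solution $\OPT$ and $\{A_1, \ldots, A_k\}$ to be (the elements of) $N(S)$. By definition of an $\alpha$-conic neighborhood, the hypothesis of Theorem~\ref{thm:cone-result} is satisfied, so we obtain non-negative coefficients $\lambda_i$ with
\begin{equation*}
(\alpha+1)\cdot f(S) + \sum_{i=1}^k \lambda_i\bigl(f(A_i) - f(S)\bigr) \geq f(S \cup \OPT) + \alpha \cdot f(S \cap \OPT)\enspace.
\end{equation*}

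Next, I would invoke local optimality: since $S$ is a local optimum with respect to $N$ and each $A_i \in N(S)$, we have $f(A_i) \leq f(S)$, so each summand $\lambda_i(f(A_i) - f(S))$ is non-positive and the entire sum may be dropped from the left-hand side. This yields
\begin{equation*}
(\alpha + 1)\cdot f(S) \geq f(S \cup \OPT) + \alpha\cdot f(S \cap \OPT)\enspace.
\end{equation*}

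Finally, I would exploit monotonicity and non-negativity of $f$: monotonicity gives $f(S \cup \OPT) \geq f(\OPT)$, and non-negativity gives $\alpha\cdot f(S \cap \OPT) \geq 0$. Combining, $(\alpha+1)\cdot f(S) \geq f(\OPT)$, which is precisely the claimed $1/(\alpha+1)$-approximation guarantee.

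There is essentially no technical obstacle here; all of the heavy lifting is done by Theorem~\ref{thm:cone-result}, and the proof reduces to mechanically chaining the conic hypothesis, local optimality, and monotonicity. The only minor subtlety is noticing that the coefficients $\lambda_i$ need not sum to anything in particular for the argument to go through: because local optimality makes each term $f(A_i) - f(S)$ non-positive, the non-negativity of the $\lambda_i$ is all that is required to discard the sum regardless of its size.
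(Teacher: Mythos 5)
Your proof is correct and follows exactly the same route as the paper's: apply Theorem~\ref{thm:cone-result} with $T=\OPT$ and the $A_i$ ranging over $N(S)$, use local optimality to drop the non-positive sum, and finish with non-negativity and monotonicity of $f$. No gaps.
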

\begin{proof}
Let $N$ be an $\alpha$-conic neighborhood function for the feasibility family $\mathcal{F}$.
Let $S \in \mathcal{F}$ be a local optimum and $T=\OPT \in \mathcal{F}$ be a global optimum to the considered $\mathsf{CSFM}$ problem. By invoking Theorem~\ref{thm:cone-result} with $A_1, \ldots, A_k$ being all sets in $N(S)$, we have that the condition of the theorem is satisfied due to $N$ being $\alpha$-conic. Moreover, if $f$ is the submodular function we want to maximize, the fact that $S$ is a local optimum implies $f(A_i) - f(S)\leq 0$ for every $i\in [k]$. Hence, by~\eqref{eq:cone-result-ii} we have as desired
\begin{equation*}
(\alpha+1)\cdot f(S) \geq f(S\cup \OPT) + \alpha\cdot f(S\cap \OPT)
    \geq f(S\cup \OPT) \geq f(\OPT)\enspace,
\end{equation*}
where the second and third inequality follow by non-negativity and monotonicity of $f$, respectively.
\end{proof}

While Theorem~\ref{thm:localOptimaAreGood} shows that local optima with respect to $\alpha$-conic neighborhoods are good in the monotone case, there are still two important questions, namely~\begin{enumerate*}[label=(\roman*)]
\item how to approach a local optimum quickly and
\item how to deal with non-monotone functions.
\end{enumerate*}

The following theorem addresses the first question by showing that as long as $(\alpha+1)\cdot f(S)$ is significantly smaller than $f(S\cup T) + \alpha \cdot f(S\cap T)$,  %
there is a neighbor of $S$ 
with function value significantly larger than that of $S$.
In Section~\ref{subsec:localSearchProc}, we will then present a fast local search procedure that addresses the second question.

\begin{thm}\label{thm:cone-increase}
	Let $f \colon 2^E \to \nnR$ be a submodular function on some ground set $E$, and let $\alpha \geq 1$.
	If $S$, $T$, and $A_1, \ldots, A_k$ are subsets of $E$ such that
	\begin{align*}
	\frac{1}{\alpha} \cdot \big(  \chi^T + (\alpha -1) \cdot \chi^{S \cap T} \big)  \in
	\Big( \chi^S + \cone\big(\{ \chi^{A_i}-\chi^S \mid i \in [k]  \}\big)     \Big)  \enspace,
	\end{align*}
	then at least one of the following two statements holds:
	\begin{enumerate}[\normalfont (i), itemsep=0em]
		\item\label{thm:cone-increase-i} $S$ and $T$ satisfy $ \displaystyle (\alpha + 1) \cdot f(S) \geq f( S \cup T) + \alpha \cdot f(S \cap T)    $.
		\item\label{thm:cone-increase-ii} There is $j \in [k]$ such that 
		$   f(A_j) - f(S) \geq \frac{1}{\alpha \cdot |E|} \cdot \big(  
		 f(S \cup T) + \alpha \cdot f(S \cap T) - (\alpha +1) \cdot f(S)     \big)   $.
	\end{enumerate}		
\end{thm}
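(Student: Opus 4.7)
The plan is to apply Theorem~\ref{thm:cone-result} to obtain coefficients $\lambda_i \geq 0$, at most $|E|$ of which are non-zero, satisfying both the vector identity~\eqref{eq:cone-result-i} and the function inequality~\eqref{eq:cone-result-ii}. Writing $\Delta \coloneqq f(S\cup T) + \alpha\cdot f(S\cap T) - (\alpha+1)\cdot f(S)$, inequality~\eqref{eq:cone-result-ii} reads $\sum_{i=1}^{k}\lambda_i \bigl(f(A_i) - f(S)\bigr) \geq \Delta$. If $\Delta \leq 0$, statement~(i) holds immediately. Otherwise $\Delta > 0$, and letting $j \in [k]$ maximize $f(A_j) - f(S)$, we have $\bigl(f(A_j) - f(S)\bigr)\cdot \sum_i \lambda_i \geq \sum_i \lambda_i\bigl(f(A_i) - f(S)\bigr) \geq \Delta$; hence statement~(ii) will follow as soon as I can show $\sum_i \lambda_i \leq \alpha \cdot |E|$.

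The technical heart of the proof is therefore this bound on $\sum_i \lambda_i$, which I would extract by reading the vector identity~\eqref{eq:cone-result-i} coordinatewise. Setting $\lambda \coloneqq \sum_i \lambda_i$, equation~\eqref{eq:cone-result-i} can be rewritten as $\sum_i \lambda_i \chi^{A_i} = \chi^T + (\alpha-1)\chi^{S\cap T} + (\lambda - \alpha)\chi^S$, and evaluating at each coordinate $e \in E$ yields the following four identities: (a)~$\sum_{i\colon e\in A_i}\lambda_i = \lambda$ for $e \in S\cap T$, so every $A_i$ with $\lambda_i > 0$ contains $S\cap T$; (b)~$\sum_{i\colon e\in A_i}\lambda_i = 0$ for $e \notin S\cup T$, so every such $A_i$ is contained in $S\cup T$; (c)~$\sum_{i\colon e\in A_i}\lambda_i = 1$ for $e \in T\setminus S$; and (d)~$\sum_{i\colon e\notin A_i}\lambda_i = \alpha$ for $e \in S\setminus T$.

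Identities (c) and (d) are the key: (c) implies that any $\lambda_i > 1$ forces $A_i \cap (T\setminus S) = \emptyset$, while (d) implies that any $\lambda_i > \alpha$ forces $S\setminus T \subseteq A_i$. Combined with (a) and (b), whenever $\lambda_i > \alpha \geq 1$ we are simultaneously forced to have $S\setminus T \subseteq A_i$, $S\cap T \subseteq A_i$, $A_i \cap (T\setminus S) = \emptyset$ and $A_i \subseteq S\cup T$, giving both $S \subseteq A_i$ and $A_i \subseteq S$, hence $A_i = S$. But then $\chi^{A_i} - \chi^S = \mathbf{0}$, so we may harmlessly reset any such $\lambda_i$ to zero without affecting~\eqref{eq:cone-result-i} or~\eqref{eq:cone-result-ii}. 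After this cleanup every non-zero $\lambda_i$ satisfies $\lambda_i \leq \alpha$, and at most $|E|$ of them remain non-zero, so $\sum_i \lambda_i \leq \alpha \cdot |E|$ as desired.

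The main obstacle is making the $\lambda_i \leq \alpha$ argument airtight in the edge cases $S \subseteq T$ or $T \subseteq S$, where one of the identities (c), (d) becomes vacuous. If $S = T$ then $\Delta = 0$ and statement~(i) holds trivially; and if exactly one of $S\setminus T$, $T\setminus S$ is empty, the single active one of (c), (d) together with (a) and (b) still pins down $A_i = S$ whenever $\lambda_i > \alpha$, so the argument goes through uniformly.
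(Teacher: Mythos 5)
Your proposal is correct and follows essentially the same route as the paper: both invoke Theorem~\ref{thm:cone-result}, dispatch the case $\Delta \leq 0$ via statement (i), exploit that at most $|E|$ coefficients are non-zero, and extract the key fact that $\lambda_i > \alpha$ forces $A_i = S$ from a coordinatewise reading of the vector identity~\eqref{eq:cone-result-i}. The only (immaterial) difference is in bookkeeping: you bound \emph{all} surviving coefficients by $\alpha$ after discarding indices with $A_i = S$ and select $j$ maximizing $f(A_j) - f(S)$, whereas the paper selects $j$ with $\lambda_j\big(f(A_j)-f(S)\big) \geq \Delta/|E|$ and rules out $\lambda_j > \alpha$ for that single index.
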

\begin{proof}
	We assume that~\ref{thm:cone-increase-i} does not hold and show that this implies~\ref{thm:cone-increase-ii}.
	By Theorem~\ref{thm:cone-result}, we know that there exist coefficients $\lambda_i \geq 0$ for $i \in [k]$ such that at most $|E|$ many of them are non-zero and  such that
	\begin{gather}
	\chi^T + (\alpha -1) \cdot \chi^{S \cap T} = \alpha \cdot \chi^S + \sum_{i=1}^k \lambda_i  (\chi^{A_i} - \chi^S)\enspace,
	 \text{ and }\label{eq:cone-result-i-proof}\\
	(\alpha +1) \cdot f(S) + \sum_{i=1}^k \lambda_i \big(f(A_i) - f(S)\big) \geq f(S \cup T) + \alpha \cdot f(S \cap T)\label{eq:cone-result-ii-proof}  \enspace.
	\end{gather}
	From \eqref{eq:cone-result-ii-proof}, it immediately follows that
	\begin{align*}
	 \sum_{i=1}^k \lambda_i \big(f(A_i) - f(S)\big) \geq f(S \cup T) + \alpha \cdot f(S \cap T) - (\alpha +1) \cdot f(S) >0 \enspace,
	\end{align*}
	where the last strict inequality is implied by the assumption that~\ref{thm:cone-increase-i} does not hold.
	Since at most $|E|$ many of the coefficients $\lambda_i \geq 0$, $i \in [k]$, are non-zero, there must be $j \in [k]$ such that
	\begin{align*}
	\lambda_j \big(f(A_j) - f(S)\big) \geq \frac{1}{|E|} \cdot \Big( f(S \cup T) + \alpha \cdot f(S \cap T) - (\alpha +1) \cdot f(S) \Big) >0 \enspace.
	\end{align*}
	This implies
\begin{equation}\label{eq:AjGreaterThanS}
\lambda_j >0  \text{ and } f(A_j) > f(S)\enspace.
\end{equation}
	If $\lambda_j \leq \alpha $, we immediately get from the above that
	\begin{align*}
f(A_j) - f(S) \geq \frac{1}{\alpha \cdot |E|} \cdot \big(  
f(S \cup T) + \alpha \cdot f(S \cap T) - (\alpha +1) \cdot f(S)     \big)   \enspace,
	\end{align*}
	which shows~\ref{thm:cone-increase-ii}.
	We complete the proof by showing that $\lambda_j > \alpha$ is not possible. For the sake of contradiction, assume $\lambda_j > \alpha$. 
	Defining
	$\lambda = \sum_{i=1}^k \lambda_i$,
	we get by~\eqref{eq:cone-result-i-proof} that
	\begin{align}\label{eq:singleOutAj}
 \lambda_j \chi^{A_j} + \sum_{i \in [k], i \neq j} \lambda_i \chi^{A_i}  =
 	\sum_{i=1}^k \lambda_i \chi^{A_i}   =
 	\chi^T + (\alpha -1 ) \cdot \chi^{S \cap T} +  \underbrace{(\lambda - \alpha )}_{>0} {} \cdot \chi^S  \enspace.   
	\end{align}
	If $A_j \setminus S \neq \emptyset$, then $\lambda_j \leq 1$, which follows by considering a coordinate corresponding to any element $e\in A_j\setminus S$ in~\eqref{eq:singleOutAj}. However, because $\alpha \geq 1$ by assumption, this implies $\lambda_j \leq \alpha$, thus leading to the desired contradiction.
	Moreover,  $S \setminus A_j \neq \emptyset$ would imply $\lambda - \alpha \leq \lambda - \lambda_j$, which follows by considering a coordinate corresponding to any element $e\in S\setminus A_j$ in~\eqref{eq:singleOutAj}. Since the latter is equivalent to $\lambda_j \leq \alpha $, we get a contradiction again.	
	Last but not least, $S = A_j$ leads to  the contradiction $0 = f(A_j) - f(S) >0$, where the strict inequality comes from~\eqref{eq:AjGreaterThanS}.
	We therefore conclude that $\lambda_j \leq \alpha$ always holds, which finishes the proof.	
\end{proof}

\subsection{Local search procedures}\label{subsec:localSearchProc}

Using the theory developed above, we are now ready to describe our generalized local search procedures.
Again, let a ground set $E$, a family of feasible sets $\mathcal{F} \subseteq 2^E$, and a submodular function $f \colon 2^E \to \nnR$ be given.
Throughout the following, we denote by $\OPT \in \mathcal{F}$ an optimal solution of the problem $\max_{F \in \mathcal{F}} f(F)$.
Moreover, we assume that we have an $\alpha$-conic neighborhood function $N \colon \mathcal{F} \to 2^\mathcal{F}$ for $\mathcal{F}$, where $\alpha \geq 1$.

If the given function $f$ is monotone, Theorem~\ref{thm:cone-increase} immediately implies that the natural local search procedure that iteratively goes from a current solution $S\in \mathcal{F}$ to a set $A\in N(S)$ having highest objective value among all sets in $N(S)$ will converge quickly. We call such a step, which goes from some $S \in \mathcal{F}$ to the best set in $N(S)$, a \emph{most improving step}.
What is important to note is that 
\begin{align*}
	\max \{  f(S') \mid S' \in N(S)  \} \geq f(S)
\end{align*}
for any $S \in \mathcal{F}$ since our requirement on neighborhood functions ensures $S \in N(S)$.

\begin{thm}\label{thm:LS-monotone}
	Assume that the submodular function $f$ is monotone, and let $\epsilon > 0$.
	Starting with a set $S_0 \in \mathcal{F}$, %
	 we get after $\gamma \coloneqq  \min \big\{ |E| \cdot \lceil \log \big( \frac{ f( \OPT )}{\epsilon \cdot  f(S_0)}  \big) \rceil ,
	|E| \cdot \lceil \log \big( \frac{\alpha + 1 + \epsilon }{\epsilon}  \big) \rceil		\big\}		$
	 many most improving steps a set $S_\gamma \in \mathcal{F}$ with
	\begin{align*}
		(\alpha + 1 + \epsilon) \cdot f(S_\gamma) \geq f(\OPT)  \enspace .
	\end{align*}
\end{thm}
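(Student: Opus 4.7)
The plan is to iterate Theorem~\ref{thm:cone-increase} at each step with $T=\OPT$ and with the collection $A_1,\dots,A_k$ being $N(S_i)$. Monotonicity and non-negativity of $f$ simplify its right-hand sides through $f(S_i\cup\OPT)\geq f(\OPT)$ and $f(S_i\cap\OPT)\geq 0$, so at any step~$i$ either case~\ref{thm:cone-increase-i} already yields $(\alpha+1)\cdot f(S_i)\geq f(\OPT)$, which is stronger than the target bound and which is inherited at every subsequent step because $S\in N(S)$ forces the sequence $f(S_0)\leq f(S_1)\leq\dots$ to be non-decreasing; or case~\ref{thm:cone-increase-ii}, combined with the most-improving rule $f(S_{i+1})=\max_{A\in N(S_i)} f(A)$, gives the concrete progress
\begin{equation*}
f(S_{i+1})-f(S_i)\;\geq\;\tfrac{1}{\alpha|E|}\bigl(f(\OPT)-(\alpha+1)\cdot f(S_i)\bigr)\enspace.
\end{equation*}

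I will then track the gap $g_i\coloneqq f(\OPT)-(\alpha+1)\cdot f(S_i)$. The inequality above translates, whenever case~\ref{thm:cone-increase-ii} fires, into the geometric decay $g_{i+1}\leq g_i\cdot\bigl(1-\tfrac{\alpha+1}{\alpha|E|}\bigr)\leq g_i\cdot(1-1/|E|)$, where the last step uses only $\alpha\geq 1$. Iterating and using $g_0\leq f(\OPT)$ yields the clean bound $g_\gamma\leq f(\OPT)\cdot e^{-\gamma/|E|}$, valid as long as case~\ref{thm:cone-increase-ii} applies throughout the first $\gamma$ steps.

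I will conclude by contradiction. Suppose $(\alpha+1+\epsilon)\cdot f(S_\gamma)<f(\OPT)$; since $f(S_i)$ is non-decreasing, the same strict inequality holds for every $i\leq\gamma$, hence $(\alpha+1)\cdot f(S_i)<f(\OPT)$ and case~\ref{thm:cone-increase-i} never fires, so the geometric bound on $g_\gamma$ is valid. The contradiction hypothesis rewrites simultaneously as $g_\gamma>\epsilon\cdot f(S_\gamma)\geq\epsilon\cdot f(S_0)$ and, via $f(S_\gamma)<f(\OPT)/(\alpha+1+\epsilon)$, as $g_\gamma>\epsilon\cdot f(\OPT)/(\alpha+1+\epsilon)$. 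Chaining each of these lower bounds with the geometric upper bound yields, respectively, $\gamma<|E|\cdot\log\bigl(f(\OPT)/(\epsilon\cdot f(S_0))\bigr)$ and $\gamma<|E|\cdot\log\bigl((\alpha+1+\epsilon)/\epsilon\bigr)$, which together (after absorbing the ceiling slack $|E|\log(\cdot)\leq|E|\lceil\log(\cdot)\rceil$) contradict $\gamma$ being the minimum of the two expressions in the theorem. The only delicate point in the argument—and essentially the only one beyond recurrence bookkeeping—is the lift from the single contradiction hypothesis at step~$\gamma$ to its holding at every earlier step; this rests solely on the monotonicity of the sequence $f(S_i)$, which is guaranteed by $S\in N(S)$.
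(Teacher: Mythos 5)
Your proposal is correct and follows essentially the same route as the paper's proof: iterate Theorem~\ref{thm:cone-increase} with $T=\OPT$, use monotonicity and non-negativity to reduce the right-hand side to $f(\OPT)$, and derive the geometric decay $g_{i+1}\leq(1-1/|E|)\cdot g_i$ of the gap $g_i=f(\OPT)-(\alpha+1)\cdot f(S_i)$, which yields the two logarithmic bounds on $\gamma$. The only difference is cosmetic — you package the final step as a single contradiction against both terms of the minimum, whereas the paper argues directly for each of the two possible values of $\gamma$ — and your handling of the one delicate point (lifting the hypothesis from step $\gamma$ to all earlier steps via monotonicity of the sequence $f(S_i)$) matches the paper's.
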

\begin{proof}
	Let $S_0, S_1, \ldots, S_\gamma \in \mathcal{F}$ be the sets we encounter while performing the $\gamma$ most improving steps.
	The way we defined a most improving step ensures that $f(S_i) \leq f(S_{i+1})$ for every $i \in \{ 0, 1, \ldots, \gamma -1 \}$. If $(\alpha+1)\cdot f(S_\gamma) \geq f(\OPT)$, then the statement clearly holds. Hence, assume $(\alpha+1)\cdot f(S_\gamma) < f(\OPT)$, which, by the fact that $f(S_i)$ is non-decreasing in $i$, implies $(\alpha+1)\cdot f(S_i) < f(\OPT)$ for all $i\in \{0,1,\ldots, \gamma\}$.
	 Since the neighborhood function $N$ is $\alpha$-conic, we can invoke Theorem~\ref{thm:cone-increase} with $T = \OPT \in \mathcal{F}$ and $\{A^i_j\}_j = N(S_i)$ to obtain for every $i \in \{0, 1, \ldots, \gamma-1\}$ that
	\begin{align*}
	f(S_{i+1}) - f(S_i) 
	&\geq \frac{1}{\alpha \cdot |E|} \cdot \big( f(S_i \cup \OPT ) + \alpha \cdot f(S_i \cap \OPT) - (\alpha + 1) \cdot f(S_i)  \big)  \\
	&\geq  \frac{1}{\alpha \cdot |E|}  \cdot \big( f(\OPT ) - (\alpha + 1) \cdot f(S_i)  \big)
	\enspace .
	\end{align*}
	Here, we used that $S_{i+1}$ was chosen as a best set in $N(S_i)$ and that $f$ is non-negative and monotone.
	From the above, we get that  $S_{i+1}$ satisfies
	\begin{align*}
	f(\OPT) - (\alpha + 1) \cdot f( S_{i+1} )   
	 &\leq   \Big(  1 - \frac{\alpha + 1}{\alpha \cdot |E|}       \Big)
	 \cdot \big( f(\OPT) - (\alpha + 1) \cdot f(S_i) \big)      \\
	 &\leq \Big(1 - \frac{1}{|E|} \Big) \cdot \big( f(\OPT) - (\alpha + 1) \cdot f(S_i) \big)      
	\enspace .
	\end{align*}
By performing a most improving step, we can therefore close at least a $1/|E|$-fraction of the gap between $f(\OPT)$ and $\alpha + 1$ times the function value of our current set.
Using the trivial bound $f(\OPT) - (\alpha + 1) \cdot f(S_0) \leq f(\OPT)$, we  conclude that
\begin{align}\label{eq:LS-monotone-aux}
\begin{split}
f(\OPT) - (\alpha + 1) \cdot f(S_\gamma) 
&\leq \Big(1- \frac{1}{|E|}\Big)^\gamma \cdot \big( f(\OPT) - (\alpha + 1) \cdot f(S_0) \big)  \\
& \leq \Big(1- \frac{1}{|E|}\Big)^\gamma \cdot f(\OPT) 
\enspace .
\end{split}
\end{align}
Hence, if $\gamma =  |E| \cdot \lceil \log \big( \frac{ f( \OPT )}{\epsilon \cdot  f(S_0)}  \big) \rceil 	$, we get from~\eqref{eq:LS-monotone-aux} that
\begin{align*}
f(\OPT) - (\alpha + 1) \cdot f(S_\gamma) 
 \leq \Big(1- \frac{1}{|E|}\Big)^\gamma \cdot f(\OPT)
 \leq \epsilon \cdot f(S_0)  
 \leq \epsilon \cdot f(S_\gamma)
\enspace ,
\end{align*}
as desired.
If, on the other hand, $\gamma =  |E| \cdot \lceil \log \big( \frac{\alpha + 1 + \epsilon }{\epsilon}  \big) \rceil	$, it then follows from~\eqref{eq:LS-monotone-aux} that
\begin{align*}
f(\OPT) - (\alpha + 1) \cdot f(S_\gamma) 
\leq \Big(1- \frac{1}{|E|}\Big)^\gamma \cdot f(\OPT)
\leq \frac{\epsilon}{\alpha + 1 + \epsilon} \cdot f(\OPT)
\enspace .
\end{align*}
Since this is equivalent to
\begin{align*}
\frac{\alpha + 1}{\alpha + 1 + \epsilon} \cdot f(\OPT) - \frac{\alpha + 1}{\alpha + 1 + \epsilon} \cdot (\alpha + 1 + \epsilon) \cdot f(S_\gamma) 
\leq 0
\enspace ,
\end{align*}
we conclude that
\begin{align*}
f(\OPT) \leq (\alpha + 1 + \epsilon) \cdot f(S_\gamma) 
\end{align*}
in this case as well.
\end{proof}

Our main goal, however, is to provide a local search procedure that works also for non-monotone submodular functions.
In order to do so, we will use an iterative approach that reduces to the problem
of finding a pair of sets $(S,Q)$ with $S, Q \in \mathcal{F}$ such that
\begin{align}\label{eq:LS-goal-eps}
	(\alpha + 1 + \epsilon) \cdot f(S) \geq f(Q\cup T) + \alpha \cdot f(Q \cap T)	\qquad \forall T\in \mathcal{F}\enspace.
\end{align}
Most previously proposed local search procedures try to find an (approximate) local maximizer $S \in \mathcal{F}$, show that $(S,S)$ satisfies~\eqref{eq:LS-goal-eps} (this also follows from Theorem~\ref{thm:cone-result}), and then continue from there.
Since~\eqref{eq:LS-goal-eps} is actually all we need, we do not insist on finding an (approximate) local maximizer but are already happy with $S,Q \in \mathcal{F}$ satisfying the weaker condition above.
In fact, this difference together with part~\ref{thm:cone-increase-ii} of Theorem~\ref{thm:cone-increase} is what enables us to slightly decrease the number of required steps compared to similar local search procedures.

Vondr\'ak~\cite{Vondrak2013} showed that exponentially many value oracle queries are needed to get a constant-factor approximation for maximizing non-monotone submodular functions over the set of bases of a given matroid.
Since no strong approximation factors are possible for non-monotone $\mathsf{CSFM}$ over general constraint families, non-monotone $\mathsf{CSFM}$ is typically studied for \emph{down-closed} feasibility families $\mathcal{F}\subseteq 2^E$, i.e., whenever $I\in \mathcal{F}$ and $J\subseteq I$, then also $J \in \mathcal{F}$. In the following, we therefore focus on this case, which also appears naturally in many relevant applications.

Whenever we have a down-closed feasibility family $\mathcal{F} \subseteq 2^E$, we assume that every singleton $\{u\}$ for $u \in E$ is feasible and fulfills $f(\{u\})> f(\emptyset)$; for otherwise, we could delete $u$ from $E$ without changing the problem (since $f$ is submodular).
Note that the above can easily be checked in linear time. 
When considering a down-closed family $\mathcal{F} \subseteq 2^E$ for $E \neq \emptyset$, a trivial starting set $S_0 \in \mathcal{F}$ (with $f(S_0) > 0$) for our local search procedure is thus given by the best singleton.

Given an $\alpha$-conic neighborhood function, where $\alpha \geq 1$, Algorithm~\ref{alg:DCBLS} describes a procedure to obtain a pair of feasible sets $(S,Q)$ fulfilling~\eqref{eq:LS-goal-eps}.
Starting with $S_0 \in \mathcal{F}$, it %
 constructs a sequence of feasible sets $S_0, S_1, \ldots, S_j$ by performing most improving steps.
For each $i=0, \ldots, j-1$, the algorithm also measures the progress $\Delta_i = f(S_{i+1}) - f(S_i)$ that is made by moving from $S_i$ to $S_{i+1}$.
If it happens that this progress becomes small, Theorem~\ref{thm:cone-increase} implies that we are done.
If, on the other hand, the progress the algorithm makes does not get considerably smaller for several iterations, then we conclude that Algorithm~\ref{alg:DCBLS} has made enough progress as well. 
This is true since always moving to a best neighbor ensures by Theorem~\ref{thm:cone-increase} that the algorithm is  closing at least a $1/|E|$-fraction of the gap $ f(S_i \cup T) + \alpha \cdot f(S_i \cap T) - (\alpha + 1) \cdot f(S_i) $  for an arbitrary $T \in \mathcal{F}$.
What complicates the task (compared to the monotone case) is that the value $f(S_i \cup T ) + \alpha \cdot f(S_i \cap T)$ also varies with varying $S_i$ (and it does so in a not necessarily monotone fashion).
Because of this, Algorithm~\ref{alg:DCBLS} returns $S, Q \in \mathcal{F}$ that satisfy \eqref{eq:LS-goal-eps}, and not a single set $S \in \mathcal{F}$ satisfying
\begin{align}\label{eq:propPrevLS}
(\alpha + 1 + \epsilon) \cdot f(S) \geq f(S \cup T) + \alpha \cdot f(S \cap T)	\qquad \forall T\in \mathcal{F}\enspace,
\end{align}
as it was typical for previous local search procedures, e.g., see~\cite{lee_2010_maximizing,LeeSviridenkoVondrak2010,FeldmanNaorSchwartzWard2011}.
We will later show how one can design a fast local search algorithm for non-monotone $\mathsf{CSFM}$ that only requires pairs fulfilling~\eqref{eq:LS-goal-eps}, instead of the classical stronger condition~\eqref{eq:propPrevLS}.
As briefly mentioned above, this is the reason why our approach is faster than previous local search techniques, which, for an appropriate choice of $\alpha$ depending on the considered problem, had a bound of $O(\frac{|E|}{\epsilon} \cdot \log(\alpha \cdot |E|))$ on the number of iterations to find a set fulfilling~\eqref{eq:propPrevLS}. The following theorem highlights our speed improvement and shows that Algorithm~\ref{alg:DCBLS} finds a pair of sets $(S,Q)$ satisfying~\eqref{eq:LS-goal-eps} with a number of iterations depending only logarithmically on $1/\epsilon$.

\begin{algorithm}[t]
	\caption{Basic-Local-Search$( E, \mathcal{F}, \alpha,  N,f, \epsilon)$}\label{alg:DCBLS}
	
	\KwData{Ground set $E$,  down-closed family of feasible sets $\mathcal{F} \subseteq 2^E$, $\alpha$-conic neighborhood function $N \colon \mathcal{F} \to 2^\mathcal{F}$ for $\mathcal{F}$, where $\alpha \geq 1$, submodular function $f \colon 2^E \to \nnR$, error parameter $\epsilon > 0$.}
	\KwResult{Pair $(S,Q)$ with $S, Q \in \mathcal{F}$ and $(\alpha + 1 + \epsilon) \cdot f(S) \geq f(Q\cup T) + \alpha \cdot f(Q \cap T)$ for any $T \in \mathcal{F} $.}
	
	$S_0 \gets \arg\max \{ f(S') \mid    S' \subseteq E,\ |S'|=1 \}  $ \;
	$S_1 \gets \arg\max \{ f(S') \mid  S' \in N(S_0) \}  $ \;
	$\Delta_0 \gets f(S_1) - f(S_0)$ \;
	$i \gets 0$ \;
	$j \gets 0$ \;
	\While{$\Delta_j  >  \frac{\epsilon}{\alpha \cdot |E|}  \cdot f(S_j)$}{
		\If{$\Delta_j \leq \Delta_i /2$}{
			$i \gets j$ \;
		}
		\If{$j-i \geq 2\cdot |E|$}{
			\Return $(S_j, S_i)$ \;\label{algline:earlyTerm}
		}
		$j \gets j+1$ \;
		$S_{j+1} \gets \arg\max \{  f(S') \mid S' \in N(S_j)  \} $ \;
		$\Delta_j \gets f(S_{j+1}) - f(S_j)$ \;
	}
	\Return $(S_j,S_j)$ \;\label{algline:lateTerm}
	
\end{algorithm}
\begin{thm}\label{thm:result-DCBLS}
	Algorithm~\ref{alg:DCBLS} returns after
	\smash{$O\big(  |E| \cdot  \log \big( \frac{\alpha \cdot |E| }{\epsilon }  \big) \big) $}
	many iterations (executions of the while-loop)
	a pair of sets $(S,Q)$ with $S, Q \in \mathcal{F}$ such that
	\begin{align}\label{eq:result-DCBLS}
	(\alpha + 1 + \epsilon) \cdot f(S) \geq f(Q\cup T) + \alpha \cdot f(Q \cap T)	\qquad \forall T \in \mathcal{F} \enspace .
	\end{align}
	In particular, if $f$ is monotone, it holds %
	 that 
	\begin{align}\label{eq:result-DCBLS-monotone}
	(\alpha + 1 + \epsilon) \cdot f(S) \geq f( T ) 	 \qquad \forall T \in \mathcal{F} \enspace .
	\end{align}
\end{thm}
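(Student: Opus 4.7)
The plan is to verify property~\eqref{eq:result-DCBLS} at each of the two possible exit points of Algorithm~\ref{alg:DCBLS} (lines~\ref{algline:lateTerm} and~\ref{algline:earlyTerm}) by invoking Theorem~\ref{thm:cone-increase}, and then to bound the number of while-loop iterations. A fact used throughout is that $S_\ell \in N(S_\ell)$ together with the definition of a most improving step forces $f(S_{\ell+1}) \geq f(S_\ell)$, so $f(S_0) \leq f(S_1) \leq \ldots$ is non-decreasing.

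For the late exit at line~\ref{algline:lateTerm}, the while condition has just failed, so $\Delta_j \leq \tfrac{\epsilon}{\alpha \cdot |E|} f(S_j)$. Apply Theorem~\ref{thm:cone-increase} to $S=S_j$, an arbitrary $T \in \mathcal{F}$, and $\{A_m\}_m = N(S_j)$. Case~(i) already delivers the stronger bound $(\alpha+1) f(S_j) \geq f(S_j \cup T) + \alpha f(S_j \cap T)$. In case~(ii), there is $A \in N(S_j)$ with $f(A) - f(S_j) \geq \tfrac{1}{\alpha \cdot |E|}\bigl(f(S_j \cup T) + \alpha f(S_j \cap T) - (\alpha+1) f(S_j)\bigr)$; since $S_{j+1}$ is a best neighbor we have $\Delta_j \geq f(A) - f(S_j)$, and combining this with the threshold on $\Delta_j$ rearranges to $(\alpha+1+\epsilon) f(S_j) \geq f(S_j \cup T) + \alpha f(S_j \cap T)$. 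Thus $(S_j, S_j)$ is a valid output.

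For the early exit at line~\ref{algline:earlyTerm}, let $i$ and $j$ denote their values at the return, so $j - i \geq 2|E|$. By construction $i$ was not updated during iterations $i+1, \ldots, j$ (otherwise $j-i$ would have reset to $0$), so the check $\Delta_\ell \leq \Delta_i/2$ failed at each such iteration, i.e., $\Delta_\ell \geq \Delta_i/2$ for every $\ell \in \{i, \ldots, j-1\}$ (trivially so at $\ell = i$). Telescoping yields $f(S_j) - f(S_i) = \sum_{\ell=i}^{j-1} \Delta_\ell \geq \tfrac{j-i}{2}\,\Delta_i \geq |E|\,\Delta_i$. Now apply Theorem~\ref{thm:cone-increase} at $S = S_i$ with any $T \in \mathcal{F}$ and $\{A_m\}_m = N(S_i)$. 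Case~(i) gives $(\alpha+1) f(S_i) \geq f(S_i \cup T) + \alpha f(S_i \cap T)$, which combined with $f(S_i) \leq f(S_j)$ already implies~\eqref{eq:result-DCBLS}. Case~(ii) gives $\alpha \cdot |E| \cdot \Delta_i + (\alpha+1) f(S_i) \geq f(S_i \cup T) + \alpha f(S_i \cap T)$; substituting $\alpha \cdot |E| \cdot \Delta_i \leq \alpha\bigl(f(S_j) - f(S_i)\bigr)$ and using $f(S_i) \leq f(S_j)$ again produces $(\alpha+1) f(S_j) \geq f(S_i \cup T) + \alpha f(S_i \cap T)$. Either way, $(S_j, S_i)$ satisfies~\eqref{eq:result-DCBLS}.

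For the iteration count, call a maximal stretch between successive updates of $i$ a \emph{phase}. The early-exit rule guarantees that each phase contains at most $2|E|$ iterations. Each update of $i$ replaces $\Delta_i$ by something at most half as large, whereas the loop invariant $\Delta_i > \tfrac{\epsilon}{\alpha \cdot |E|} f(S_i) \geq \tfrac{\epsilon}{\alpha \cdot |E|} f(S_0)$ keeps $\Delta_i$ above a fixed positive threshold. The standard submodular inequality $\Delta_0 \leq f(S_1) \leq f(\OPT) \leq \sum_{e \in \OPT} f(\{e\}) \leq |E| \cdot f(S_0)$, valid because $f \geq 0$ and $S_0$ is the best singleton, caps the initial value of $\Delta_i$. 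Hence the number of phases is $O\bigl(\log(\alpha \cdot |E|^2/\epsilon)\bigr) = O\bigl(\log(\alpha \cdot |E|/\epsilon)\bigr)$, giving the claimed total $O\bigl(|E| \cdot \log(\alpha \cdot |E|/\epsilon)\bigr)$. The monotone specialization~\eqref{eq:result-DCBLS-monotone} is then immediate from $f(Q \cup T) + \alpha f(Q \cap T) \geq f(T)$, which follows from monotonicity and non-negativity. The main subtlety is the early-exit correctness, where the single-step guarantee of Theorem~\ref{thm:cone-increase} at $S_i$ must be amplified into a bound at $S_j$ via the accumulated gain over the $2|E|$ iterations.
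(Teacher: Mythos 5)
Your proof is correct and follows essentially the same route as the paper's: the same case analysis at the two exit points via Theorem~\ref{thm:cone-increase}, the same telescoping-and-amplification argument $f(S_j)-f(S_i)\geq |E|\cdot\Delta_i$ for the early exit, and the same halving/phase argument for the iteration bound. The only cosmetic difference is that the paper merges cases (i) and (ii) at the early exit by observing $\Delta_i>0$, whereas you treat them separately; both yield the identical conclusion.
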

The above theorem shows that we could also use Algorithm~\ref{alg:DCBLS} for monotone $\mathsf{CSFM}$. In that setting, however, the algorithm described in Theorem~\ref{thm:LS-monotone} is arguably simpler.
\begin{proof}
	We first note that because $f$ is non-negative,~\eqref{eq:result-DCBLS-monotone} is an easy consequence of~\eqref{eq:result-DCBLS} if $f$ is moreover monotone.
	
	We now prove that Algorithm~\ref{alg:DCBLS} works correctly. To this end, assume that for a given input the algorithm terminates and returns the pair $(S,Q)$.
	Since only feasible sets are considered throughout the algorithm, it is clear that $S,Q \in \mathcal{F}$.
	We now fix an arbitrary $T \in \mathcal{F}$ and show that 
	\begin{align}\label{eq:LS-goal-eps-proof} 
	(\alpha + 1 + \epsilon) \cdot f(S) \geq f(Q\cup T) + \alpha \cdot f(Q \cap T)	 \enspace .
	\end{align}

There are two ways for Algorithm~\ref{alg:DCBLS} to terminate. We start by considering the case that it terminates at line~\ref{algline:lateTerm}, i.e., we have $\Delta_j  \leq  \frac{\epsilon}{\alpha \cdot |E|}  \cdot f(S_j)$ for some $j \geq 0$, meaning that $(S,Q) = (S_j, S_j)$.
	We moreover assume that  $(\alpha + 1) \cdot f(S_j) < f( S_j \cup T) + \alpha \cdot f(S_j \cap T) $, since otherwise \eqref{eq:LS-goal-eps-proof} is trivially true. Theorem~\ref{thm:cone-increase} then implies that
	\begin{align*}
		\Delta_j = f(S_{j+1}) - f(S_j) \geq \frac{1}{\alpha \cdot |E|} \cdot \big(  
		f(S_j \cup T) + \alpha \cdot f(S_j \cap T) - (\alpha +1) \cdot f(S_j)     \big) \enspace.
	\end{align*}
	Together with the assumption that $\Delta_j  \leq  \frac{\epsilon}{\alpha \cdot |E|}  \cdot f(S_j)$, this yields \eqref{eq:LS-goal-eps-proof} for $(S,Q) = (S_j, S_j)$.

	Next, assume that Algorithm~\ref{alg:DCBLS} terminates at line~\ref{algline:earlyTerm}, i.e., $j-i \geq 2 \cdot |E| $ for some $i,j \geq 0$, which implies that the returned pair is $(S,Q) = (S_j, S_i)$.
	In this case, we have that $\Delta_i  >  \frac{\epsilon}{\alpha \cdot |E|}  \cdot f(S_i) \geq 0$ (since the algorithm did not terminate earlier) and 	 $\Delta_\ell > \Delta_i / 2 >0$ for every $\ell = i, \ldots, j$.
	Moreover, Theorem~\ref{thm:cone-increase} %
	 yields that by the choice of $S_{i+1}$, we have
	\begin{align*}
	\Delta_i = f(S_{i+1}) - f(S_i) \geq \frac{1}{\alpha \cdot |E|} \cdot \big(  
	f(S_i \cup T) + \alpha \cdot f(S_i \cap T) - (\alpha +1) \cdot f(S_i)     \big) \enspace,
	\end{align*}
which clearly holds if~\ref{thm:cone-increase-ii} of Theorem~\ref{thm:cone-increase} applies, and also if~\ref{thm:cone-increase-i} applies because $\Delta_i >0$.
	Hence, we get 
	\begin{align*}
		f(S_j) - f(S_i)   =  \sum_{\ell=i}^{j-1} \Delta_\ell  \geq \underbrace{(j-i)}_{\geq 2\cdot |E|} \cdot \frac{\Delta_i}{2}
		\geq  \frac{1}{\alpha} \cdot \big(  
		f(S_i \cup T) + \alpha \cdot f(S_i \cap T) - (\alpha +1) \cdot f(S_i)     \big)  \enspace .
	\end{align*}
	Using the fact that $f(S_i) \leq f(S_{i+1}) \leq \ldots \leq f(S_j)$, which holds because the sets $S_\ell$, $\ell = i+1 , \ldots, j$, are obtained by performing most improving steps, this implies 
	\begin{align*}
	(\alpha + 1) \cdot f(S_j) \geq \alpha \cdot f(S_j) + f(S_i)   \geq   
	f(S_i \cup T) + \alpha \cdot f(S_i \cap T)  \enspace ,
	\end{align*}
	as desired.
	
	In order to prove the bound on the number of iterations which Algorithm~\ref{alg:DCBLS} performs, and therefore also that the algorithm actually terminates, we let $0=i_0 < i_1 < i_2 <  \ldots $ be the different values that the variable $i$ in the algorithm takes.
	It is easy to see that whenever the variable $i$ %
	 is set to a new value, there are at most $2 \cdot |E|$ many subsequent iterations in which the value of $i$ remains unchanged, because after precisely $2\cdot |E|$ subsequent iterations without any change in $i$, the algorithm terminates at line~\ref{algline:earlyTerm}.
	
	Letting $\gamma \coloneqq  \lceil \frac{1}{\log 2} \cdot \log \big( \frac{\alpha \cdot |E| \cdot f(\OPT)}{\epsilon \cdot  f(S_0)}  \big) \rceil $, we now claim that $i_\gamma$ never gets defined, %
	meaning that
\begin{equation}\label{eq:upperBoundItDCBLS}
2 \cdot |E| \cdot \gamma +1 = O\bigg(  |E| \cdot  \log \Big( \frac{\alpha \cdot |E| \cdot f(\OPT)}{\epsilon \cdot  f(S_0)}  \Big) \bigg)
\end{equation}
is an upper bound on the number of iterations that Algorithm~\ref{alg:DCBLS} performs.
	Indeed, assume for the sake of contradiction that $i_\gamma$ gets defined.
	We first observe that $\Delta_{i_{\ell}} \leq  \Delta_{i_{\ell-1}} /2 $ for each $\ell = 1, \ldots, \gamma$.
	Moreover,  it trivially holds that $\Delta_{i_0} = \Delta_0 = f(S_1) - f(S_0) \leq f(\OPT)$. 
	Hence, it follows that 
	\begin{align}\label{eq:bound-DeltaIAlpha}
		\Delta_{i_{\gamma}} \leq \Big( \frac{1}{2} \Big)^\gamma \cdot \Delta_{i_0} \leq \frac{\epsilon}{\alpha \cdot |E|} \cdot f(S_0) \leq \frac{\epsilon}{\alpha\cdot |E|} \cdot f(S_{i_{\gamma}}) \enspace ,
	\end{align}
which contradicts the assumption that the algorithm did not terminate %
before the variable $i$ was set to $i_\gamma$. %
Hence,~\eqref{eq:upperBoundItDCBLS} is indeed an upper bound on the number of iterations performed by Algorithm~\ref{alg:DCBLS}. To prove the bound on the number of iterations claimed by Theorem~\ref{thm:result-DCBLS}, it suffices to observe that $\frac{f(\OPT)}{f(S_0)} \leq |E|$. This holds because non-negativity and submodularity of $f$ together with the fact that $S_0$ is the best singleton imply that for any set $F\subseteq E$, we have $f(F) \leq \sum_{u\in F}f(\{u\}) \leq |E|\cdot f(S_0)$. %
\end{proof}

When we want to maximize non-monotone submodular functions, a single pair of sets that satisfies~\eqref{eq:LS-goal-eps}  is not enough (neither is a single local maximizer).
To get around this problem, we use the idea of~\cite{lee_2010_maximizing,Feldman2013} and iteratively call our basic procedure, i.e., Algorithm~\ref{alg:DCBLS}, to build up a sequence of pairs $(S_i, Q_i)$ on successively smaller ground sets.

Algorithm~\ref{alg:DCILS} formally describes our procedure for non-monotone submodular functions. %
In its formulation, the following additional notation is used.
For a family of feasible sets $\mathcal{F} \subseteq 2^E$, a subset $E' \subseteq E$, and a neighborhood function $N \colon \mathcal{F} \to 2^\mathcal{F}$, we denote by $N_{E'}$ the restricted neighborhood function $N_{E'} \colon \mathcal{F}\cap 2^{E'} \to 2^{\mathcal{F}\cap 2^{E'}}$ defined by $N_{E'}(S) \coloneqq \{ A \in N(S) \mid A \subseteq E' \}$ for each $S \in \mathcal{F}\cap2^{E'}$.
Note that if  $N $ is an $\alpha$-conic neighborhood function for $\mathcal{F}$, %
then $N_{E'}$ is an $\alpha$-conic neighborhood function for $\mathcal{F}\cap 2^{E'}$.
Moreover, by definition it holds for any $S \in \mathcal{F}\cap 2^{E'}$ that the size of the restricted neighborhood  $N_{E'}(S)$ is at most the size of $N(S)$.
Hence, reducing the size of the ground set does not result in an increase in the size of any neighborhood.

\begin{algorithm}[t]
	\caption{Iterative-Local-Search$( E, \mathcal{F}, \alpha, N, f, \epsilon )$}\label{alg:DCILS}
	
	\KwData{Ground set $E$,  down-closed family of feasible sets $\mathcal{F} \subseteq 2^E$,  $\alpha$-conic neighborhood function $N \colon \mathcal{F} \to 2^\mathcal{F}$ for $\mathcal{F}$, where $\alpha \geq 1$, submodular function $f \colon 2^E \to \nnR$, error parameter $\epsilon > 0$.}%
	\KwResult{Set $S \in \mathcal{F}$ with $(\lfloor\alpha\rfloor + 1) \cdot (\alpha + 1 + \epsilon) \cdot f(S) \geq \lfloor\alpha\rfloor \cdot f(T) $ for any $T \in \mathcal{F} $.}
	
	$i \gets 1$ \;
	$E_i \gets E$ \;
	\While{$i \leq \lfloor\alpha\rfloor + 1$}{
		\uIf{$E_i \neq \emptyset$}{
		$(S_i,Q_i) \gets \algoname{Basic-Local-Search$( E_i, \mathcal{F}\cap 2^{E_i}, \alpha,  N_{E_i},f|_{E_i} , \epsilon)$}$ \label{algline:callToDCBLS}\;
		}
		\Else{
		$(S_i,Q_i) \gets (\emptyset, \emptyset)$ \;
		}
		$E_{i+1} \gets E_i \setminus Q_i$ \;
		$ i \gets i + 1$ \;
	}
	$S \gets \arg\max \{  f(S') \mid S' \in \{S_1, \ldots, S_{\lfloor\alpha\rfloor + 1} \} \}$  \;
	\Return $S$
	
\end{algorithm}

The following theorem shows that Algorithm~\ref{alg:DCILS} works for non-monotone $\mathsf{CSFM}$.
\begin{thm}\label{thm:result-DCILS}
Algorithm~\ref{alg:DCILS} returns a set $S\in \mathcal{F}$ satisfying
	\begin{align}\label{eq:result-DCILS}
	(\lfloor\alpha\rfloor + 1) \cdot (\alpha + 1 + \epsilon) \cdot f(S) \geq \lfloor\alpha\rfloor \cdot f( T)	\qquad \forall T \in \mathcal{F}\enspace.
	\end{align}
Moreover, the total number of most improving steps performed by Algorithm~\ref{alg:DCILS} (through the calls to \algoname{Basic-Local-Search}) is bounded by \smash{$O\big( \alpha \cdot  |E| \cdot  \log \big( \frac{\alpha \cdot |E| }{\epsilon }  \big) \big) $}.
\end{thm}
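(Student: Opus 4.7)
Fix any $T \in \mathcal{F}$ and set $k := \lfloor\alpha\rfloor+1$. The plan is to combine the per-iteration guarantees of \algoname{Basic-Local-Search} via the Lov\'asz extension of $f$. Define the \emph{residual targets} $T_i := T \cap E_i$ for $i = 1,\ldots,k+1$; down-closedness of $\mathcal{F}$ yields $T_i \in \mathcal{F} \cap 2^{E_i}$. Let also $B_i := T_i \setminus T_{i+1}$ and observe that $T_{i+1} = T_i \setminus Q_i$, hence $B_i = Q_i \cap T_i \subseteq Q_i$ and $Q_i \cup T_i = Q_i \cup T_{i+1}$. Invoking Theorem~\ref{thm:result-DCBLS} at the $i$-th call with the choice $T := T_i$ then gives
\[
(\alpha+1+\epsilon)\, f(S_i) \;\geq\; f(Q_i \cup T_{i+1}) + \alpha\, f(B_i), \qquad i = 1,\ldots,k,
\]
with the degenerate case $E_i = \emptyset$ (so $S_i = Q_i = T_i = \emptyset$) being trivial. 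Summing and using $k f(S) \geq \sum_i f(S_i)$, the theorem will follow once we establish
\begin{equation}\label{eq:proposal-key}
\sum_{i=1}^{k} f(Q_i \cup T_{i+1}) + \alpha \sum_{i=1}^k f(B_i) \;\geq\; (k-1)\, f(T).
\end{equation}

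To derive~\eqref{eq:proposal-key}, I plan to feed the non-negative combination $y := \sum_i \chi^{Q_i \cup T_{i+1}} + \alpha \sum_i \chi^{B_i}$ into the convex-closure viewpoint on $f_{\mathsf L}$. Using that the $Q_i$ are pairwise disjoint, $Q_i \cap T_{i+1} = \emptyset$, $B_i \subseteq Q_i$, and that $T_{i+1} = B_{i+1} \cup \cdots \cup B_k \cup T_{k+1}$ is a disjoint decomposition, a short expansion collapses $y$ to
\[
y \;=\; \chi^{\cup_i Q_i} \;+\; \sum_{j=1}^{k} (j-1+\alpha)\, \chi^{B_j} \;+\; k\, \chi^{T_{k+1}}.
\]
So $y$ takes value $1$ on $(\cup_i Q_i)\setminus T$, value $j+\alpha$ on $B_j$, and value $k$ on $T_{k+1}$. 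Since $\alpha \geq k-1$, one has $1 < k \leq 1+\alpha \leq \cdots \leq k+\alpha$, and the decisive observation is that for every $w \in (1,k]$ the super-level set $\{e : y(e) \geq w\}$ equals $T$ (it catches $T_{k+1}$ at level $k$ and all of $B_1,\ldots,B_k$ at levels $\geq 1+\alpha \geq k$, whose disjoint union is $T$). Setting $\Lambda := k(1+\alpha)$, so that $y/\Lambda \in [0,1]^E$, the convex-closure identity $\sum_j \lambda_j f(A_j) \geq \Lambda\, f_{\mathsf L}(y/\Lambda)$, which holds for submodular $f$, combined with formula~\eqref{eq:lovaszExtSlices} and non-negativity of $f$, will then yield
\[
\sum_{i=1}^k f(Q_i \cup T_{i+1}) + \alpha \sum_{i=1}^k f(B_i) \;\geq\; \int_0^{\Lambda} f\big(\{e : y(e) \geq w\}\big)\, dw \;\geq\; \int_1^{k} f(T)\, dw \;=\; (k-1)\, f(T),
\]
which is~\eqref{eq:proposal-key}.

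For the running time, Algorithm~\ref{alg:DCILS} makes $k = O(\alpha)$ calls to \algoname{Basic-Local-Search}, each of which requires $O(|E|\log(\alpha|E|/\epsilon))$ most improving steps by Theorem~\ref{thm:result-DCBLS} (since $|E_i| \leq |E|$), giving the claimed total of $O(\alpha|E|\log(\alpha|E|/\epsilon))$.

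The step I expect to be the main obstacle is producing $f(T)$ in~\eqref{eq:proposal-key} without monotonicity: the per-iteration bound $f(Q_i \cup T_{i+1}) \geq f(T_{i+1})$ is generally false, so $f(T)$ has to emerge from an averaging of all $k$ inequalities. The weights $1$ on $Q_i\cup T_{i+1}$ and $\alpha$ on $B_i$ are tuned precisely so that the super-level sets of $y$ on the slab $(1,k]$ collapse to $T$; the hypothesis $\alpha \geq k-1 = \lfloor\alpha\rfloor$ is what guarantees this slab has positive width $k-1$, which is exactly the factor appearing in front of $f(T)$.
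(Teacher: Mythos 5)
Your proposal is correct and follows essentially the same route as the paper's proof: the same decomposition $T_i = T\cap E_i$, the same per-call application of Theorem~\ref{thm:result-DCBLS}, and the same key inequality $\sum_i f(Q_i\cup T_i)+\alpha\sum_i f(Q_i\cap T_i)\geq \lfloor\alpha\rfloor\, f(T)$ proved by feeding the weighted combination into the Lov\'asz extension and comparing its convex-closure upper bound with its level-set lower bound. Your unnormalized vector $y$ is exactly $(\lfloor\alpha\rfloor+1)(\alpha+1)$ times the paper's $x$, and your observation that the super-level sets on the slab $(1,k]$ equal $T$ is the same as the paper's bounds $x(e)\geq 1/(\alpha+1)$ on $T$ and $x(e)\leq 1/((\lfloor\alpha\rfloor+1)(\alpha+1))$ off $T$.
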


\begin{proof}
	We first note that the bound on the number of iterations immediately follows from Theorem~\ref{thm:result-DCBLS} since Algorithm~\ref{alg:DCILS} calls \algoname{Basic-Local-Search} at most $\lfloor \alpha \rfloor + 1$ times. 
	Moreover, it is clear that the set $S$ returned by the algorithm is in $\mathcal{F}$.
	
	Fixing an arbitrary $T \in \mathcal{F}$, we show that $S$ satisfies~\eqref{eq:result-DCILS}. 
	To do so, we let $(S_i, Q_i)$, $i = 1, \ldots, \lfloor\alpha\rfloor + 1$, be the pairs of sets constructed by the algorithm in the while-loop.
		Defining the sets $E_i$ as in the algorithm, it holds that $S_i, Q_i \subseteq E_i$ for every $i = 1, \ldots, \lfloor\alpha\rfloor + 1$.
	Moreover, the modified ground sets $E_i$ form a chain, i.e., we have that $E_{\lfloor\alpha\rfloor + 1} \subseteq E_{\lfloor\alpha\rfloor} \subseteq \ldots \subseteq E_2 \subseteq E_1 = E$. 
	For $i = 1, \ldots, \lfloor\alpha\rfloor+1$, we now let $T_i \coloneqq T \cap E_i$ and observe that since $\mathcal{F}$ is down-closed, we have $T_i \in \mathcal{F} \cap 2^{E_i}$.  Theorem~\ref{thm:result-DCBLS} thus implies 
	\begin{align}\label{eq:proof-DCILS-result-DCBLS}
		(\alpha + 1 + \epsilon) \cdot f(S) \geq (\alpha + 1 + \epsilon) \cdot f(S_i) \geq f(Q_i \cup T_i) + \alpha \cdot f(Q_i \cap T_i)
	\end{align}
	for every $i = 1, \ldots, \lfloor\alpha\rfloor + 1$ (this trivially holds if $E_i = \emptyset$ since $f$ is non-negative).
	Adding the $\lfloor\alpha\rfloor + 1$ instances of inequality~\eqref{eq:proof-DCILS-result-DCBLS} yields
	\begin{align}\label{eq:proof-DCILS-sum}
		(\lfloor\alpha\rfloor + 1) \cdot (\alpha + 1 + \epsilon) \cdot f(S) \geq
		\sum_{i=1}^{\lfloor\alpha\rfloor + 1} f(Q_i \cup T_i) + \alpha \cdot \sum_{i=1}^{\lfloor\alpha\rfloor + 1} f(Q_i \cap T_i) \enspace .
	\end{align}
We claim that
\begin{align}\label{eq:proof-DCILS-sumLovasz}
\sum_{i=1}^{\lfloor\alpha\rfloor + 1} f(Q_i \cup T_i) + \alpha \cdot \sum_{i=1}^{\lfloor\alpha\rfloor + 1} f(Q_i \cap T_i) 
\geq \lfloor\alpha\rfloor \cdot f(T)  \enspace ,
\end{align}
which, together with~\eqref{eq:proof-DCILS-sum}, finishes the proof of Theorem~\ref{thm:result-DCILS}.
One way to prove~\eqref{eq:proof-DCILS-sumLovasz} is to employ a reasoning presented in~\cite{Feldman2013}, which uses an inductive argument. 
What we do, though, is proving~\eqref{eq:proof-DCILS-sumLovasz} by exploiting the power of the Lov\'asz extension $f_{\mathsf{L}}$ of $f$ again.
We think that this makes the argument more concise and easier to follow, while again highlighting the strength of the Lov\'asz extension to prove inequalities in the context of submodular maximization.

Letting
\begin{equation*}
x \coloneqq \frac{1}{(\lfloor\alpha\rfloor +1) \cdot (\alpha + 1)} \cdot
 \Bigg( \sum_{i=1}^{\lfloor\alpha\rfloor + 1} \chi^{Q_i \cup T_i} + \alpha \cdot \sum_{i=1}^{\lfloor\alpha\rfloor + 1} \chi^{Q_i \cap T_i}  \Bigg) \in [0,1 ]^E \enspace,
\end{equation*}
we first observe that
\begin{alignat*}{3}
x(e) &\geq \frac{1}{\alpha+1} \qquad &&\forall e \in T\enspace, \text{ and }\\
x(e) &\leq \frac{1}{(\lfloor\alpha\rfloor+1) \cdot (\alpha + 1)} \qquad &&\forall e\in E\setminus T\enspace.
\end{alignat*}
Indeed, for any $e\in T$, either there is some $i\in [\lfloor\alpha\rfloor+1]$ such that $e\in Q_i$ or $e \notin Q_i$ for all $i\in [\lfloor\alpha\rfloor+1]$.
In the first case, the $i$-th term in each of the two sums in the definition of $x$ contains a $1$ in the coordinate corresponding to $e$, implying $x(e) \geq 1/(\lfloor\alpha\rfloor +1)  \geq 1/(\alpha+1)$, as desired.
In the second case, we have for any $i\in [\lfloor\alpha\rfloor+1]$ that $e\in E_i$, and thus also $e\in T_i$; this in turn means that each term $\chi^{Q_i\cup T_i}$ in the first sum in our definition of $x$ has a $1$ in the coordinate corresponding to $e$, thus again implying $x(e) \geq 1/(\alpha+1)$. 
Conversely, any element $e\in E\setminus T$ is not contained in any $T_i$ for $i\in[\lfloor\alpha\rfloor+1]$, and it is contained in $Q_i$ for at most one $i\in [\lfloor\alpha\rfloor+1]$ because the $Q_i$ are disjoint. This implies $x(e) \leq 1/((\lfloor\alpha\rfloor +1)\cdot(\alpha+1))$, as claimed.

Since $f$ is non-negative, we obtain by the definition of the Lov\'asz extension (see~\eqref{eq:lovaszExtSlices}) that
\begin{equation*}
f_{\mathsf{L}}(x) \geq  \bigg(\frac{1}{\alpha+1} - \frac{1}{(\lfloor\alpha\rfloor+1)\cdot(\alpha+1)}\bigg)\cdot f(T) = \frac{\lfloor\alpha\rfloor}{(\lfloor\alpha\rfloor+1)\cdot(\alpha + 1)} \cdot f(T)\enspace.
\end{equation*}
By using the viewpoint that the Lov\'asz extension of a submodular function is the same as its convex closure (see~\eqref{eq:convexClosure}), we have
\begin{equation*}
f_{\mathsf{L}}(x) \leq \frac{1}{(\lfloor\alpha\rfloor+1)\cdot(\alpha+1)} \cdot \Bigg(
\sum_{i=1}^{\lfloor\alpha\rfloor+1} f(Q_i \cup T_i) + \alpha\cdot
\sum_{i=1}^{\lfloor\alpha\rfloor+1} f(Q_i \cap T_i)
\Bigg)\enspace,
\end{equation*}
and~\eqref{eq:proof-DCILS-sumLovasz} follows by combining the above two inequalities.
\end{proof}

We remark that instead of repeating the while-loop of Algorithm~\ref{alg:DCILS} exactly $\lfloor\alpha\rfloor+1$ times, one could also stop at any iteration $\kappa \leq \lfloor\alpha\rfloor+1$. The above proof technique would then lead to a generalization of Theorem~\ref{thm:result-DCILS}, where the guarantee on $S$ becomes 
$\kappa \cdot (\alpha+1+\epsilon)\cdot f(S) \geq (\kappa-1)\cdot f(T)$ for any $T\in \mathcal{F}$, and the number of most improving steps is bounded by $O(\kappa\cdot |E| \cdot \log(\frac{\alpha\cdot |E|}{\epsilon}))$. However, we note that we cannot derive the same conclusion for $\kappa> \lfloor\alpha\rfloor+1$.
For all applications we consider later, we only need the statement provided by Theorem~\ref{thm:result-DCILS}. Because of this and for the sake of simplicity, we focused on this more specialized statement (which corresponds to $\kappa = \lfloor\alpha\rfloor +1$).

\subsection{Applications}\label{subsec:localSearchApp}

In a single step of our local search procedures, we go through the neighborhood of a given feasible set to find its best neighbor. Thus, in order to achieve polynomial running time, we need to use a neighborhood function $N$ for $\mathcal{F} \subseteq 2^E$ that has the property that the neighborhood $N(S)$ of any set $S \in \mathcal{F}$ can be constructed efficiently. In particular, it is necessary (but not sufficient) that the size of the neighborhood of any feasible set is polynomially bounded in the size of the input.
On the other hand, for our approach to yield good approximations, we need neighborhood functions that are $\alpha$-conic for $\alpha \geq 1$ as small as possible.
Unfortunately (though not surprisingly), these two goals usually conflict. %

Because the polyhedral neighborhood function $N_\mathcal{F}$ for $\mathcal{F}$ is always $1$-conic,
we obtain the following corollary through Theorems~\ref{thm:LS-monotone}  and \ref{thm:result-DCILS}.
\begin{cor}\label{cor:LS-polyhedralN}
Consider a $\mathsf{CSFM}$ problem with  feasibility family $\mathcal{F}$, and assume that for any $S\in \mathcal{F}$, one can efficiently construct $N_{\mathcal{F}}(S)$. Given a starting set $S_0 \in \mathcal{F}$ and fixing $\epsilon >0$, the algorithm described in Theorem~\ref{thm:LS-monotone}, respectively Algorithm~\ref{alg:DCILS}, with neighborhood function $N_{\mathcal{F}}$ allows for efficiently obtaining a solution that is a:
\begin{enumerate}[label=\normalfont(\roman*),nosep]
\item $\frac{1}{2+\epsilon}$-approximation if $f$ is monotone;
\item\label{item:LS-polyhedralNonM} $\frac{1}{4+\epsilon}$-approximation even for $f$ non-monotone if $\mathcal{F}$ is down-closed.
\end{enumerate}
\end{cor}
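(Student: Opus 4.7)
The plan is to derive both parts of the corollary as essentially immediate consequences of Theorem~\ref{thm:LS-monotone} (for the monotone case) and Theorem~\ref{thm:result-DCILS} together with Algorithm~\ref{alg:DCILS} (for the non-monotone down-closed case), instantiated with $\alpha = 1$ and neighborhood function $N = N_{\mathcal{F}}$. The key algebraic fact that makes both cases work out cleanly is that the polyhedral neighborhood function $N_{\mathcal{F}}$ is $1$-conic for every feasibility family $\mathcal{F} \subseteq 2^E$, as already recorded in the paper.

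First I would handle the monotone case. I invoke Theorem~\ref{thm:LS-monotone} with the given starting set $S_0 \in \mathcal{F}$, neighborhood function $N_{\mathcal{F}}$, and $\alpha = 1$. The theorem guarantees that after a number of most improving steps bounded by $|E| \cdot \lceil \log((2+\epsilon)/\epsilon)\rceil$, one obtains a set $S_\gamma \in \mathcal{F}$ satisfying $(2+\epsilon)\cdot f(S_\gamma) \geq f(\OPT)$. Each most improving step only requires constructing $N_{\mathcal{F}}(S)$ (possible by assumption and therefore of polynomial size) and one value-oracle evaluation of $f$ per neighbor, so the whole procedure runs in polynomial time. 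This yields the claimed $\frac{1}{2+\epsilon}$-approximation.

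Next I would treat case~\ref{item:LS-polyhedralNonM}. Here I invoke Algorithm~\ref{alg:DCILS} with $\alpha = 1$, $N = N_{\mathcal{F}}$, and error parameter $\epsilon' \coloneqq \epsilon/2$. Since $\lfloor 1 \rfloor + 1 = 2$, Theorem~\ref{thm:result-DCILS} asserts that the returned $S \in \mathcal{F}$ satisfies
\begin{equation*}
2 \cdot (2 + \epsilon') \cdot f(S) = (4 + \epsilon) \cdot f(S) \geq f(T) \qquad \forall T \in \mathcal{F}\enspace,
\end{equation*}
which is exactly a $\frac{1}{4+\epsilon}$-approximation. The total number of most improving steps is $O(|E|\log(|E|/\epsilon))$ by Theorem~\ref{thm:result-DCILS}, and each step can be carried out efficiently because, for every subset $E' \subseteq E$ we encounter, the restricted neighborhood $N_{E'}(S) = \{A \in N_{\mathcal{F}}(S) \mid A \subseteq E'\}$ is obtained by a simple filtering of $N_{\mathcal{F}}(S)$, which by assumption is constructible in polynomial time. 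The algorithm only needs the value oracle of $f$ (restricted to $E_i$), the membership oracle of $\mathcal{F}$ on subsets of $E_i$, and the neighborhood construction, so the overall running time is polynomial.

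There is essentially no conceptual obstacle here: the work was already done in establishing that $N_{\mathcal{F}}$ is $1$-conic and in proving Theorems~\ref{thm:LS-monotone} and~\ref{thm:result-DCILS}. The only minor point to verify carefully is the rescaling of $\epsilon$ in the non-monotone case so that the approximation factor appears in the stated form $\frac{1}{4+\epsilon}$ rather than $\frac{1}{4+2\epsilon}$, and the observation that restricting the polyhedral neighborhood of $\mathcal{F}$ to subsets of a shrunken ground set $E_i$ inherits the efficient constructibility assumed of $N_{\mathcal{F}}$ on the original $E$.
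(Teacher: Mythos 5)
Your proposal is correct and follows essentially the same route as the paper: both parts are obtained by instantiating Theorem~\ref{thm:LS-monotone}, respectively Theorem~\ref{thm:result-DCILS}, with $\alpha=1$ (using that $N_{\mathcal{F}}$ is $1$-conic), rescaling the error parameter to $\epsilon/2$ in the non-monotone case, and noting that the restricted neighborhoods $(N_{\mathcal{F}})_{E_i}$ remain efficiently constructible. The paper phrases the last point geometrically (the restriction is the polyhedral neighborhood of a face of $P_{\mathcal{F}}$) while you phrase it as filtering $N_{\mathcal{F}}(S)$, but these amount to the same observation.
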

Notice that to obtain point~\ref{item:LS-polyhedralNonM} of the above corollary, the $\epsilon$ to be used in Theorem~\ref{thm:result-DCILS} (and therefore also in Algorithm~\ref{alg:DCILS}) is not the same as the one of the corollary; for example, invoking Theorem~\ref{thm:result-DCILS} with an error parameter of $\epsilon/2$ works out.
Also note that the subproblems we solve in line~\ref{algline:callToDCBLS} of Algorithm~\ref{alg:DCILS} use restrictions $(N_{\mathcal{F}})_{E_i}$ of the neighborhood function $N_{\mathcal{F}}$. Such a restriction $(N_{\mathcal{F}})_{E_i}$ corresponds to the polyhedral neighborhood function of the face of the polytope $P_{\mathcal{F}}$ where all coordinates of $E\setminus E_i$ are set to $0$. Since the polyhedral neighborhood of a face of a polytope is a subset of the polyhedral neighborhood of the original polytope, we can efficiently compute the neighborhood on any face if we can do so for the whole polytope. Hence, all subproblems encountered in line~\ref{algline:callToDCBLS} of Algorithm~\ref{alg:DCILS} can indeed be solved efficiently.

Corollary~\ref{cor:LS-polyhedralN} has several interesting implications. In particular, it shows that for monotone $\mathsf{CSFM}$ over a matroid constraint, it does not only hold that any local optimum is a $1/2$-approximation (see Corollary~\ref{cor:localOptPolyhedron} or~\cite{FisherNemhauserWolsey1978}), but also that we can exploit this fact algorithmically to get an efficient $1/(2+\epsilon)$-approximation.
For this, the only property about matroids that we need is that if $\mathcal{F}$ is equal to all independent sets of a matroid on a ground set of size $n$, then the corresponding polytope $P_{\mathcal{F}}$, known as the \emph{matroid polytope}, has only $O(n^2)$ different edge directions at any given vertex (see, e.g., Theorem 40.6 in volume B of~\cite{schrijver_2003_combinatorial}).
Moreover, all these edge directions can be constructed easily, which 
allows for computing the polyhedral neighborhood efficiently.
Hence, Corollary~\ref{cor:LS-polyhedralN} shows that there is no need to rely on deep combinatorial properties to prove that natural local search algorithms work for $\mathsf{CSFM}$.

Another implication of Corollary~\ref{cor:LS-polyhedralN} is that constant-factor approximations for $\mathsf{CSFM}$ can be obtained for any  $\mathcal{F}$ which is given through an inequality description of $P_\mathcal{F}$ and satisfies that $P_{\mathcal{F}}$ is non-degenerate. %
If the submodular function is non-monotone, then $\mathcal{F}$ needs furthermore to be down-closed.

To our knowledge, this %
 is the first result on local search for $\mathsf{CSFM}$ of such generality.
Moreover, it shows that, conceptually, the basic idea of the simplex algorithm also works for submodular maximization. In general, it is difficult to find good neighbors, though.

For many relevant families $\mathcal{F} \subseteq 2^E$, however, we cannot afford to search the whole polyhedral neighborhoods since their sizes may be exponential in the size of the ground set.
In these cases, the challenge is to come up with a good notion of neighborhood that can be constructed efficiently for every feasible set while still permitting good approximations. 
One neighborhood function that has a particularly simple form is defined below. It is inspired by~\cite{LeeSviridenkoVondrak2010} and~\cite{FeldmanNaorSchwartzWard2011}, and stems from the idea that two sets should be considered neighboring if they differ in few elements.
\begin{defn}
	Let $k,p  \geq 1$. For $\mathcal{F} \subseteq 2^E$, the \emph{$(k,p)$-swap neighborhood function} $N_p^k \colon \mathcal{F} \to 2^\mathcal{F}$ is defined by $N_p^k(S) \coloneqq \{ T \in \mathcal{F} \mid |T \setminus S| \leq p,\ |S\setminus T | \leq (k-1)p+1 \}$ for each $S \in \mathcal{F}$. Hence, $T \in \mathcal{F}$ is a neighbor of $S \in \mathcal{F}$ if and only if $T$ can be obtained from $S$ by adding at most $p$ %
	 and deleting at most $(k-1)p + 1$ elements.
\end{defn}
Note that if $k$ and $p$ are assumed to be constant, we can efficiently construct $N_p^k(S)$ for any $S \in \mathcal{F}$. 

The $(k,p)$-swap neighborhood turns out to be very useful when dealing with so-called $k$-intersection systems, which are the intersection of $k$ matroids on a common ground set, and $k$-exchange systems (the latter were introduced in~\cite{FeldmanNaorSchwartzWard2011} and we provide a formal definition in Appendix~\ref{append:kExchange}). 
Lee, Sviridenko, and Vondr\'ak~\cite{LeeSviridenkoVondrak2010} showed that for arbitrary but fixed $k \geq 2$ and $\epsilon >0$, there exists a polynomial local search algorithm for maximizing a non-negative submodular function $f$ over a $k$-intersection system $\mathcal{F}$ with approximation guarantee $\frac{1}{k+ \epsilon}$ for monotone $f$ and $\frac{k-1}{k^2 + \epsilon}$ for general $f$. 
More precisely, they do not explicitly state an algorithm for the non-monotone case but rely on an approach introduced by Lee, Mirrokni, Nagarajan, and Sviridenko~\cite{lee_2010_maximizing}, which, like our Algorithm~\ref{alg:DCILS}, makes repeated calls to a simpler ``basic'' procedure; %
\cite{LeeSviridenkoVondrak2010}~provides an improved basic procedure (for monotone functions) which gives the above approximation guarantee for non-monotone functions when using it within the high-level approach from~\cite{lee_2010_maximizing}.
Thereafter, Feldman, Naor, Schwartz, and Ward~\cite{FeldmanNaorSchwartzWard2011}, respectively Feldman~\cite{Feldman2013}, showed that the same results also hold for $k$-exchange systems. For the non-monotone case, Feldman~\cite{Feldman2013} iteratively uses the algorithm from~\cite{FeldmanNaorSchwartzWard2011} 
as basic procedure, again using the same high-level framework as~\cite{lee_2010_maximizing}.
For monotone $f$, Ward~\cite{Ward2012} later provided a different local search algorithm that attains an %
 approximation ratio of $ \frac{2}{k + 3 + \epsilon}$ for $k$-exchange systems, and, contrary to previous approaches, has a polynomial running time dependency on $1 / \epsilon$.
The techniques used in that procedure do not seem to generalize to the non-monotone case, however.

In the following, we show how our framework allows for obtaining the same approximation guarantees for $k$-intersection systems as in~\cite{LeeSviridenkoVondrak2010} and for $k$-exchange systems as in~\cite{FeldmanNaorSchwartzWard2011,Feldman2013} in a unifying way.\footnote{%
We have to consider both $k$-intersection systems and $k$-exchange systems since neither of these two classes of set systems contains the other (see~\cite{Feldman2013}).} 
While our approach achieves the same approximation guarantees, we provide a somewhat cleaner and disentangled analysis. In addition, our general local search procedure even requires %
fewer steps (we will expand on this later).
Again, all we need to do to apply our approach is showing that the $(k,p)$-swap neighborhood is $\alpha$-conic for the considered constraints (for some appropriately chosen $\alpha \geq 1$). This is the statement of the following theorem.
\begin{thm}\label{thm:LS-geom-ksystems}
	Let $\mathcal{F} \subseteq 2^E$ be a $k$-intersection system or a $k$-exchange system for $k \geq 2$. Then, for any $p\geq 1$, $N_p^k$ is a $(k-1+1/p)$-conic neighborhood function for $\mathcal{F}$.
\end{thm}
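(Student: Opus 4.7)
Fix $S, T \in \mathcal{F}$ and set $\alpha := k - 1 + 1/p$. My first step is to reduce the $\alpha$-conic condition to a marginal-matching problem. A direct calculation gives
\[
\chi^T + (\alpha-1)\chi^{S\cap T} - \alpha\,\chi^S = \chi^{T\setminus S} - \alpha\,\chi^{S\setminus T},
\]
so the $\alpha$-conic condition for $N_p^k$ amounts to writing $\chi^{T\setminus S} - \alpha\,\chi^{S\setminus T}$ as a non-negative combination of differences $\chi^A - \chi^S$ with $A \in N_p^k(S)$. I would restrict attention to the ``pure swap'' neighbors of the form $A = (S\setminus D)\cup C$ with $C \subseteq T\setminus S$, $|C|\leq p$, $D \subseteq S \setminus T$, $|D|\leq (k-1)p+1$, and $A \in \mathcal{F}$; for such $A$ the difference $\chi^A - \chi^S = \chi^C - \chi^D$ is supported on $S\sdiff T$, and the task reduces to producing non-negative weights $\lambda_A$ over such swaps realising the marginals
\[
\sum_{A} \lambda_A\,\chi^{C_A} = \chi^{T\setminus S}
\qquad \text{and} \qquad
\sum_{A} \lambda_A\,\chi^{D_A} \leq \alpha\,\chi^{S\setminus T},
\]
with equality achievable afterwards by padding with trivial neighbors $A = S\setminus\{u\}$ (feasible by down-closedness of $\mathcal{F}$, which both $k$-intersection and $k$-exchange systems satisfy).

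For a $k$-exchange system, I would unpack the defining exchange property to obtain a mapping $Y\colon T\setminus S\to 2^{S\setminus T}$ such that every group $G \subseteq T\setminus S$ gives rise to a feasible swap $(S\setminus\bigcup_{e\in G}Y(e))\cup G$, with $|Y(e)|\leq k$ and with bounded multiplicity of any fixed $u\in S\setminus T$ among the sets $Y(e)$. Averaging uniformly over all $p$-subsets $G$ of $T\setminus S$ and rescaling by $|T\setminus S|/p$ produces a distribution of swaps whose in-marginal is exactly $\chi^{T\setminus S}$ and whose out-marginal is dominated entrywise by $\alpha\,\chi^{S\setminus T}$. The ``+1'' in $(k-1)p+1$ gives exactly the slack needed to promote the average ratio of removals to insertions from $k-1$ to $k-1+1/p$, which is the key bookkeeping point.

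For a $k$-intersection system, I plan to invoke the swap lemma at the heart of the analysis of Lee, Sviridenko, and Vondr\'ak~\cite{LeeSviridenkoVondrak2010}: for $S, T$ simultaneously independent in $k$ matroids, there exists a weighted family of pairs $(C_i, D_i)$ of the required shape with $(S\setminus D_i)\cup C_i\in \mathcal{F}$ whose aggregate in- and out-marginals match $\chi^{T\setminus S}$ and $\alpha\,\chi^{S\setminus T}$. The existence of this fractional swap distribution is a simultaneous matroid exchange statement across all $k$ matroids, typically derived via matroid union (or equivalently, LP duality combined with a Hall-type certificate on the bipartite swap graph); the $+1$ in $(k-1)p+1$ reflects the ``primary'' matroid being handled for free while each of the remaining $k-1$ contributes at most $p$ to the removal side per swap.

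The main technical obstacle is the $k$-intersection case: producing the fractional family of swaps with the right marginals is not a one-shot exchange argument but requires the simultaneous matroid exchange machinery sketched above. By contrast, once the exchange map $Y$ is in hand, the $k$-exchange case is essentially an averaging computation, with the only subtlety being the precise tracking of the extra ``+1'' removal slot that yields $\alpha = k-1+1/p$ rather than $k-1$.
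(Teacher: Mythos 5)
Your reduction of the $\alpha$-conic condition to matching the marginals of $\chi^{T\setminus S}$ and $\alpha\,\chi^{S\setminus T}$ by a weighted family of feasible swaps, with the deficit on the removal side absorbed by the trivial neighbors $S\setminus\{u\}$, is the right framing and coincides with how the paper organizes both cases (the padding step is isolated there as Lemma~\ref{lem:CDownClosed}, which says the shifted cone generated by $N_p^k(S)$ contains every point between $0$ and any of its points). The $k$-intersection half of your sketch is essentially a citation of the same machinery the paper invokes (the exchange decompositions of Lee--Sviridenko--Vondr\'ak and Chekuri--Vondr\'ak--Zenklusen), though note that the paper first reduces to the case where $S$ and $T$ are common \emph{bases} by adding dummy elements, and obtains the constant $k-1+1/p$ by treating the first \emph{two} matroids jointly (contributing $1+1/p$ via Lemma~\ref{lem:2matIntersection}) and each of the remaining $k-2$ matroids separately (contributing $1$ each) --- not by handling one matroid ``for free'' while the other $k-1$ each contribute.

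The genuine gap is in the $k$-exchange case, which you describe as ``essentially an averaging computation.'' Averaging uniformly over all $p$-subsets $G$ of $T\setminus S$ fails on both counts. First, the resulting swap removes $\bigcup_{e\in G}Y_e$, a set of size up to $kp$, which exceeds the allowed $(k-1)p+1$ as soon as $p\geq 2$; these sets are therefore in general not elements of $N_p^k(S)$ at all. Second, even ignoring membership in the neighborhood, an element $u\in S\setminus T$ may lie in $Y_e$ for up to $k$ distinct $e\in T\setminus S$, so under uniform averaging its removal marginal can be as large as $k$ times the insertion marginal, whereas you need the ratio $k-1+1/p<k$. The actual content of the $k$-exchange argument (Lemma~\ref{lem:goodExchangesKEx}, i.e., Lemma~1 and the construction of the multiset $\mathcal{P}'$ in Feldman--Naor--Schwartz--Ward) is a path/walk decomposition of the exchange structure that groups the inserted elements into chains of length at most $p$ in such a way that, roughly speaking, consecutive elements on a chain share removal elements; this is what brings the per-swap removal count down to $(k-1)p+1$ and the marginal ratio down to $\big((k-1)p+1\big)/p=k-1+1/p$. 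Without that construction the $k$-exchange half of the theorem is not proved; the ``$+1$'' is not free slack on top of a ratio of $k-1$, but the residue of a nontrivial argument that reduces the naive ratio of $k$.
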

The proof of the above theorem, which roughly follows the first parts of the proofs in~\cite{LeeSviridenkoVondrak2010} and~\cite{FeldmanNaorSchwartzWard2011}, is deferred to Appendices~\ref{append:kIntersection} and~\ref{append:kExchange}, respectively.
Combining the above with Theorems~\ref{thm:LS-monotone} and~\ref{thm:result-DCILS}, we get the desired approximation algorithms.
\begin{cor}\label{cor:LS-kpN}
  Consider a $\mathsf{CSFM}$ problem over a feasibility family $\mathcal{F}$ that is either a $k$-intersection system or a $k$-exchange system for some fixed $k\geq 2$.
Then, for any fixed $\epsilon >0$ and $p=p(k,\epsilon)\geq 1$ chosen large enough, the algorithm described in Theorem~\ref{thm:LS-monotone}, respectively Algorithm~\ref{alg:DCILS}, with neighborhood function $N_p^k$ %
allows for efficiently obtaining a solution that is a:
\begin{enumerate}[label=\normalfont(\roman*),nosep]
\item $\frac{1}{k+\epsilon}$-approximation if $f$ is monotone;
\item $\frac{k-1}{k^2+\epsilon}$-approximation even for $f$ non-monotone. %
\end{enumerate}
\end{cor}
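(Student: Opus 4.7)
The plan is to combine Theorem~\ref{thm:LS-geom-ksystems} with the two local search meta-theorems, Theorems~\ref{thm:LS-monotone} and~\ref{thm:result-DCILS}, through a parameter-tuning argument. First, I would invoke Theorem~\ref{thm:LS-geom-ksystems} to conclude that $N_p^k$ is $\alpha$-conic for $\alpha = k-1+1/p$. Since $k \geq 2$ and $p \geq 1$, we have $\alpha \geq 1$, so $\alpha$ is admissible as an input to both meta-theorems.

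For the monotone case, applying Theorem~\ref{thm:LS-monotone} with a small internal error parameter $\epsilon' > 0$ would produce a set $S$ with $(\alpha+1+\epsilon')\cdot f(S) \geq f(\OPT)$, i.e., an approximation ratio of $1/(k+1/p+\epsilon')$. Picking $p$ sufficiently large and $\epsilon'$ sufficiently small so that $1/p + \epsilon' \leq \epsilon$ (for instance $p \geq 2/\epsilon$ and $\epsilon' = \epsilon/2$) then yields the desired $\frac{1}{k+\epsilon}$-approximation. For the non-monotone case, I would require $p \geq 2$, so that $1/p < 1$ and hence $\lfloor\alpha\rfloor = k-1$. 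Theorem~\ref{thm:result-DCILS} with error parameter $\epsilon'$ then yields a set $S$ satisfying $(\lfloor\alpha\rfloor+1)(\alpha+1+\epsilon')\cdot f(S) \geq \lfloor\alpha\rfloor\cdot f(\OPT)$, i.e., an approximation ratio of $\frac{k-1}{k\cdot(k+1/p+\epsilon')}$. Choosing $p$ large and $\epsilon'$ small enough so that $k\cdot(1/p+\epsilon') \leq \epsilon$ (for instance $p \geq 2k/\epsilon$ and $\epsilon' = \epsilon/(2k)$) makes this at least $\frac{k-1}{k^2+\epsilon}$.

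What remains is to verify efficiency. For fixed $k$ and $p$, the neighborhood $N_p^k(S)$ has size bounded by $\binom{|E|}{p}\binom{|S|}{(k-1)p+1} = O(|E|^{kp+1})$, so a most improving step can be computed with polynomially many value oracle queries, provided the corresponding membership queries to $\mathcal{F}$ (which both $k$-intersection and $k$-exchange systems support efficiently) are polynomial. Both Theorems~\ref{thm:LS-monotone} and~\ref{thm:result-DCILS} guarantee a polynomial number of most improving steps whenever $\alpha$ is a constant, which is the case here. For the non-monotone version, the required initialization through the best singleton is legitimate since $k$-intersection and $k$-exchange systems are down-closed, so that the standing assumption of Section~\ref{subsec:localSearchProc} is satisfied.

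I do not expect a substantive obstacle: the work has already been carried out in Theorem~\ref{thm:LS-geom-ksystems} (whose proof is deferred to the appendix) and in the two local search meta-theorems. The only care that must be taken is to split the error budget $\epsilon$ between the two sources of slack, namely the $1/p$ term inherent in the conicity parameter and the internal error $\epsilon'$ of the meta-theorems, so that both together stay within $\epsilon$. This is a routine calculation, and the dependence $p = p(k,\epsilon)$ is explicit, e.g., $p = \lceil 2k/\epsilon \rceil$, which simultaneously covers both the monotone and non-monotone cases.
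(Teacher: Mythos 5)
Your proposal is correct and follows exactly the route the paper intends: the paper gives no separate proof of this corollary beyond the remark that it follows by combining Theorem~\ref{thm:LS-geom-ksystems} with Theorems~\ref{thm:LS-monotone} and~\ref{thm:result-DCILS}, and your parameter bookkeeping (splitting the error budget $\epsilon$ between the $1/p$ slack in $\alpha=k-1+1/p$ and the internal error $\epsilon'$, and noting that $p\geq 2$ is needed so that $\lfloor\alpha\rfloor=k-1$ in the non-monotone case) fills in the routine details accurately.
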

Note that for the algorithm of Theorem~\ref{thm:LS-monotone}, we can simply choose the best singleton as starting set $S_0 \in \mathcal{F}$ (we make our usual assumptions on the down-closed family $\mathcal{F}$).
Again, the $\epsilon$ which is used in our algorithms has to be chosen smaller than the one in the approximation guarantees of Corollary~\ref{cor:LS-kpN}. %
Since constructing the $(k,p)$-swap neighborhood of a feasible set requires $|E|^{O(k\cdot p)}$ time, the algorithms also have running times that depend exponentially on $k$. %
This is why $k \geq 2$ %
is assumed to be constant in Corollary~\ref{cor:LS-kpN}.

We now compare the running time of the algorithms in~\cite{LeeSviridenkoVondrak2010,FeldmanNaorSchwartzWard2011,Feldman2013} to the running time of our approach. 
Since our algorithm and the ones in~\cite{LeeSviridenkoVondrak2010,Feldman2013} for not necessarily monotone functions all make $k$ calls to basic procedures for monotone functions (in case of~\cite{Feldman2013},
 the algorithm from~\cite{FeldmanNaorSchwartzWard2011}; in our case, Algorithm~\ref{alg:DCBLS}), we only need to consider the running times of the employed basic procedures.%
\footnote{%
As stated before, Lee, Sviridenko, and Vondr\'ak~\cite{LeeSviridenkoVondrak2010} remark that when using their method as basic procedure in the iterative local search algorithm from Lee, Mirrokni, Nagarajan, and Sviridenko~\cite{lee_2010_maximizing}, then one gets the aforementioned approximation guarantee for the non-monotone case. The iterative algorithm from~\cite{lee_2010_maximizing}, as it is given there, makes $k+1$ calls to a basic procedure for monotone functions, but this number actually reduces to $k$ when using the basic procedure from~\cite{LeeSviridenkoVondrak2010}.}
The basic local search procedure for monotone functions from~\cite{FeldmanNaorSchwartzWard2011} requires \smash{$O\big( \frac{|E|}{\epsilon} \cdot \log(|E|)   \big) $} steps, where each step considers a $(k,p)$-swap neighborhood of a current set and moves to a neighboring set.
Moreover,  the same bound can be shown to hold for the basic procedure described %
in~\cite{LeeSviridenkoVondrak2010} when improving the parameters in their algorithm slightly.
From Theorem~\ref{thm:result-DCBLS}, however, we get that Algorithm~\ref{alg:DCBLS} requires only \smash{$O\big(|E| \cdot \log(\frac{|E|}{\epsilon}) \big)$} steps.
Since the worst-case time complexity of a single step and the dependency on parameters that are assumed to be constant is the same for all algorithms, we conclude that our algorithm  achieves a slightly better running time with no loss in the approximation guarantee and despite the fact that our algorithm and analysis consider a more general setting. 
In other words, the number of steps required by the algorithms in~\cite{LeeSviridenkoVondrak2010,FeldmanNaorSchwartzWard2011,Feldman2013} depends linearly on $1/\epsilon$ while this dependence is only logarithmic in our approach.

The class of $k$-intersection systems contains many interesting combinatorial optimization problems, such as Bipartite Matchings, Branchings in Digraphs (both for $k=2$), and, more generally, $k$-Dimensional Matchings (see~\cite{LeeSviridenkoVondrak2010}).
The same holds true for the class of $k$-exchange systems, since they generalize, for example, the problems of $b$-Matchings (for $k=2$), $k$-Set Packings, Asymmetric Traveling Salesman\footnote{%
Here, the family of feasible sets consists of all Hamiltonian cycles including all their subsets in a complete directed graph.
}
 (for $k=3$), and Independent Sets in $(k+1)$-Claw-Free Graphs (see~\cite{FeldmanNaorSchwartzWard2011}).
We remark that Hazan, Safra, and Schwartz~\cite{HazanSafraSchwartz2006} showed that it is $\mathsf{NP}$-hard to approximate both the Maximum $k$-Set Packing problem and the Maximum $k$-Dimensional Matching problem to within a factor of $O(\log k/k)$.
Hence, under the assumption that $\mathsf{P} \neq \mathsf{NP}$, our algorithms are at most a logarithmic factor away from being best possible.

We also hope that our approach will be helpful for obtaining further approximation algorithms for $\mathsf{CSFM}$ beyond the discussed settings.
For this, it suffices to find an efficiently computable neighborhood function that is $\alpha$-conic for the considered constraints, for a parameter $\alpha \geq 1$ as small as possible. Extending the approach to a new constraint family thus only requires a proof of a geometric property that is independent of the submodular function to be maximized.

\section{Hardness of approximation}\label{sec:Hardness}

In this section, we prove Theorem~\ref{thm:intro-hardness}, which %
states that a linear optimization oracle alone is not enough to find good approximations for submodular maximization with only polynomially many calls to the oracle.
We start by describing a family of monotone $\mathsf{CSFM}$ problems, parameterized by the size $n$ of the ground set, which we will then show to imply Theorem~\ref{thm:intro-hardness}.

\subsection{A family of bad instances}\label{sec:hardnessExample}

Let $E$ be a ground set of cardinality $n \coloneqq |E|$. 
To simplify notation, we assume that $n \geq 4
$ is the square of some positive number, i.e., $\sqrt{n} \in \mathbb{Z}$.
The set $E$ is partitioned into $\sqrt{n}$ many sets $S_1, \ldots, S_{\sqrt{n}}$, each of cardinality $\sqrt{n}$. 
Moreover, we set $\beta \coloneqq c \cdot \lceil \frac{\log n}{\log \log n}\rceil$, where $c \in \mathbb{Z}_{>0}$ is a constant to be determined later.
From each set $S_i$, $i \in [\sqrt{n}]$, we choose an element $t_i \in S_i$ independently and uniformly at random, and define $T\coloneqq \{t_1, \ldots, t_{\sqrt{n}} \}$ to be the set consisting precisely of all these elements. 
The family $\mathcal{F}$ of feasible subsets of $ E$ is then %
 defined as 
\begin{align*}
\mathcal{F} \coloneqq \big\{F \subseteq E \;\big\vert\; |\{ i \in [\sqrt{n}] \mid F \cap S_i \neq \emptyset \}| \leq \beta \big\} \cup \big\{F \subseteq E \;\big\vert\; F \subseteq T \big\}   \enspace .
\end{align*}
In particular, $\mathcal{F}$ is easily seen to be down-closed.
Moreover, the submodular function $f\colon 2^E \rightarrow \mathbb{Z}_{\geq 0}$ that we want to maximize over $\mathcal{F}$ is given by
\begin{align*}
f(F) \coloneqq |\{ i \in [\sqrt{n}] \mid F \cap S_i \neq \emptyset \}| \qquad \forall F \subseteq E
\enspace .
\end{align*}
Hence, for a subset $F \subseteq E$, $f(F)$ is equal to the number of parts $S_i$ which the set $F$ intersects.
Clearly, $f$ is a non-negative monotone submodular function; actually, $f$ is a coverage function. %

We define a partition $\mathcal{F} = \mathcal{S} \dcup \mathcal{T}$ of the feasible sets, where
\begin{align*}
\mathcal{S} &\coloneqq \{F\in \mathcal{F} \mid f(F) \leq \beta\} = \{F \subseteq E \mid |\{ i \in [\sqrt{n}] \mid F \cap S_i \neq \emptyset \}| \leq \beta \}\enspace,\text{ and }\\
\mathcal{T} &\coloneqq \{F \in \mathcal{F} \mid f(F) > \beta\} = \{F \subseteq E \mid F \subseteq T \text{ and } |F| >\beta\}\enspace.
\end{align*}
We refer to sets $F \in \mathcal{S}$ as \emph{standard feasible} sets  and to sets $F \in \mathcal{T}$ as \emph{special feasible} sets.
While $f(F) \leq \beta =  c \cdot \lceil\frac{\log n}{\log\log n}\rceil$ for every standard feasible set $F \in \mathcal{S}$, we have $f(F)  > \beta$ for every special feasible set $F\in \mathcal{T}$. 
In particular, since $f(T) = \sqrt{n}$, it holds for  $n$ large enough that the set $T$ (which we do not know) is the unique maximizer of $\max_{F \in \mathcal{F}} f(F)$. %
To prove the theorem, it thus suffices to show that a polynomial number of calls to a linear optimization oracle over $\mathcal{F}$ is not enough to find a special feasible set.
 Notice that if any special feasible set $F\in \mathcal{T}$ is found (which is easy to recognize), %
 then we can recover $T$ by calling the linear optimization oracle with the weights
\begin{align*}
w(s) \coloneqq \begin{cases}
n \enspace &\textup{if } s \in F \enspace, \\
1 \enspace &\textup{if } s \in E \setminus F \enspace.
\end{cases}
\end{align*}
Hence, the discovery of any special feasible set allows for retrieving $T$.

In order to prove Theorem~\ref{thm:intro-hardness}, we can even assume that an algorithm for the monotone $\mathsf{CSFM}$ problem $\max_{F\in \mathcal{F}}f(F)$ defined above has all the information about how the instance is constructed, except for knowing the elements $t_i$, $i \in [\sqrt{n}]$, of course.
However, we recall that the only access that it has to $\mathcal{F}$ is over an optimization oracle, which, for any $w\in \mathbb{R}^E$, returns an optimal solution to the linear optimization problem $\max_{F\in \mathcal{F}} w(F)$. 
To rule out that the linear optimization oracle returns a special feasible set ``by coincidence'', we make some very mild assumptions on what the oracle returns in case of ties.
Since these assumptions are only about how ties are broken, they still lead to a valid linear optimization oracle.
Assume that there is an arbitrary but fixed numbering of the elements in $E$ (that is independent of $T$), and that, in case of ties, the linear optimization oracle returns among all optimal sets of smallest cardinality the one that is lexicographically minimal with respect to this numbering.
If, however, there are both standard and special feasible sets which are optimal, we assume that the oracle disregards all special feasible sets and applies the above rule only to optimal standard feasible sets.
 In particular, this ensures that for any linear objective $w\in \mathbb{R}^E$, the linear optimization oracle with respect to $w$ returns the same set as when called with respect to $w^+$, where $w^+$ is defined by $w^+(e) \coloneqq \max\{w(e),0\}$ for each $e\in E$. Therefore, we can assume that any algorithm only makes calls to the linear optimization oracle with objectives within $\mathbb{R}^E_{\geq 0}$. 
 Moreover, for any linear objective $w \in \nnR^E$, we have $w(T) = \max_{V \in \mathcal{T}} w(V)$. Thus, the rule that no special feasible set is returned whenever there is an optimal standard feasible set ensures that calling the linear optimization oracle with $w \in \nnR^E$ yields a special feasible set only if $w(T) > \max_{U \in \mathcal{S}} w(U)$.

In the next section, we show how the above example implies Theorem~\ref{thm:intro-hardness}.

\subsection{Proof of Theorem~\ref{thm:intro-hardness}}

Let $\mathcal{A}$ be an algorithm for approximately solving $\mathsf{CSFM}$ given a linear optimization oracle and assume that $\mathcal{A}$ makes at most $O(n^d)$ many calls to the oracle, where $d>0$ is some constant. We prove Theorem~\ref{thm:intro-hardness} by showing that this algorithm has an approximation guarantee of at most $O ( \frac{1}{\sqrt{n}} \cdot \frac{\log  n}{\log\log n} )$ on the family of instances described in Section~\ref{sec:hardnessExample}, with $c$ being set to $c \coloneqq 2d+2$.
More precisely, we prove this statement by showing the following result which states that detecting a special feasible set with a single oracle call is very unlikely.
\begin{lem}\label{lem:singleOracleCallBad}
Let $w\in \mathbb{R}_{\geq 0}^E$ and let $T$ be a random subset of $E$ containing a uniformly random element of each set $S_i$ for $i\in [\sqrt{n}]$. Then, for large enough $n$, it holds that
\begin{equation*}
\Pr\Big[ w(T) > \max_{U\in \mathcal{S}} w(U)\Big] \leq n^{-d-1}\enspace,
\end{equation*}
where $\mathcal{S}$ is defined as before.
\end{lem}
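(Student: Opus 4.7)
The plan is to reduce the lemma to a concentration inequality for $w(T)$ around its mean, for which a Chernoff-type bound suffices provided the parameter $c = 2d+2$ is chosen generously enough in $\beta = c\lceil \log n/\log\log n\rceil$.

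First I would identify $\max_{U \in \mathcal{S}} w(U)$ explicitly. Since $\mathcal{S}$ consists of all subsets that meet at most $\beta$ of the blocks $S_i$ and $w\geq 0$, the maximum is attained by choosing, in full, the $\beta$ blocks with the largest total weight. Writing $W_i := w(S_i)$ and $W_{(1)} \geq W_{(2)} \geq \cdots$ for the sorted sequence, we obtain $r := \max_{U\in\mathcal{S}} w(U) = \sum_{i=1}^{\beta} W_{(i)}$. A straightforward averaging argument ``top $\beta$ of $\sqrt n$ non-negatives exceed $\beta$ times the average'' yields $r \geq \beta\mu$, where $\mu := W(E)/\sqrt n = \mathbb{E}[w(T)]$.

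Next I would view $w(T) = \sum_{i=1}^{\sqrt n} X_i$ as a sum of independent random variables $X_i := w(t_i)$, each uniform on $\{w(e):e\in S_i\}$, with $\mathbb{E}[X_i] = W_i/\sqrt n$ and $X_i \in [0,M]$ for $M := \max_e w(e)$. Using the convexity bound $e^{sx} \leq 1 + (e^{sM}-1)\,x/M$ on $[0,M]$ and summing over $S_i$ gives $\mathbb{E}[e^{sX_i}] \leq \exp\!\bigl((e^{sM}-1)W_i/(M\sqrt n)\bigr)$. Multiplying across $i$, applying Markov's inequality to $e^{sw(T)}$ and optimizing in $s$ (the minimum occurs at $sM=\ln\rho$ with $\rho := r/\mu$) yields
\[
\Pr[w(T) > r] \;\leq\; \exp\!\Bigl(-\tfrac{\mu}{M}\bigl(\rho\ln\rho - \rho + 1\bigr)\Bigr).
\]

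Finally I would verify this is at most $n^{-d-1}$. Since $\rho \geq \beta = c\lceil\log n/\log\log n\rceil$ and $c=2d+2$, a short computation gives $\rho\ln\rho - \rho + 1 \geq c\log n\,(1-o(1))$; hence the bound is at most $n^{-(d+1)(1-o(1))} \leq n^{-d-1}$ for large $n$ provided the prefactor $\lambda := \mu/M$ is bounded below, say by $1/2$. In the remaining regime $\lambda < 1/2$ (i.e.~$W(E) < \sqrt n M/2$, meaning $w$ is very concentrated on heavy elements), I would use the deterministic envelope $w(T) \leq \sum_i \max_{e\in S_i} w(e)$, which in extreme cases already forces $\Pr[w(T)>r]=0$; in the intermediate sub-case one sharpens the tail by a second Chernoff bound applied to the indicator variables $\mathbb{1}[t_i \in H_\tau]$ for the threshold $\tau := r/(2\sqrt n)$, reducing the question to bounding the probability that at least $r/(2M)$ of the $t_i$ fall into the ``heavy'' set $H_\tau$, whose expected number of hits is $O(\sqrt n/\rho)$.

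The main obstacle is precisely this concentrated-weight regime $\lambda<1/2$: the vanilla Chernoff bound degrades to something weaker than $n^{-1/2}$ when most of $w$'s mass sits on a handful of very heavy elements. The resolution is to combine the deterministic inequality $w(T)\leq \sum_i \max_{e\in S_i} w(e)$ with a thresholded indicator-Chernoff; the bookkeeping that shows the two regimes splice together to yield $n^{-d-1}$ uniformly in $w$ is the delicate calculation.
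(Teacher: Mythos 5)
Your first regime is handled correctly: the identification $\max_{U\in\mathcal{S}}w(U)=\sum_{i=1}^{\beta}W_{(i)}=:r$, the bound $r\geq\beta\mu$, and the Chernoff bound $\Pr[w(T)>r]\leq\exp\bigl(-\tfrac{\mu}{M}(\rho\ln\rho-\rho+1)\bigr)$ are all fine, and when $\mu/M$ is bounded below the exponent is indeed $\Omega(\beta\log\beta)=\Omega(c\log n)$, which suffices with $c=2d+2$. The gap is in the concentrated-weight regime, and the fix you sketch there does not close it. Take one block containing a single element of weight $n^{100}$ and every other block containing a single element of weight $1$ (all other weights zero). Then $M=n^{100}$, $r=n^{100}+\beta-1$, $\mu\approx n^{99.5}$, so $\mu/M\approx n^{-1/2}$ and your first bound only gives $n^{-1/2}$. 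Your fallback sets $\tau=r/(2\sqrt n)$, so $H_\tau$ is the single huge element, and you ask for at least $r/(2M)\approx 1/2$ heavy hits, i.e.\ at least one; but $\Pr[\text{at least one heavy hit}]=1/\sqrt n$, which is nowhere near $n^{-d-1}$. The event $w(T)>r$ is in fact rare here only because, \emph{in addition} to hitting the heavy element, $T$ must hit at least $\beta$ of the weight-one elements; a decomposition that bounds the heavy and light contributions separately against fixed budgets discards exactly this conjunction and cannot recover the claimed rate.

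The idea you are missing is the paper's per-block normalization. One defines $\overline{w}(e):=w(e)/w(S_i)$ for $e\in S_i$ (and $0$ if $w(S_i)=0$), so that every block has $\overline{w}(S_i)\in\{0,1\}$ and every coordinate lies in $[0,1]$. A short monotonicity argument --- sort the blocks by $w(S_i)$, write $w(Q)-\max_{U\in\mathcal{S}}w(U)$ as a sum of per-block terms, and replace each factor $w(S_i)$ by $w(S_\beta)$, which can only increase each term --- shows that $w(T)>\max_{U\in\mathcal{S}}w(U)$ implies $\overline{w}(T)>\max_{U\in\mathcal{S}}\overline{w}(U)$. After this reduction $\E[\overline{w}(T)]\leq 1$, the threshold is exactly $\beta$ (or the probability is zero), and a single Chernoff bound $\Pr[Y>\beta]\leq(e/\beta)^{\beta}\leq n^{-c/2}$ finishes uniformly in $w$; no case split on where the mass of $w$ sits is needed. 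In the example above, the normalization turns the $n^{100}$-weight element into a weight-$1$ element, and the event correctly becomes ``at least $\beta+1$ marked elements are hit,'' which is what actually drives the small probability. If you want to salvage your route, you would have to condition on the pattern of heavy hits and rerun a Chernoff bound on the residual light contribution for each pattern; the normalization does this bookkeeping for free.
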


Before we prove Lemma~\ref{lem:singleOracleCallBad}, we observe that it indeed implies Theorem~\ref{thm:intro-hardness}.

\begin{proof}[Proof of Theorem~\ref{thm:intro-hardness}]
We consider the monotone 	$\mathsf{CSFM}$ problem $\max_{F\in \mathcal{F}} f(F)$ described in Section~\ref{sec:hardnessExample}, where $n$ denotes the size of the ground set.
Using a union bound in combination with Lemma~\ref{lem:singleOracleCallBad}, we have that an algorithm $\mathcal{A}$ with $O(n^d)$ calls to the linear optimization oracle will detect a special feasible set with probability at most
\begin{equation*}
n^{-d-1} \cdot O(n^d)  = O(n^{-1})\enspace.
\end{equation*}
Therefore, if we denote by $Q_{\mathcal{A}} \in \mathcal{F}$ the output of algorithm $\mathcal{A}$, we have
\begin{align*}
\E[f(Q_{\mathcal{A}})] {} \leq {} & %
\Pr[\mathcal{A} \text{ does not encounter a special feasible set}]\cdot \beta
 \\
 & {}+{} \Pr[\mathcal{A} \text{ encounters a special feasible set}]\cdot f(T) \\
{} \leq {} &  \beta + \Pr[\mathcal{A} \text{ encounters a special feasible set}]\cdot \sqrt{n} \\
{} = {} & O(\beta + n^{-1} \cdot \sqrt{n} ) \\
{} = {} & O\Big(\frac{\log n}{\log\log n}\Big)\enspace.
\end{align*}
Because the optimal value of the considered $\mathsf{CSFM}$ problem is $f(T) = \sqrt{n}$, Theorem~\ref{thm:intro-hardness} follows.
\end{proof}

Hence, it only remains to prove Lemma~\ref{lem:singleOracleCallBad}.

\begin{proof}[Proof of Lemma~\ref{lem:singleOracleCallBad}]
We first show that we can assume that the weight of each set $S_i$, $i \in [\sqrt{n}]$, is either $0$ or $1$.
To this end, we start with a general $w\in \mathbb{R}^E_{\geq0}$ as stated in the lemma and define a vector $\overline{w}\in [0,1]^E \subseteq \mathbb{R}^E_{\geq 0}$ by setting, for any $i\in [\sqrt{n}]$ and $e\in S_i$,
\begin{equation*}
\overline{w}(e) \coloneqq
\begin{cases}
\frac{w(e)}{w(S_i)} &\textup{if } w(S_i) > 0 \enspace,\\
0 \;\;(= w(e)) &\textup{if } w(S_i) = 0 \enspace.
\end{cases}
\end{equation*}
Hence, it holds for every $i \in [\sqrt{n}]$ that
\begin{equation}\label{eq:relWToWBar}
w(S_i) \cdot \overline{w}(e)   = w(e) \qquad \forall %
e\in S_i \enspace,
\end{equation}
and
\begin{equation*}
\overline{w}(S_i) =
\begin{cases}
1 &\textup{if } w(S_i) > 0 \enspace,\\
0 &\textup{if } w(S_i) = 0 \enspace.
\end{cases}
\end{equation*}
The following claim implies that whenever the linear optimization oracle with objective $w$ returns a special feasible set, then so does a call to the oracle with objective $\overline{w}$.

\begin{claimInProof*}
For any set $Q\subseteq E$ with $|Q\cap S_i| =1$ for every $i\in [\sqrt{n}]$, we have:
\begin{equation*}
\text{If } \enspace w(Q) > \max_{U\in \mathcal{S}} w(U) \enspace, 
\text{ then } \enspace \overline{w}(Q) > \max_{U\in \mathcal{S}} \overline{w}(U)\enspace.
\end{equation*}
\end{claimInProof*}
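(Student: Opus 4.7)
The plan is to prove the contrapositive: assume $\overline{w}(Q)\le\beta$ and conclude that $w(Q)\le\max_{U\in\mathcal{S}} w(U)$. Combined with the easy bound $\max_{U\in\mathcal{S}}\overline{w}(U)\le\beta$, this will give the claim.

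The first step is to characterize $\max_{U\in\mathcal{S}} w(U)$. Any $U\in\mathcal{S}$ intersects at most $\beta$ parts $S_i$ and is contained in $\bigcup_{i\,:\,U\cap S_i\neq\emptyset} S_i$, a set that still lies in $\mathcal{S}$. Hence $\max_{U\in\mathcal{S}} w(U)=\sum_{i\in I^*(w)} w(S_i)$, where $I^*(w)\subseteq[\sqrt{n}]$ consists of the $\beta$ indices with largest values among $w(S_1),\dots,w(S_{\sqrt{n}})$. Running the same argument for $\overline{w}$ in place of $w$ yields $\max_{U\in\mathcal{S}}\overline{w}(U)\le\beta$, since $\overline{w}(S_i)\in\{0,1\}$ by construction.

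The key step connects $w(Q)$ and $\overline{w}(Q)$ through identity~\eqref{eq:relWToWBar}. Writing $q_i$ for the unique element of $Q\cap S_i$, I get $w(Q)=\sum_i w(S_i)\cdot\overline{w}(q_i)$ and $\overline{w}(Q)=\sum_i\overline{w}(q_i)$, with $\overline{w}(q_i)\in[0,1]$ in every case. Setting $x_i\coloneqq\overline{w}(q_i)$, the assumption $\overline{w}(Q)\le\beta$ places $x$ in the polytope $P=\{x\in[0,1]^{\sqrt{n}}:\sum_i x_i\le\beta\}$. Over $P$, the linear functional $\sum_i w(S_i)\,x_i$ is maximized at a vertex that puts a $1$ on each of the $\beta$ coordinates in $I^*(w)$, whose value is precisely $\max_{U\in\mathcal{S}} w(U)$. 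Therefore $w(Q)\le\max_{U\in\mathcal{S}} w(U)$, completing the contrapositive.

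I do not foresee any real obstacle here. The only conceptual point is to notice that the normalization identity $w(q_i)=w(S_i)\overline{w}(q_i)$, combined with $\overline{w}(q_i)\in[0,1]$, relaxes the evaluation of $w(Q)$ to a top-$\beta$ LP whose optimum already coincides with $\max_{U\in\mathcal{S}} w(U)$.
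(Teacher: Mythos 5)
Your proposal is correct. The logic is sound: since $\max_{U\in\mathcal{S}}\overline{w}(U)\leq\beta$, proving that $\overline{w}(Q)\leq\beta$ implies $w(Q)\leq\max_{U\in\mathcal{S}}w(U)$ does establish the contrapositive of the claim. Your two main ingredients --- the identity $w(q_i)=w(S_i)\cdot\overline{w}(q_i)$ from~\eqref{eq:relWToWBar} and the characterization $\max_{U\in\mathcal{S}}w(U)=\sum_{i\in I^*(w)}w(S_i)$ as the sum over the $\beta$ heaviest parts --- are exactly the ones the paper uses, and the LP step is a standard fact (maximizing a non-negative linear functional over $\{x\in[0,1]^{\sqrt n}:\sum_i x_i\leq\beta\}$ with $\beta\in\mathbb{Z}_{\geq 1}$ is achieved by putting $1$'s on the top-$\beta$ coordinates), so there is no gap.

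Where you diverge from the paper is in the comparison step. The paper argues directly (not by contraposition): after sorting the parts so that $w(S_1)\geq\ldots\geq w(S_{\sqrt n})$, it bounds $w(Q)-\max_{U\in\mathcal{S}}w(U)$ term by term from above by $w(S_\beta)\cdot\bigl(\overline{w}(Q)-\max_{U\in\mathcal{S}}\overline{w}(U)\bigr)$, using that the first $\beta$ summands carry a non-positive factor $\overline{w}(q_i)-\overline{w}(S_i)$ and the remaining ones a non-negative factor $\overline{w}(q_i)$. This yields a quantitative multiplicative relation between the two gaps (with constant $w(S_\beta)$), which is more than the claim needs. Your version dispenses with the sorting-based estimate and instead reads the hypothesis $\overline{w}(Q)\leq\beta$ as a fractional-knapsack constraint on the vector $(\overline{w}(q_i))_i$, so that $w(Q)$ is dominated by the LP optimum, which coincides with $\max_{U\in\mathcal{S}}w(U)$. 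This is arguably cleaner and makes the role of the normalization more transparent; the only thing it costs is an appeal to (or a short proof of) the integrality of the top-$\beta$ optimum, whereas the paper's chain of inequalities is entirely self-contained.
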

\begin{proof}
We number the sets $S_i$ such that $w(S_1) \geq w(S_2) \geq \ldots \geq w(S_{\sqrt{n}})$. %
For $i\in [\sqrt{n}]$, let $q_i$ be the single element in $Q\cap S_i$. 
It then holds that
\begin{align*}
0 &< w(Q) - \max_{U\in \mathcal{S}} w(U)\\
  &= w(Q) - \sum_{i=1}^{\beta} w(S_i)\\
  &= \sum_{i=1}^{\beta} \big((w(q_i)-w(S_i)\big) + \sum_{i=\beta+1}^{\sqrt{n}} w(q_i)\\
  &= \sum_{i=1}^{\beta} w(S_i) \cdot \big(\overline{w}(q_i)-\overline{w}(S_i)\big)
       + \sum_{i=\beta+1}^{\sqrt{n}} w(S_i) \cdot \overline{w}(q_i)\\
  &\leq \sum_{i=1}^{\beta} w(S_{\beta}) \cdot \big(\overline{w}(q_i)-\overline{w}(S_i)\big)
       + \sum_{i=\beta+1}^{\sqrt{n}} w(S_{\beta}) \cdot \overline{w}(q_i)\\
  &= w(S_{\beta}) \cdot \Bigg( \sum_{i=1}^{\beta} \big(\overline{w}(q_i)-\overline{w}(S_i)\big)
       + \sum_{i=\beta+1}^{\sqrt{n}} \overline{w}(q_i) \Bigg)\\
  &= w(S_{\beta}) \cdot \Bigg(\overline{w}(Q) - \sum_{i=1}^{\beta} \overline{w}(S_i) \Bigg)\\
  &= w(S_{\beta}) \cdot \Big( \overline{w}(Q) - \max_{U\in \mathcal{S}} \overline{w}(U)\Big)\enspace,
\end{align*}
where the third equality follows from~\eqref{eq:relWToWBar}, and the non-strict inequality is a consequence of $w(S_1) \geq w(S_2) \geq \ldots \geq w(S_{\sqrt{n}})$.
Hence, we conclude that $\overline{w}(Q) - \max_{U\in \mathcal{S}} \overline{w}(U) > 0$, which proves the claim.
\end{proof}

The above claim implies that
\begin{equation*}
\Pr\Big[ w(T) > \max_{U\in \mathcal{S}}w(U)\Big] \leq
 \Pr\Big[ \overline{w}(T) > \max_{U\in \mathcal{S}} \overline{w}(U)\Big]\enspace,
\end{equation*}
and hence, to show Lemma~\ref{lem:singleOracleCallBad}, it suffices to prove
\begin{equation}\label{eq:enoughToBoundProbWBar}
\Pr\Big[\overline{w}(T) > \max_{U\in \mathcal{S}} \overline{w}(U)\Big] \leq n^{-d-1}\enspace.
\end{equation}
We first consider the value of $\max_{U\in \mathcal{S}} \overline{w}(U)$, where we recall that  $\overline{w}(S_i) \in \{0,1\}$ for each $i\in [\sqrt{n}]$. Defining $k\coloneqq |\{i\in [\sqrt{n}] \mid \overline{w}(S_i) = 1 \}|$, we thus have $\max_{U\in \mathcal{S}} \overline{w}(U) = \min\{k,\beta\}$. 
Observe that if $k\leq \beta$, then $k=\max_{U\in \mathcal{S}}\overline{w}(U) = \overline{w}(E)$, because in this case the union of all $S_i$ with $\overline{w}(S_i)=1$ is a set $U\in \mathcal{S}$ which contains all elements with non-zero $\overline{w}$-weight. Hence, if $k \leq \beta$, we can never have $\overline{w}(T) > \max_{U\in \mathcal{S}} \overline{w}(U)$, and~\eqref{eq:enoughToBoundProbWBar} trivially holds. Thus, we assume $k > \beta$, which implies
\begin{equation*}
\max_{U\in \mathcal{S}} \overline{w}(U) = \beta\enspace.
\end{equation*}
For $i\in [\sqrt{n}]$, let $t_i$ be the single element in $T\cap S_i$ and 
define the random variable
\begin{align*}
Y_i &\coloneqq \overline{w}(t_i) \enspace .
\end{align*}
Notice that the variables $Y_i$, $i \in [\sqrt{n}]$, are independent and take values within $[0,1]$. Moreover, because $t_i$ is chosen uniformly at random from $S_i$, we have
\begin{equation*}
\E[Y_i] = \frac{1}{\sqrt{n}}\cdot \overline{w}(S_i) \qquad \forall i\in [\sqrt{n}]\enspace.
\end{equation*}
We now define
\begin{align*}
Y &\coloneqq \sum_{i=1}^{\sqrt{n}} Y_i = \overline{w}(T)\enspace,
\end{align*}
and observe that by the above, it holds that
\begin{equation}\label{eq:expYLeq1}
\E[\overline{w}(T)] = \E[Y] = \E\Bigg[ \sum_{i=1}^{\sqrt{n}} Y_i \Bigg] = \sum_{i=1}^{\sqrt{n}} \frac{1}{\sqrt{n}} \cdot\overline{w}(S_i) \leq 1\enspace,
\end{equation}
where the last inequality follows from $\overline{w}(S_i) \leq 1$ for each $i\in [\sqrt{n}]$.

By a standard Chernoff bound (see, e.g.,~\cite{lehman_2017_mathematics}), %
we have that for any $\gamma \geq 1$, it holds
\begin{equation*}
\Pr[Y \geq \gamma \cdot \E[Y]]
 \leq e^{- \E[Y] \cdot(\gamma\log\gamma - \gamma +1)}
 \leq e^{- \E[Y] \cdot(\gamma\log\gamma - \gamma)}
 = \bigg(\frac{e}{\gamma}\bigg)^{ \gamma \cdot \E[Y]}\enspace.
\end{equation*}
Note that $\gamma \coloneqq \beta/\E[Y]$ satisfies $\gamma \geq \beta \geq 1$ because we have $\E[Y]\leq 1$ due to~\eqref{eq:expYLeq1}.
Hence, we can use the above Chernoff bound for this value of $\gamma $ to obtain
\begin{align*}
\Pr\Big[\overline{w}(T) > \max_{U\in \mathcal{S}}\overline{w}(U)\Big] &= \Pr[Y > \beta] \\
&\leq \Big(\frac{e}{\gamma}\Big)^{\beta} \\
&\leq \Big(\frac{e}{\beta}\Big)^{\beta} \\
&= e^{\beta \cdot (1-\log\beta)} \\
&\leq e^{c\cdot \frac{\log n}{\log\log n} \cdot (1 -\log c -\log\log n +\log\log\log n)}\\
&\leq n^{-\frac{c}{2}}\\
&= n^{-d-1}\enspace,
\end{align*}
where the second inequality follows from $\gamma \geq \beta$, and the fourth inequality holds for $n$ large enough; more precisely, for $n$ such that $\frac{1}{2}\log\log n \geq 1+ \log\log\log n$. The above shows~\eqref{eq:enoughToBoundProbWBar}, and thus completes the proof of Lemma~\ref{lem:singleOracleCallBad}.
\end{proof}

The difficulty in the example we provided to prove Theorem~\ref{thm:intro-hardness} stems from the constraints, and not from the submodular function. Thus, it would also be interesting to find an example with an explicitly given family of feasible sets which shows that even in this case, a linear optimization oracle does not suffice to get good approximations for submodular maximization.

\section{Conclusion}

In this work, we derived a general geometric condition for neighborhood functions, namely the property of being $\alpha$-conic, that allows for obtaining strong approximation factors for $\mathsf{CSFM}$ in both the monotone and non-monotone case via local search. Moreover, the property of being $\alpha$-conic is independent of the submodular function to be maximized.
The local search procedure we suggest, when used with an appropriately chosen neighborhood function, applies to  a large set of problems; in particular, it allows for replicating known approximation results for $k$-intersection systems and $k$-exchange systems in a unifying way and with a slightly improved running time.

Furthermore, we showed that being able to optimize linear functions over some constraints is, in general, not enough to efficiently find good approximations for submodular maximization over the same constraints. This is in stark contrast to the related problem of optimizing the multilinear extension of a submodular function, for which a linear optimization oracle suffices to obtain strong guarantees.

\bibliographystyle{plain}

\appendix

\section{Swap neighborhood function for $k$-intersection systems}\label{append:kIntersection}

In this section, we provide a proof of Theorem~\ref{thm:LS-geom-ksystems} for $k$-intersection systems, i.e., the intersection of $k$ matroids on a common ground set. 
Throughout this section, let $k\in \mathbb{Z}_{\geq 2}$ and $p\in \mathbb{Z}_{\geq 1}$, and let $M_i = (E,\mathcal{I}_i)$ for $i\in [k]$ be $k$ matroids defined on the same ground set $E$. We denote by $\mathcal{F} \coloneqq \bigcap_{i=1}^k \mathcal{I}_i$  the $k$-intersection system for which we want to show that $N_p^k$ is an $\alpha$-conic neighborhood function, %
 where $\alpha \coloneqq k-1+1/p$.
  Hence, we have to prove that for any $S,T\in \mathcal{F}$, it holds that 
\begin{equation}\label{eq:alphaSpanKInt}
\frac{1}{\alpha} \cdot \big( \chi^T + (\alpha-1)\cdot \chi^{S\cap T}\big)
 \in
C\coloneqq \Big( \chi^S + \cone\big(\{\chi^A - \chi^S \mid A\in N^k_p(S)\}\big) \Big)\enspace.
\end{equation}
To show~\eqref{eq:alphaSpanKInt}, we generally follow an approach used in~\cite{LeeSviridenkoVondrak2010}, combined with results from~\cite{chekuri_2011_multibudgeted}, which help us to provide a more streamlined proof. More precisely, as in~\cite{LeeSviridenkoVondrak2010}, we start by focusing on the first two matroids, and later consider the remaining $k-2$ ones.

We begin with a simplifying observation, namely that it suffices to prove~\eqref{eq:alphaSpanKInt} for $S,T\in \mathcal{F}$ being common bases of all $k$ matroids (and hence, in particular, they satisfy $|S|=|T|$).
\begin{lem}\label{lem:basesSufficesForSpanProp}
If~\eqref{eq:alphaSpanKInt} holds for any $k$ matroids on a common ground set and any common bases $\overline{S},\overline{T}$, %
 then it also holds for any $k$ matroids on a common ground set and any common independent sets $S,T$.
\end{lem}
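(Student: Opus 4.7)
The plan is to reduce the common-independent-sets case to the common-bases case by padding the ground set with dummy elements, applying the hypothesis in the extended instance, and then projecting back to $E$ by linearity. Given common independent sets $S,T\in \mathcal{F} = \bigcap_{i=1}^k \mathcal{I}_i$, I would introduce two disjoint sets of new dummy elements $D_1,D_2$ with $|D_1|=|T\setminus S|$ and $|D_2|=|S\setminus T|$, set $\tilde{E} := E \dcup D_1 \dcup D_2$, $\tilde{S} := S \cup D_1$, and $\tilde{T} := T \cup D_2$. For each $i\in[k]$, I define an extended matroid $\tilde{M}_i$ on $\tilde{E}$ by truncating at rank $|S\cup T|$ the direct sum of $M_i$ with the free matroid on $D_1\cup D_2$; equivalently, a set $A\subseteq \tilde{E}$ is independent in $\tilde{M}_i$ iff $A\cap E\in \mathcal{I}_i$ and $|A|\leq |S\cup T|$.

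A short check shows that $\tilde{S}$ and $\tilde{T}$ are common bases of $\tilde{M}_1,\ldots,\tilde{M}_k$: both have cardinality $|S|+|T\setminus S|=|S\cup T|=|T|+|S\setminus T|$, both are independent in every $\tilde{M}_i$ by construction, and $|S\cup T|$ is the rank of each $\tilde{M}_i$ since $r_i(E) + |D_1\cup D_2| \geq |S| + |T\setminus S| + |S\setminus T| \geq |S\cup T|$. Applying the assumed bases case of~\eqref{eq:alphaSpanKInt} to $\tilde{\mathcal{F}} = \bigcap_{i=1}^k \tilde{\mathcal{I}}_i$ thus yields non-negative coefficients $\mu_{\tilde{A}}$ indexed by $\tilde{A}\in N_p^k(\tilde{S})$ (taken inside $\tilde{\mathcal{F}}$) with
\begin{equation*}
\chi^{\tilde{T}} + (\alpha-1)\,\chi^{\tilde{S}\cap \tilde{T}}
\;=\; \alpha\,\chi^{\tilde{S}} + \sum_{\tilde{A}} \mu_{\tilde{A}}\,\bigl(\chi^{\tilde{A}}-\chi^{\tilde{S}}\bigr)\enspace.
\end{equation*}
Since $D_1\cap D_2 = \emptyset$ forces $\tilde{S}\cap \tilde{T}= S\cap T$, applying the coordinate projection $\pi\colon \mathbb{R}^{\tilde{E}}\to \mathbb{R}^E$ to both sides yields
\begin{equation*}
\chi^T + (\alpha-1)\,\chi^{S\cap T}
\;=\; \alpha\,\chi^S + \sum_{\tilde{A}}\mu_{\tilde{A}}\,\bigl(\chi^{\tilde{A}\cap E}-\chi^S\bigr)\enspace,
\end{equation*}
which is exactly of the required form, provided each $\tilde{A}\cap E$ is a $(k,p)$-swap neighbor of $S$ in $\mathcal{F}$.

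The only remaining step, and the main bookkeeping obstacle, is to verify that the projection sends $N_p^k(\tilde{S})$ to $N_p^k(S)$. Writing $A := \tilde{A}\cap E$, the defining condition for $\tilde{M}_i$ directly gives $A\in \mathcal{I}_i$ whenever $\tilde{A}\in\tilde{\mathcal{I}}_i$, so $A\in\mathcal{F}$. For the swap bounds, the disjoint decompositions
\begin{align*}
\tilde{A}\setminus \tilde{S} &= (A\setminus S) \dcup (\tilde{A}\cap D_2)\enspace,\\
\tilde{S}\setminus \tilde{A} &= (S\setminus A) \dcup (D_1\setminus \tilde{A})
\end{align*}
immediately give $|A\setminus S|\leq |\tilde{A}\setminus \tilde{S}|\leq p$ and $|S\setminus A|\leq |\tilde{S}\setminus \tilde{A}|\leq (k-1)p+1$, so $A\in N_p^k(S)$ in $\mathcal{F}$; the degenerate case $\tilde{A}=\tilde{S}$ simply contributes the zero vector, which lies in any cone. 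Reassembling the coefficients on the projected cone generators then proves~\eqref{eq:alphaSpanKInt} for the original pair $S,T$, completing the reduction.
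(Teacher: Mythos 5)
Your proposal is correct and follows essentially the same route as the paper: pad the ground set with free dummy elements, truncate each matroid so that $S$ and $T$ lift to common bases, apply the bases case, and project back onto the $E$-coordinates after checking that restricted neighbors remain neighbors. The only (cosmetic) difference is that you use two disjoint, exactly-sized dummy sets and truncate at $|S\cup T|$, which makes $\tilde{S}\cap\tilde{T}=S\cap T$ hold on the nose, whereas the paper pads both sets from a single pool of $|E|$ dummies and truncates at $|E|$; both variants project identically on $E$.
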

\begin{proof}
	Let $M_i = (E,\mathcal{I}_i)$ for $i\in [k]$ be the $k$ matroids defined on a common ground set $E$, and let $S,T \in \mathcal{F} = \bigcap_{i=1}^k \mathcal{I}_i$ be the common independent sets for which we want to prove~\eqref{eq:alphaSpanKInt}.
We add a set $W$ of $|E|$ new dummy elements to all matroids $M_i$, $i\in [k]$, and truncate them at cardinality $|E|$. More precisely, each matroid $M_i=(E,\mathcal{I}_i)$, $i\in [k]$, gets extended to a matroid $\overline{M}_i=(E\dcup W, \overline{\mathcal{I}}_i)$, where $\overline{\mathcal{I}}_i=\{\overline{I} \subseteq E\dcup W \mid |\overline{I}|\leq |E|,\, \overline{I}\cap E \in \mathcal{I}_i\}$.
 Notice that for each $i\in [k]$, any basis in $\overline{M}_i$ has cardinality $|E|$.
 Let $\overline{\mathcal{F}} = \bigcap_{i=1}^k \overline{\mathcal{I}_i}$ be the set of common independent sets of all extended matroids $\overline{M}_i$, $i\in [k]$, and let $\overline{\mathcal{B}} = \{\overline{I} \in \overline{\mathcal{F}} \mid |\overline{I}|=|E|\}$ be the set of common bases.
By adding $|E|-|S|$ arbitrary elements of $W$ to $S$, and analogously $|E|-|T|$ arbitrary elements of $W$ to $T$, we get two sets $\overline{S}\supseteq S, \overline{T}\supseteq T$ with $\overline{S},\overline{T}\in \overline{\mathcal{B}}$. 
If~\eqref{eq:alphaSpanKInt} holds for common bases of $k$ arbitrary matroids on a common ground set, we can  apply~\eqref{eq:alphaSpanKInt} to the common bases $\overline{S}$ and $\overline{T}$ of the extended matroids. Observe that, by only considering the coordinates corresponding to $E$, we obtain the desired relation~\eqref{eq:alphaSpanKInt} for the sets $S$ and $T$. 
Here, it is important to note that if $\overline{U}\subseteq E\dcup W$ is in the neighborhood $N_p^k(\overline{S})$ of $\overline{S}$ with respect to $\overline{\mathcal{F}}$, then $U \coloneqq \overline{U}\cap E$ is in the neighborhood $N_p^k(S)$ of $S$ with respect to $\mathcal{F}$.
\end{proof}
Hence, let $\mathcal{B} \subseteq \mathcal{F}$ be the set of all common bases of $M_1, \ldots, M_k$.
We will show~\eqref{eq:alphaSpanKInt} only for $S,T\in \mathcal{B}$.
By Lemma~\ref{lem:basesSufficesForSpanProp}, this suffices to prove Theorem~\ref{thm:LS-geom-ksystems} for $k$-intersection systems.

The following result is a key ingredient to deal with the first two matroids $M_1$ and $M_2$. It is a slight rephrasing of Lemma~3.3 in~\cite{chekuri_2011_multibudgeted}, where we additionally dropped some of the properties we do not need.
\begin{lem}[{Lemma~3.3 in~\cite{chekuri_2011_multibudgeted}}]\label{lem:2matIntersection}%
\footnote{The sets $P_j$ in~\cite{chekuri_2011_multibudgeted} are (the vertex sets of) so-called irreducible cycles/paths in the exchange digraph $D_{M_1,M_2}(S)$ of $M_1$ and $M_2$ using only elements of $S \sdiff T$, which are sets fulfilling property~\ref{item:symDiffIndep} of Lemma~\ref{lem:2matIntersection} and satisfying $\big| |P_j\cap S| - |P_j\cap T|\big| \leq 1$. The latter means that point~\ref{item:PjAreSmall} of our lemma is implied by Lemma~3.3 in~\cite{chekuri_2011_multibudgeted}, which states that $|P_j|\leq 2p+1$ (in the lemma, the length of an irreducible cycle/path refers to the number of its vertices);
	indeed, $S$ and $T$ being common bases implies $|P_j \sdiff S|\leq |S|$, as $P_j \sdiff S$ is independent in $M_1$ and $M_2$, which can be rephrased as $|P_j \cap S| \geq |P_j \cap T|$.
	 Another minor difference is that Lemma~3.3 in~\cite{chekuri_2011_multibudgeted} states point~\ref{item:goodSymDiffCoverage} using coefficients, i.e., $\sum_{j=1}^m \lambda_j \chi^{P_j} = (1+\frac{1}{p})\cdot \lambda \cdot \chi^{S\setminus T} + \lambda \cdot \chi^{T\setminus S}$, where $\lambda_j\geq 0$ for $j\in [m]$. Since the coefficients $\lambda_j$ can be chosen to be rational (they are a solution to a linear equation system with only rational entries), one can first scale up both sides of the equation to obtain integer coefficients, and then add to the collection additional copies of the sets $P_j$ to obtain coefficients equal to $1$ as stated in point~\ref{item:goodSymDiffCoverage} of Lemma~\ref{lem:2matIntersection}.
}
Let $M_i=(E,\mathcal{I}_i)$ for $i =1,2$ be two matroids on a common ground set $E$, let $S,T$ be common bases of $M_1$ and $M_2$, and let $p\in \mathbb{Z}_{\geq 1}$. Then, there exist sets $P_1,\ldots, P_m\subseteq S\sdiff T$ and an integer $r \in \mathbb{Z}_{\geq 1}$ such that:
\begin{enumerate}[label=\normalfont(\roman*),itemsep=0em,topsep=0.2em]
\item\label{item:symDiffIndep} $P_j \sdiff S \in \mathcal{I}_1\cap \mathcal{I}_2$ for each $j\in [m]$,
\item\label{item:PjAreSmall} $|P_j \cap S| \leq p+1$ and $|P_j \cap T|\leq p$ for each $j\in [m]$,
\item\label{item:goodSymDiffCoverage} $\sum_{j=1}^m \chi^{P_j} = \big(1+\frac{1}{p}\big)\cdot r \cdot \chi^{S\setminus T} + r\cdot \chi^{T\setminus S}$.
\end{enumerate}
\end{lem}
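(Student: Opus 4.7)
The plan is to follow the classical two-matroid exchange-digraph decomposition, keeping size bounds tight enough to fit within the $(k,p)$-swap neighborhood. Build the exchange digraph $D = D_{M_1, M_2}(S)$ on vertex set $S \sdiff T$: for each pair $x \in S \setminus T$, $y \in T \setminus S$, draw an $M_1$-arc $y \to x$ whenever $S - x + y \in \mathcal{I}_1$, and an $M_2$-arc $x \to y$ whenever $S - x + y \in \mathcal{I}_2$. Call $P \subseteq S \sdiff T$ an \emph{alternating structure} if it is the vertex set of either a directed cycle in $D$ that alternates between $M_1$- and $M_2$-arcs, or an alternating directed path whose two endpoints both lie in $S \setminus T$; the former satisfy $|P \cap S| = |P \cap T|$, the latter $|P \cap S| = |P \cap T| + 1$. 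Call $P$ \emph{irreducible} if no proper subset of it is itself an alternating structure.

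Property~\ref{item:symDiffIndep} then follows from Brualdi's unique perfect matching theorem for matroids. The $M_1$-arcs incident to $P$ encode a bipartite matching between $P \cap T$ and $P \cap S$ (omitting a single $S$-endpoint in the path case); irreducibility forces this matching to be the unique one in the induced bipartite graph, and Brualdi's theorem then yields $(S \setminus (P \cap S)) \cup (P \cap T) \in \mathcal{I}_1$. The same reasoning applied to $M_2$-arcs gives independence in $\mathcal{I}_2$, so $P \sdiff S \in \mathcal{I}_1 \cap \mathcal{I}_2$. For property~\ref{item:PjAreSmall}, I would restrict attention to irreducible structures with $|P \cap T| \leq p$: any longer irreducible structure can be split into two strictly smaller alternating structures via a shortcut chord in $D$ that must exist once $P$ is long enough (a pigeonhole argument on the $M_1$- and $M_2$-arc pattern), so one may assume all $P$ obey $|P \cap T| \leq p$ and hence $|P \cap S| \leq p+1$.

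For property~\ref{item:goodSymDiffCoverage}, the plan is to solve the linear system $\sum_P \lambda_P \chi^P = (1 + 1/p)\, \chi^{S \setminus T} + \chi^{T \setminus S}$ in $\lambda_P \geq 0$, where $P$ ranges over the (finite) family of short irreducible structures above. Feasibility follows via LP duality: any dual certificate $\mu \in \aR^{S \sdiff T}$ with $\langle \mu, \chi^P\rangle \leq 0$ for every such $P$ must also satisfy $\langle \mu, (1+1/p)\chi^{S \setminus T} + \chi^{T \setminus S}\rangle \leq 0$. I would verify this by averaging $\mu$ along a suitably weighted distribution over irreducible maximal alternating paths; each such path has $|P \cap S|/|P \cap T| = (p+1)/p = 1 + 1/p$, which exactly matches the asymmetry of the target right-hand side. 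Clearing denominators in the $\lambda_P$ yields the integer $r$ together with integer multiplicities $r\lambda_P$, and listing each $P$ with that multiplicity produces the required collection $P_1, \ldots, P_m$.

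The hard part is the duality argument in the last step: one must show that the collection of short irreducible structures is rich enough that no dual inequality separates the target right-hand side from the cone spanned by their indicator vectors. The specific ratio $1 + 1/p$ needs to appear exactly, rather than only as an inequality, and this is precisely why the size threshold is set at $p+1$ on the $S$-side and $p$ on the $T$-side. I expect the resolution to go through a Hall-type matching analysis on the sub-digraph of $D$ induced by alternating walks of length up to $2p+1$, combined with the symmetry afforded by $S$ and $T$ being common bases so that $|S \setminus T| = |T \setminus S|$.
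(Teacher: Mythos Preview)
The paper does not prove this lemma; it is quoted from~\cite{chekuri_2011_multibudgeted}, and the footnote only explains how the phrasing here (integer multiplicities, the specific size bounds in~\ref{item:PjAreSmall}) follows from the original statement there. So there is no in-paper argument to compare against beyond ``cite and rephrase''.

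That said, your sketch has a genuine gap in the handling of~\ref{item:PjAreSmall}. You write that any irreducible alternating structure with $|P\cap T|>p$ ``can be split into two strictly smaller alternating structures via a shortcut chord in $D$ that must exist once $P$ is long enough''. This is false by definition: irreducibility means exactly that no such chord exists, and irreducible cycles in $D_{M_1,M_2}(S)$ can be arbitrarily long. No pigeonhole argument on the arc pattern produces a chord. The actual mechanism in~\cite{chekuri_2011_multibudgeted} (building on~\cite{LeeSviridenkoVondrak2010}) is different: one first decomposes $S\sdiff T$ into irreducible cycles, and then, for each long cycle, takes its short contiguous \emph{subpaths} of the prescribed shape ($p+1$ vertices in $S$, $p$ in $T$). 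These subpaths are not themselves irreducible, but they inherit the unique-matching property from the parent irreducible cycle, so Brualdi's theorem still yields~\ref{item:symDiffIndep} for them. The $(1+\tfrac{1}{p})$ coverage ratio in~\ref{item:goodSymDiffCoverage} then comes from a direct count of how often each $S$-vertex versus each $T$-vertex of a long cycle is hit by these sliding subpaths, not from an LP-duality argument.

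Your step~\ref{item:goodSymDiffCoverage} is correspondingly shaky: the assertion that ``each such path has $|P\cap S|/|P\cap T|=(p+1)/p$'' is not a property of irreducible paths in general (they can have any length), so the dual-certificate averaging you propose has no anchor. Once~\ref{item:PjAreSmall} is obtained via the subpath construction above, \ref{item:goodSymDiffCoverage} is a bookkeeping computation rather than a duality step.
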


Moreover, to deal with the matroids $M_i$ for $i=3,\ldots, k$, we use the following lemma from~\cite{LeeSviridenkoVondrak2010}, which we will slightly modify afterwards.

\begin{lem}[Lemma~2.7 in~\cite{LeeSviridenkoVondrak2010}]\label{lem:kIntersection-auxiliary2a}
	Let $M = (E, \mathcal{I})$ be a matroid and $I,J \in \mathcal{I}$. Let $I_1, \ldots, I_m \subseteq I$ such that each element of $I$ appears in at most $q$ of them. Then, there are $J_1, \ldots, J_m \subseteq J$ such that each element of $J$ appears in at most $q$ of them and, for each $i \in [m]$, it holds that $I_i \cup (J \setminus J_i ) \in \mathcal{I}$.
\end{lem}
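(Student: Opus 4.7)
The plan is to construct the sets $J_i$ uniformly from a single bijection $\pi$ between $I$ and $J$ with a strong simultaneous exchange property, so that the multiplicity bound on the $J_i$'s is inherited directly from that on the $I_i$'s through the bijection.

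First, I would reduce to the case where $I$ and $J$ are bases of $M$. Since $|I|,|J|\leq r(M)$, matroid augmentation yields bases $\overline{I}\supseteq I$ and $\overline{J}\supseteq J$. The sets $I_1,\ldots,I_m$ remain unchanged, as subsets of $I\subseteq \overline{I}$. Once we find $\overline{J}_i\subseteq \overline{J}$ satisfying the conclusion of the lemma relative to $\overline{I}$ and $\overline{J}$, we define $J_i\coloneqq \overline{J}_i\cap J$: independence $I_i\cup(J\setminus J_i)\in\mathcal{I}$ follows from $I_i\cup(\overline{J}\setminus \overline{J}_i)\in\mathcal{I}$ by hereditariness of $\mathcal{I}$, and the multiplicity bound only gets weaker when restricting from $\overline{J}$ to $J$.

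Next, I would invoke the classical \emph{strong matroid basis exchange} result: for any two bases $B_1,B_2$ of a matroid, there exists a bijection $\pi\colon B_1\to B_2$ fixing $B_1\cap B_2$ pointwise such that $(B_1\setminus X)\cup \pi(X)$ is a basis for every $X\subseteq B_1$. Applying this with $B_1=\overline{I}$ and $B_2=\overline{J}$, I define $\overline{J}_i\coloneqq \pi(I_i)\subseteq \overline{J}$. The independence requirement $I_i\cup(\overline{J}\setminus \overline{J}_i)=I_i\cup(\overline{J}\setminus \pi(I_i))\in\mathcal{I}$ is then immediate by taking $X=I_i$ in the strong exchange property. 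For the multiplicity, every $y\in \overline{J}$ has a unique preimage $x=\pi^{-1}(y)\in \overline{I}$, and $y\in \overline{J}_i$ iff $x\in I_i$. Since $I_i\subseteq I$, any added element $x\in \overline{I}\setminus I$ lies in no $I_i$, while any $x\in I$ lies in at most $q$ of them; thus each $y\in \overline{J}$ lies in at most $q$ of the $\overline{J}_i$'s.

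The main obstacle is establishing the strong basis exchange property. Its single-element analogue, Brualdi's symmetric exchange, follows directly from Hall's marriage theorem applied to the bipartite exchange graph on $B_1\setminus B_2$ and $B_2\setminus B_1$, whose edges are the single-element swaps that preserve basis membership. The subset-simultaneous strengthening requires a careful choice of matching; one standard approach is to pick the lexicographically-minimal perfect matching with respect to a fixed linear order on $B_2\setminus B_1$, leveraging the fact that any alternating cycle in the exchange graph corresponds to a symmetric-difference operation that also produces a basis, so that alternative choices never improve on the lex-minimal one. This classical strengthening can be extracted, for instance, from the matroid chapters of Schrijver's monograph.
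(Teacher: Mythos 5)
The paper does not actually prove this lemma; it is imported verbatim as Lemma~2.7 of Lee, Sviridenko, and Vondr\'ak, so there is no in-paper argument to compare against. Judged on its own, your proposal has a fatal gap: the ``strong matroid basis exchange'' property you rely on --- a single bijection $\pi\colon B_1\to B_2$ fixing $B_1\cap B_2$ such that $(B_1\setminus X)\cup\pi(X)$ is a basis for \emph{every} $X\subseteq B_1$ --- is false in general matroids. A concrete counterexample lives in the graphic matroid of $K_4$ on vertices $\{1,2,3,4\}$: take the spanning trees $B_1=\{12,23,34\}$ and $B_2=\{13,24,14\}$. The single-element exchanges force $\pi(12)\in\{13,14\}$ and $\pi(34)\in\{14,24\}$, leaving exactly three candidate bijections, and each of them fails on some two-element subset (e.g., for $\pi(12)=13,\ \pi(34)=14,\ \pi(23)=24$, the set $X=\{23,34\}$ yields $\{12,24,14\}$, a triangle on $\{1,2,4\}$ that isolates vertex $3$). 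What is true classically is only Brualdi's single-element bijective exchange, and the symmetric \emph{subset} exchange theorem in which the partner set $Y$ is allowed to depend on $X$; neither can be upgraded to a single simultaneous bijection. You correctly identified this step as the main obstacle, but no choice of matching (lexicographic or otherwise) can overcome it, because the statement itself is false.

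The failure is not cosmetic: your entire construction $\overline{J}_i\coloneqq\pi(I_i)$, and with it the clean transfer of the multiplicity bound, collapses without the simultaneous bijection. Since the sets $I_i$ are arbitrary subsets of $I$ subject only to the multiplicity constraint (and $m$ may be exponential in $|I|$ when $q$ is large), any single bijection would effectively have to work for all subsets, which the counterexample rules out. The reduction to bases in your first step is fine and would be a reasonable opening move, but the core of the argument must be replaced --- for instance by an iterative application of the (existential, $X$-dependent) subset exchange property together with bookkeeping that controls how often each element of $J$ is used, which is essentially what the original proof in Lee, Sviridenko, and Vondr\'ak does.
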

The following slight modification of the above lemma is suitable for our purposes.
\begin{lem}\label{lem:kIntersection-auxiliary2b}
	Let $M = (E, \mathcal{I})$ be a matroid and $I,J \in \mathcal{I}$. Let $I_1, \ldots, I_m \subseteq I\setminus J$ such that each element of $I \setminus J$ appears in at most $q$ of them. Then, there are  $J_1, \ldots, J_m \subseteq J \setminus I$ such that each element of $J \setminus I$ appears in at most $q$ of them and, for each $i \in [m]$, it holds that $|J_i| \leq |I_i|$ and $I_i \cup (J \setminus J_i ) \in \mathcal{I}$.
\end{lem}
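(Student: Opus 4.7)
The plan is to reduce Lemma~\ref{lem:kIntersection-auxiliary2b} to Lemma~\ref{lem:kIntersection-auxiliary2a} via matroid contraction, and then trim the resulting sets independently to enforce the size bound $|J_i| \leq |I_i|$. The role of contraction is to ``fix'' the common elements $I \cap J$: they lie neither in any $I_i$ (since $I_i \subseteq I \setminus J$) nor in the desired $J_i$ (since $J_i \subseteq J \setminus I$), so it is natural to remove them from consideration from the outset.

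Concretely, I would work in the contracted matroid $M' := M / (I \cap J)$, in which $I \setminus J$ and $J \setminus I$ are both independent (because $I$ and $J$ are independent in $M$). Apply Lemma~\ref{lem:kIntersection-auxiliary2a} in $M'$ with these two sets and with the collection $I_1, \ldots, I_m \subseteq I \setminus J$ (whose coverage hypothesis is exactly what we are given). This yields sets $J_1, \ldots, J_m \subseteq J \setminus I$ satisfying the desired coverage bound and such that $I_i \cup \bigl((J \setminus I) \setminus J_i\bigr)$ is independent in $M'$ for each $i$. Unfolding the definition of contraction and using $J_i \cap (I \cap J) = \emptyset$, this translates back to $I_i \cup (J \setminus J_i) \in \mathcal{I}$ in $M$.

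It remains to enforce the size bound by a trimming loop, applied independently to each $i$. As long as $|J_i| > |I_i|$, the identity $|I_i \cup (J \setminus J_i)| = |I_i| + |J| - |J_i|$, which uses $I_i \cap J = \emptyset$, gives $|I_i \cup (J \setminus J_i)| < |J|$. The matroid augmentation property applied to the independent sets $I_i \cup (J \setminus J_i)$ and $J$ then supplies an element $e \in J \setminus \bigl(I_i \cup (J \setminus J_i)\bigr)$ whose addition preserves independence; a short computation, again using $I_i \cap J = \emptyset$, shows this set difference equals $J_i$, so $e \in J_i$. Replacing $J_i$ by $J_i \setminus \{e\}$ preserves the inclusion $J_i \subseteq J \setminus I$, preserves independence of $I_i \cup (J \setminus J_i)$, and can only reduce the coverage of any element of $J \setminus I$. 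The loop terminates once $|J_i| \leq |I_i|$.

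The main conceptual obstacle is to ensure that the augmenting element $e$ produced by matroid exchange lies inside $J_i$ rather than elsewhere in $J$; this is precisely what the identification $J \setminus \bigl(I_i \cup (J \setminus J_i)\bigr) = J_i$ provides, and it depends critically on $I_i \cap J = \emptyset$. This is also why contraction is needed up front rather than as a post-processing step: a naive attempt to take Lemma~\ref{lem:kIntersection-auxiliary2a} in $M$ and then intersect each output with $J \setminus I$ fails, as a small rank-$2$ uniform matroid example with $I \cap J \neq \emptyset$ already shows that such an intersection can destroy the independence $I_i \cup (J \setminus J_i) \in \mathcal{I}$.
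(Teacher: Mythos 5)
Your proposal is correct and follows essentially the same route as the paper: contract $M$ by $I\cap J$, apply Lemma~\ref{lem:kIntersection-auxiliary2a} to $I\setminus J$ and $J\setminus I$ in the contracted matroid, translate back, and then shrink each $J_i$ using the augmentation axiom (the paper phrases this as taking a minimal $J_i$ and deriving a contradiction from $|J_i|>|I_i|$, whereas you run the augmentation step as an explicit trimming loop, but the underlying exchange argument and the key identity $J\setminus(I_i\cup(J\setminus J_i))=J_i$ are identical). No gaps.
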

\begin{proof}
	We consider the contraction $M / (I \cap J) $ of $M$ onto $E \setminus (I \cap J)$, where we denote the resulting matroid by $M' = (E \setminus (I \cap J), \mathcal{I}')$.
	Letting $I' \coloneqq I \setminus J$ and $J' \coloneqq J \setminus I$, it clearly holds that $I', J' \in \mathcal{I}'$.
	Moreover, $I_1, \ldots, I_m \subseteq I'$ are such that each element of $I'$ appears in at most $q$ of them.
	Thus, by Lemma~\ref{lem:kIntersection-auxiliary2a} applied to $M'$, there are $J_1, \ldots, J_m \subseteq J'$ such that each element of $J'$ appears in at most $q$ of them and, for each $i \in [m]$, it holds that $I_i \cup (J' \setminus J_i) \in \mathcal{I}'$.
	
	For every $i \in [m]$, we (greedily) remove as many elements from $J_i$ as we can without violating the second property above. 
	Clearly, this does not affect the first property.
	Assume that after this, we have $|J_i | > |I_i|$ for some $i \in [m]$.
	This means that $| I_i \cup (J' \setminus J_i) | = |I_i| + |J'| - |J_i| < |J'|$.
	Since $I_i \cup (J' \setminus J_i ) \in \mathcal{I}'$ and $J' \in \mathcal{I}'$, there is an element $e \in J' \setminus (I_i \cup (J' \setminus J_i)) = J_i$ such that $(I_i \cup (J' \setminus J_i)) \cup \{e\} \in \mathcal{I}'$.
	However, this means that $I_i \cup (J' \setminus (J_i \setminus \{e\})) = (I_i \cup (J' \setminus J_i)) \cup \{e\} \in \mathcal{I}'$.
	Since $e \in J_i$, this contradicts the minimality of $J_i$. 
	Hence, it holds for all $i \in [m]$ that $|J_i| \leq |I_i|$.
	
	To conclude, we note that $J_1, \ldots, J_m \subseteq J' = J \setminus I$ and each element of $J' = J \setminus I$ appears in at most $q$ of them.
	Moreover, it holds for each $i \in [m]$ that $|J_i | \leq |I_i|$ and $I_i \cup (J' \setminus J_i) \in \mathcal{I}'$.
	The latter implies that
	$ I_i \cup (J \setminus J_i) = (I_i \cup (J' \setminus J_i)) \cup (I \cap J) \in \mathcal{I}    $
	for every $i \in [m]$, where we used the definition of $\mathcal{I}'$ and that $I \cap J \in \mathcal{I}$.
	This finishes the proof.	
\end{proof}

Finally, we exploit the following basic property of the shifted cone $C$, which will also be useful when we talk about $k$-exchange systems in Appendix~\ref{append:kExchange}.
\begin{lem}\label{lem:CDownClosed}
Let $(E,\mathcal{I})$ be an independence system,\footnote{An independence system $(E,\mathcal{I})$ consists of a finite ground set $E$ and a non-empty family $\mathcal{I}\subseteq 2^E$ of subsets of $E$ that is closed under taking subsets, i.e., if $I\in \mathcal{I}$ and $J\subseteq I$, then $J\in \mathcal{I}$.} %
 and for $k,p\in \mathbb{Z}_{\geq 1}$, let $N_p^k$ be the $(k,p)$-swap neighborhood function for $\mathcal{I}$. Then, for any $S\in \mathcal{I}$, the set
\begin{equation*}
C\coloneqq \Big( \chi^S + \cone\big(\{\chi^A - \chi^S \mid A\in N^k_p(S)\}\big) \Big)
\end{equation*}
satisfies that if $z \in C$ and $0\leq y \leq z$, then $y\in C$.
\end{lem}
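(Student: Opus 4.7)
The plan is to reduce the problem to decreasing one coordinate at a time. Concretely, I would first establish the following single-coordinate claim: for every $z \in C$, every $e \in E$, and every $\delta \in [0, z(e)]$, the vector $z - \delta \chi^{\{e\}}$ lies in $C$. Granting this claim, the lemma follows by a straightforward induction on the number of coordinates on which $z$ and $y$ differ: pick any such coordinate $e$, reduce $z(e)$ down to $y(e) \geq 0$ using the claim (which is allowed since $y(e) \in [0, z(e)]$), and repeat. After at most $|E|$ such reductions one reaches $y$.

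To prove the single-coordinate claim I would split into two cases according to whether $e \in S$ or not. If $e \in S$, then $A_0 \coloneqq S \setminus \{e\}$ is in $\mathcal{I}$ by down-closedness, and it lies in $N_p^k(S)$ because $|A_0 \setminus S| = 0 \leq p$ and $|S \setminus A_0| = 1 \leq (k-1)p + 1$. Since $\chi^{A_0} - \chi^S = -\chi^{\{e\}}$, the vector $z - \delta \chi^{\{e\}} = z + \delta (\chi^{A_0} - \chi^S)$ is obtained by increasing the coefficient of $A_0$ by $\delta$, and hence still belongs to $C$. If $e \notin S$, write $z = \chi^S + \sum_{A \in N_p^k(S)} \lambda_A (\chi^A - \chi^S)$ with $\lambda_A \geq 0$; because $\chi^S(e) = 0$, we have $z(e) = \sum_{A \ni e} \lambda_A$. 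For any $A^* \in N_p^k(S)$ with $\lambda_{A^*} > 0$ and $e \in A^*$, the set $A^* \setminus \{e\}$ is in $\mathcal{I}$ (down-closedness) and satisfies $|(A^* \setminus \{e\}) \setminus S| = |A^* \setminus S| - 1 \leq p$ and $|S \setminus (A^* \setminus \{e\})| = |S \setminus A^*| \leq (k-1)p+1$, so it also lies in $N_p^k(S)$. Since $\chi^{A^* \setminus \{e\}} - \chi^{A^*} = -\chi^{\{e\}}$, shifting an amount $\delta' \in [0, \lambda_{A^*}]$ of mass from the generator $\chi^{A^*} - \chi^S$ to $\chi^{A^* \setminus \{e\}} - \chi^S$ yields $z - \delta' \chi^{\{e\}} \in C$. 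Iterating this swap over the (finitely many) sets $A$ with $\lambda_A > 0$ and $e \in A$, we can decrease coordinate $e$ by any amount up to $\sum_{A \ni e} \lambda_A = z(e)$.

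The argument is essentially bookkeeping, so I do not expect a conceptual obstacle; the only thing to watch is that each new set produced by the swap genuinely belongs to $N_p^k(S)$, which reduces to checking the size inequalities $|\cdot \setminus S| \leq p$ and $|S \setminus \cdot| \leq (k-1)p + 1$ together with invoking down-closedness of $\mathcal{I}$. Both checks are immediate from $p \geq 1$ and $(k-1)p + 1 \geq 1$, completing the proof.
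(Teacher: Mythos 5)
Your proof is correct and rests on exactly the same facts as the paper's: $S\setminus\{e\}\in N_p^k(S)$ for $e\in S$, and $A\setminus\{e\}\in N_p^k(S)$ for $A\in N_p^k(S)$ and $e\notin S$, both following from down-closedness of $\mathcal{I}$ together with the (unchanged or improved) size bounds $|\cdot\setminus S|\leq p$ and $|S\setminus\cdot|\leq (k-1)p+1$. The only difference is organizational: the paper packages the reduction as an extremal argument, taking $x\in C$ with $x\geq y$ minimizing $\|x-y\|_1$ (which requires observing that $C$ is closed), whereas your direct coordinate-by-coordinate mass-shifting reaches $y$ constructively and avoids that topological step entirely.
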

\begin{proof}
Let $y$ be a point such that there exists $z \in C$ with $0 \leq y \leq z$.	
Among all points $x'\in C$ that satisfy $x'\geq y$, let $x$ be one minimizing $\|x'-y\|_1$. Such a minimizer exists because $C$ is closed. We claim that $x=y$, which implies the lemma. 
First, observe that $\cone\big(\{\chi^A - \chi^S \mid A\in N_p^k(S)\}\big)$ contains all directions $-\chi^{\{e\}}$ for any $e\in S$. This holds because $\mathcal{I}$ is an independence system, which guarantees $S \setminus \{e\} \in \mathcal{I}$, and thus $S\setminus \{e\}\in N_p^k(S)$. %
Hence, for any point in $C$, we can arbitrarily decrease coordinates corresponding to elements in $S$ and again obtain a point in $C$. Thus, we must have $x(e) = y(e)$ for all $e\in S$. Moreover, for any $e\in E\setminus S$, we claim that $x(e) > y(e)$ leads to a contradiction.  Indeed, let $\lambda_A \geq 0$ for $A\in N_p^k(S)$ such that
\begin{equation*}
x = \chi^S + \sum_{A\in N_p^k(S)} \lambda_A  (\chi^A - \chi^S)\enspace.
\end{equation*}
Such coefficients exist because $x\in C$. We now consider $e\in E\setminus S$. Notice that for any $A\in N_p^k(S)$, we have $A\setminus \{e\} \in N_p^k(S)$ because $\mathcal{I}$ is an independence system and $e \in E \setminus S$. Hence, it holds that
\begin{equation*}
w \coloneqq \chi^S + \sum_{A \in N_p^k(S)} \lambda_A  (\chi^{A\setminus \{e\}} - \chi^S)
\end{equation*}
satisfies $w\in C$ as well. Moreover, $w$ is identical to $x$ except (potentially) in the coordinate corresponding to $e$, where $w(e) =0$. Therefore, if $x(e) > y(e)$, there is a convex combination $\overline{x}$ of $x\in C$ and $w\in C$ %
such that $\overline{x}$ is identical to $x$ except for the coordinate corresponding to $e$, where we have $\overline{x}(e) = y(e)$.
Since $C$ is convex, it holds that $\overline{x} \in C$, which violates the assumption that $x$ minimizes $\|x'-y\|_1$ among all $x' \in C$ with $x'\geq y$.
\end{proof}

With this, we are now ready to prove Theorem~\ref{thm:LS-geom-ksystems} for $k$-intersection systems.
\begin{proof}[Proof of Theorem~\ref{thm:LS-geom-ksystems} for $k$-intersection systems]
Due to Lemma~\ref{lem:basesSufficesForSpanProp}, we only need to consider two common bases $S,T$ of the matroids $M_1, \ldots, M_k$, and show that~\eqref{eq:alphaSpanKInt} holds for them.

Let  $P_1, \ldots, P_m \subseteq S\sdiff T$  %
and $r \in \mathbb{Z}_{\geq 1}$ be
as guaranteed by Lemma~\ref{lem:2matIntersection} for the two matroids $M_1$ and $M_2$.
Setting $Q_j \coloneqq P_j \sdiff S$ for $j\in [m]$, we obtain by point~\ref{item:goodSymDiffCoverage} of Lemma~\ref{lem:2matIntersection} that
\begin{equation}\label{eq:shiftWithQj}
\begin{aligned}
\sum_{j=1}^m \big(\chi^{Q_j} - \chi^S \big) &= \sum_{j=1}^m \big(\chi^{P_j \sdiff S} - \chi^S \big) \\
  &= \sum_{j=1}^m \big(- \chi^{P_j\cap S} + \chi^{P_j \cap T}\big)\\
  &= - \Big(1+\frac{1}{p}\Big) \cdot r \cdot \chi^{S\setminus T} + r \cdot \chi^{T\setminus S}\enspace.
\end{aligned}
\end{equation}

Due to point~\ref{item:symDiffIndep} of Lemma~\ref{lem:2matIntersection}, 
the sets $Q_j$, $j \in [m]$, are feasible in $M_1$ and $M_2$. Moreover, to turn them into feasible sets for all $k$ matroids, we use Lemma~\ref{lem:kIntersection-auxiliary2b} to drop some elements.
By point~\ref{item:goodSymDiffCoverage} of Lemma~\ref{lem:2matIntersection}, we know that each element of $T\setminus S$ appears in precisely $r$ sets among $P_1\setminus S, \ldots, P_m \setminus S$. Hence, we can apply Lemma~\ref{lem:kIntersection-auxiliary2b} to the matroids $M_i$ %
to obtain sets $J_{ij}\subseteq S\setminus T$ for $i\in \{3,\ldots,k\}$ and $j\in [m]$ such that:
\begin{enumerate}[label=(\roman*),itemsep=0.0em,topsep=0.2em]
\item\label{item:remFeasible} $(Q_{j}\setminus J_{ij}) \subseteq ((P_j \cup S) \setminus J_{ij}) \in \mathcal{I}_i \quad \forall i\in \{3,\ldots,k\}, j\in [m]$,
\item $\sum_{j=1}^m \chi^{J_{ij}} \leq r\cdot \chi^{S\setminus T} \quad \forall i\in \{3,\ldots,k\}$,
\item $|J_{ij}|\leq |P_j\setminus S| \leq p \quad \forall i\in \{3,\ldots,k\}, j\in [m]$.
\end{enumerate}

We now define for every $j \in [m]$ the set
\begin{equation*}
A_j \coloneqq Q_j \setminus \Big(\bigcup_{i=3}^k J_{ij}\Big)  \enspace.
\end{equation*}
First, notice that for each $j \in [m]$, it holds that $A_j \in N_p^k(S)$ because $A_j\in \mathcal{F} = \bigcap_{i=1}^k \mathcal{I}_i$ due to~\ref{item:remFeasible} above, $|A_j\setminus S| = |Q_j\setminus S| = |P_j\cap T| \leq p$, and $|S\setminus A_j| \leq |P_j\cap S| + \sum_{i=3}^k|J_{ij}| \leq p+1 + (k-2) p$. Moreover, for $\alpha= k-1+1/p$, we get
\begin{align*}
\chi^S + \frac{1}{r \alpha} \cdot \sum_{j=1}^m \big(\chi^{A_j} - \chi^S \big)
  &\geq
\chi^S + \frac{1}{r \alpha} \cdot
  \Bigg( \sum_{j=1}^m \big(\chi^{Q_j} - \chi^S \big)
 -\sum_{j=1}^m \sum_{i=3}^k \chi^{J_{ij}}\Bigg)\\
  &\geq \chi^S + \frac{1}{r \alpha} \cdot \Big( - \Big(1+\frac{1}{p}\Big)\cdot r - (k-2)\cdot r \Big) \cdot \chi^{S\setminus T} + \frac{1}{\alpha} \cdot \chi^{T\setminus S}\\
  &= \chi^S - \chi^{S\setminus T} + \frac{1}{\alpha} \cdot \chi^{T\setminus S}\\
  &= \frac{1}{\alpha} \cdot \big(\chi^{T} + (\alpha-1)\cdot \chi^{S\cap T} \big)\enspace,
\end{align*}
where the second inequality uses~\eqref{eq:shiftWithQj} and the fact that $\sum_{j=1}^m \chi^{J_{ij}}  \leq r\cdot \chi^{S\setminus T}$ for every $i\in \{3,\ldots,k\}$.
Hence, the above shows that there is a point $z\in C = \big( \chi^S + \cone\big(\{\chi^A - \chi^S \mid A\in N_p^k(S)\}\big)\big)$ such that $0\leq \frac{1}{\alpha} \cdot (\chi^T + (\alpha-1) \cdot \chi^{S\cap T}) \leq z$, where $z$ can actually be chosen as the left-hand side in the above inequality. By Lemma~\ref{lem:CDownClosed}, this implies that $\frac{1}{\alpha}\cdot (\chi^T + (\alpha-1)\cdot \chi^{S\cap T}) \in C$, as desired.
\end{proof}

\section{Swap neighborhood function for $k$-exchange systems}\label{append:kExchange}

In this section, we prove the statement of Theorem~\ref{thm:LS-geom-ksystems} for $k$-exchange systems. 
Together with Appendix~\ref{append:kIntersection}, in which we showed the statement for $k$-intersection systems, this completes the proof of Theorem~\ref{thm:LS-geom-ksystems}.
Our proof for $k$-exchange systems closely follows an approach presented in~\cite{FeldmanNaorSchwartzWard2011}.

We start with a formal definition of $k$-exchange systems.
\begin{defn}[Definition~1 in~\cite{FeldmanNaorSchwartzWard2011}]\label{defn:kExchange}
	An independence system %
	 $(E, \mathcal{F})$ is a \emph{$k$-exchange system} if, for all $S$ and $T$ in $\mathcal{F}$, there exists a multiset $Y = \{ Y_e \subseteq S \setminus T \mid e \in T \setminus S  \}$ such that:
	\begin{enumerate}[label=\normalfont(\roman*),itemsep=-0.2em,topsep=0em]
		\item $|Y_e| \leq k$ for each $e \in T \setminus S$,
		\item every $e' \in S \setminus T$ appears in at most $k$ sets of $Y$,
		\item for all $T' \subseteq T \setminus S$, $\big(  S \setminus ( \bigcup_{e \in T'} Y_e  )  \big)\cup T' \in \mathcal{F}$.
	\end{enumerate}
\end{defn}

In~\cite{FeldmanNaorSchwartzWard2011}, useful properties regarding the neighborhood function $N_p^k$ for $k$-exchange systems are provided. The lemma we state below comprises several separate statements in~\cite{FeldmanNaorSchwartzWard2011}, which we combined for convenience. More precisely, points~\ref{item:kExCoverageT} and~\ref{item:kExCoverageS} correspond to Lemma~1 in~\cite{FeldmanNaorSchwartzWard2011}. Moreover, points~\ref{item:kExFeasible} and~\ref{item:kExSizes} are immediate from the construction of $\mathcal{P}$ (which is called $\mathcal{P}'$ in~\cite{FeldmanNaorSchwartzWard2011}); these properties are also explicitly stated at the beginning of the proof of Theorem~6 in~\cite{FeldmanNaorSchwartzWard2011}.
\begin{lem}[see Section~4.1 in~\cite{FeldmanNaorSchwartzWard2011}]\label{lem:goodExchangesKEx}
Let $k\in \mathbb{Z}_{\geq 2}$ and $p\in \mathbb{Z}_{\geq 1}$.
Moreover,  let $(E,\mathcal{F})$ be a $k$-exchange system, and let $S,T\in \mathcal{F}$.  
Then, there is a multiset $\mathcal{P}$ containing subsets of $S\sdiff T$ and an integer $n\in \mathbb{Z}_{\geq 1}$ (depending on $k$ and $p$) such that:
\begin{enumerate}[label=\normalfont(\roman*),itemsep=-0.2em,topsep=0.2em]
\item\label{item:kExFeasible} $P \sdiff S \in \mathcal{F}$ for all $P\in \mathcal{P}$,
\item\label{item:kExCoverageT} each element $e\in T\setminus S$ appears in precisely $2pn$ sets of $\mathcal{P}$,
\item\label{item:kExCoverageS} each element $e'\in S\setminus T$ appears in at most $2((k-1)p + 1)n$ sets of $\mathcal{P}$,
\item\label{item:kExSizes} $|P\setminus S|\leq p$ and $|P \cap S|\leq (k-1)p+1$ for all $P\in \mathcal{P}$.
\end{enumerate}
\end{lem}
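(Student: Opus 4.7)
The plan is to construct the multiset $\mathcal{P}$ explicitly from the exchange multiset $Y = \{Y_e \subseteq S \setminus T \mid e \in T\setminus S\}$ guaranteed by Definition~\ref{defn:kExchange}, roughly following the scheme in Section~4.1 of~\cite{FeldmanNaorSchwartzWard2011}. Thus, I would first apply Definition~\ref{defn:kExchange} to $S$ and $T$ to obtain $Y$, which has three key features: each $|Y_e|\leq k$, each $e' \in S\setminus T$ lies in at most $k$ of the $Y_e$, and $(S\setminus \bigcup_{e\in T'}Y_e)\cup T' \in \mathcal{F}$ for every $T'\subseteq T\setminus S$. These exchange axioms are the only place where the $k$-exchange structure enters; everything else is combinatorial bookkeeping.

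The natural building blocks of $\mathcal{P}$ are sets of the form $P_{T'} := T' \cup \bigcup_{e\in T'} Y_e \subseteq S \sdiff T$, indexed by subsets $T' \subseteq T\setminus S$ of size $p$. The feasibility axiom of $Y$ immediately yields $P_{T'} \sdiff S = (S \setminus \bigcup_{e\in T'} Y_e) \cup T' \in \mathcal{F}$, establishing~\ref{item:kExFeasible}, and $|P_{T'} \setminus S| = p$ is clear. For the coverage count on $T\setminus S$, I would use that by symmetry each element of $T\setminus S$ appears in the same number of sets $P_{T'}$ (namely in $\binom{|T\setminus S|-1}{p-1}$ of them), so after replicating every $P_{T'}$ a suitable constant number of times one obtains the uniform count $2pn$ required by property~\ref{item:kExCoverageT}. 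For property~\ref{item:kExCoverageS}, an element $e' \in S\setminus T$ lies in $P_{T'}$ exactly when $T' \cap N(e') \neq \emptyset$ for $N(e') := \{e \in T\setminus S \mid e' \in Y_e\}$; since $|N(e')|\leq k$ by axiom (ii) of Definition~\ref{defn:kExchange}, directly counting the $p$-subsets $T'$ meeting a fixed $k$-element set bounds the coverage of $e'$ by the required $2((k-1)p+1)n$.

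The main obstacle is property~\ref{item:kExSizes}: while $|P_{T'}\setminus S|\leq p$ is trivial, the naive bound $|P_{T'}\cap S|\leq \sum_{e\in T'}|Y_e|\leq kp$ is strictly larger than the required $(k-1)p+1$. Overcoming this is the technical heart of the argument and requires the more refined construction of $\mathcal{P}$ from~\cite{FeldmanNaorSchwartzWard2011}, which does not include every $P_{T'}$ as above but instead partitions or trims the removal set $\bigcup_{e\in T'}Y_e$ so that each individual $P \in \mathcal{P}$ satisfies the tighter bound $(k-1)p+1$, while spreading the ``excess'' removals across several members of $\mathcal{P}$ so that the aggregate coverage identities are preserved. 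The factor $2$ appearing in both coverage bounds provides precisely the slack needed to absorb this rebalancing, and the integer $n$ is then taken as a common multiple that makes all relevant counts integral. Once this construction is fixed, properties~\ref{item:kExFeasible} and~\ref{item:kExSizes} are immediate by inspection of each $P$, whereas properties~\ref{item:kExCoverageT} and~\ref{item:kExCoverageS}---which correspond to Lemma~1 of~\cite{FeldmanNaorSchwartzWard2011}---follow from the balanced counting built into the construction.
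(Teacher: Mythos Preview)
The paper does not give its own proof of this lemma; it is imported wholesale from~\cite{FeldmanNaorSchwartzWard2011}, with the surrounding paragraph only indicating which parts of that reference correspond to which items. So there is no in-paper argument to compare against, and your decision to defer the actual construction to~\cite{FeldmanNaorSchwartzWard2011} is, in spirit, exactly what the paper does.

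That said, your sketch contains a real error, not just an acknowledged gap. You claim that with the naive family $\{P_{T'} : T'\subseteq T\setminus S,\ |T'|=p\}$, property~\ref{item:kExCoverageS} follows ``directly'' from $|N(e')|\leq k$. It does not. Each $e\in T\setminus S$ lies in $\binom{m-1}{p-1}$ of the sets $P_{T'}$ (with $m\coloneqq |T\setminus S|$), while each $e'\in S\setminus T$ lies in at most $\binom{m}{p}-\binom{m-k}{p}$ of them. For large $m$ this latter quantity is asymptotically $k\cdot\binom{m-1}{p-1}$, so the coverage ratio between $S\setminus T$ and $T\setminus S$ that the naive family achieves is $\approx k$, not the required $k-1+1/p=\tfrac{(k-1)p+1}{p}$. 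For $p\geq 2$ the naive construction therefore fails~\ref{item:kExCoverageS} just as badly as it fails~\ref{item:kExSizes}; the two failures are in fact the same phenomenon (each $P_{T'}$ carries up to $kp$ elements of $S\setminus T$ instead of $(k-1)p+1$). Hence the ``refined construction'' you defer to is not only needed for~\ref{item:kExSizes} as you say, but is equally essential for~\ref{item:kExCoverageS}. Your proposal misplaces the boundary between what is routine and what is the technical heart; in particular, the sentence asserting~\ref{item:kExCoverageS} for the naive family should be removed, and the discussion of the refined construction should acknowledge that it is what simultaneously delivers~\ref{item:kExCoverageS} and~\ref{item:kExSizes}.
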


The above lemma now enables us to prove Theorem~\ref{thm:LS-geom-ksystems} for $k$-exchange systems.
\begin{proof}[Proof of Theorem~\ref{thm:LS-geom-ksystems} for $k$-exchange systems]
Let $\mathcal{F}$ be a $k$-exchange system, where $k \in \mathbb{Z}_{\geq 2}$, and let $S,T\in \mathcal{F}$.
Moreover, for $p \in \mathbb{Z}_{\geq 1}$, let $\mathcal{P}=\{P_1,\ldots, P_m\}$ and $n \in \mathbb{Z}_{\geq 1}$  be %
as guaranteed by Lemma~\ref{lem:goodExchangesKEx}. %
For all $j \in [m]$, we now define
\begin{equation*}
A_j \coloneqq P_j \sdiff S \enspace.
\end{equation*}
By points~\ref{item:kExFeasible} and~\ref{item:kExSizes} of Lemma~\ref{lem:goodExchangesKEx}, we have $A_j \in N_p^k(S)$ for every $j\in [m]$. Moreover, we get from points~\ref{item:kExCoverageT} and~\ref{item:kExCoverageS} of Lemma~\ref{lem:goodExchangesKEx} that
\begin{equation*}
\sum_{j=1}^m \big(\chi^{A_j} -\chi^{S}\big) \geq 2pn \cdot \chi^{T\setminus S} - 2\alpha p n\cdot\chi^{S\setminus T}\enspace,
\end{equation*}
where $\alpha\coloneqq k-1 + 1/p$.  
The above inequality implies
\begin{align*}
\chi^S + \frac{1}{2\alpha p n} \cdot \sum_{j=1}^m \big(\chi^{A_j} - \chi^S\big)
   &\geq \chi^S + \frac{1}{\alpha}\cdot \chi^{T\setminus S} - \chi^{S\setminus T} \\
   &= \frac{1}{\alpha}\cdot\big(\chi^T + (\alpha-1)\cdot \chi^{S\cap T}\big)\enspace.
\end{align*}
Analogous to the discussion in Appendix~\ref{append:kIntersection} for $k$-intersection systems, this shows that the point
\begin{equation*}
z\coloneqq \chi^S + \frac{1}{2\alpha p n} \cdot \sum_{j=1}^m \big(\chi^{A_j} - \chi^S\big)
\end{equation*}
satisfies $0\leq \frac{1}{\alpha} \cdot (\chi^T + (\alpha-1) \cdot \chi^{S\cap T}) \leq z$.
 Since $z\in C \coloneqq \big( \chi^S + \cone\big(\{\chi^A - \chi^S \mid A\in N_p^k(S)\}\big)\big)$, Lemma~\ref{lem:CDownClosed} yields that $\frac{1}{\alpha} \cdot (\chi^T + (\alpha-1) \cdot \chi^{S\cap T})\in C$, as desired.
\end{proof}

\end{document}